\newcounter{prob}
        {\end{list}}
\newenvironment{proof}[1][{}]{
  \begin{trivlist}\item[]\textit{Proof #1}\quad}%
  {\hfill\hspace*{\fill}~$\square$\end{trivlist}}
\newcommand{\noproof}{\hspace*{\fill}~$\square$}
\newtheorem{thm}{Theorem}[section]
\newtheorem{lem}[thm]{Lemma}
\newtheorem{de}[thm]{Definition}
\newtheorem{remk}[thm]{Remark} 
\definecolor{turquoise}{cmyk}{0.65,0,0.1,0.1}
\definecolor{deeppurple}{rgb}{0.3,0,0.3}
\newcommand{\rawdef}[1]{\emph{#1}} 
\newcommand{\defn}[1]{\rawdef{#1}\index{#1}}
\newcommand{\Algref}[1]{Algorithm~\ref{#1}}
\newcommand{\Defref}[1]{Definition~\ref{#1}}
\newcommand{\Eqnref}[1]{Equation~\eqref{#1}}
\newcommand{\Figref}[1]{Figure~\ref{#1}}
\newcommand{\Lemref}[1]{Lemma~\ref{#1}}
\newcommand{\Secref}[1]{Section~\ref{#1}}
\newcommand{\Thmref}[1]{Theorem~\ref{#1}}
\newcommand{\Remref}[1]{Remark~\ref{#1}}
\DeclareMathOperator{\convh}{conv}
\newcommand{\convhull}[1]{\convh(#1)}
\newcommand{\pwrset}[1]{2^{#1}} 
\newcommand{\bigo}[1]{O(#1)} 
\newcommand{\reel}{\mathbb{R}}
\newcommand{\rthree}{\reel^3}
\newcommand{\rem}{\reel^m}
\newcommand{\ints}{\mathbb{Z}}
\newcommand{\norm}[1]{\lVert#1\rVert}
\newcommand{\abs}[1]{\lvert#1\rvert}
\newcommand{\card}[1]{\#({#1})} 
\newcommand{\bdry}[1]{\partial{#1}}
\newcommand{\asimplex}[1]{\{#1\}} 
\newcommand{\simplex}[1]{[#1]} 
\newcommand{\carrier}[1]{\abs{#1}} 
\newcommand{\seg}[2]{\simplex{#1,#2}} 
\newcommand{\cpltcplx}[1]{\mathcal{K}(#1)}
\newcommand{\ambdim}{N}
\newcommand{\amb}{\reel^{\ambdim}} 
\newcommand{\gdist}{d} 
\newcommand{\dist}[2]{\gdist(#1,#2)}
\newcommand{\distEm}[2]{\distG{\rem}{#1}{#2}}
\newcommand{\close}[1]{\overline{#1}} 
\newcommand{\ball}[2]{B(#1,#2)} 
\newcommand{\cball}[2]{\close{B}(#1,#2)} 
\newcommand{\ballEm}[2]{\spaceball{\rem}{#1}{#2}} 
\newcommand{\cballEm}[2]{\cspaceball{\rem}{#1}{#2}} 
\DeclareMathOperator{\aff}{aff} 
\newcommand{\affhull}[1]{\aff(#1)}
\newcommand{\angleop}[2]{\angle(#1,#2)}
\newcommand{\pts}{\mathsf{P}}
\newcommand{\tpts}{\tilde{\pts}}
\DeclareMathOperator{\vol}{vol}
\DeclareMathOperator{\Del}{Del}
\newcommand{\delof}[1]{\Del(#1)}
\newcommand{\delP}{\delof{\pts}}
\newcommand{\pertconst}{\rho}
\newcommand{\samconst}{\epsilon}
\newcommand{\sparseconst}{\mu_0} 
\newcommand{\sparsity}{\lambda} 
\newcommand{\pert}{\zeta} 
\newcommand{\incl}{\iota} 
\DeclareMathOperator{\interior}{int}
\newcommand{\intr}[1]{\interior(#1)}
\newcommand{\splxs}{\sigma}
\newcommand{\splxt}{\tau}
\newcommand{\splxjoin}[2]{{#1}*{#2}}
\newcommand{\opface}[2]{#2_{#1}} 
\newcommand{\splxsp}{\opface{p}{\splxs}}
\newcommand{\splxsq}{\opface{q}{\splxs}}
\newcommand{\splxtp}{\opface{p}{\splxt}}
\newcommand{\splxtq}{\opface{q}{\splxt}}
\newcommand{\thickbnd}{\Upsilon_0}
\newcommand{\flakebnd}{\Gamma_0}
\newcommand{\thickness}[1]{\Upsilon(#1)}
\newcommand{\splxalt}[2]{D(#1,#2)} 
\newcommand{\longedge}[1]{\Delta(#1)}
\newcommand{\shortedge}[1]{L(#1)}
\newcommand{\circrad}[1]{R(#1)}
\newcommand{\circcentre}[1]{C(#1)}
\newcommand{\X}{X} 
\renewcommand{\distEm}[2]{\dist{#1}{#2}}
\renewcommand{\ballEm}[2]{\ball{#1}{#2}} 
\renewcommand{\cballEm}[2]{\cball{#1}{#2}} 
\newcommand{\torus}{{\mathbb{T}}^{m}}
\newcommand{\ballvol}[1]{V_{#1}}
\newcommand{\ballvolm}{\ballvol{m}}
\newcommand{\pertbnd}{\rho_0} 
\newcommand{\psparseconst}{\sparseconst'}
\newcommand{\psamconst}{\samconst'}
\newcommand{\ppts}{\pts'}
\newcommand{\smhullpts}{D_{\samconst}(\pts)}
\newcommand{\smhullppts}{D_{\psamconst}(\ppts)}
\newcommand{\rdelsmhull}[1]{\Del_{|}(#1)}
\newcommand{\rdelpts}{\rdelsmhull{\pts}}
\newcommand{\rdelppts}{\rdelsmhull{\ppts}}
\newcommand{\mueps}{(\sparseconst, \samconst)}
\newcommand{\pmueps}{(\psparseconst, \psamconst)}
\newcommand{\ueset}{$\mueps$-net}
\newcommand{\pueset}{$\pmueps$-net}
\newcommand{\dgconfig}{forbidden configuration} 
\newcommand{\dgconfigs}{forbidden configurations} 
\newcommand{\hoopbnd}{\alpha_0}
\newcommand{\hoop}{$\hoopbnd$-hoop}
\newcommand{\circsphere}[1]{S(#1)}
\newcommand{\diasphere}[1]{S^{m-1}(#1)}
\title{Delaunay stability via perturbations
%
}
\author{
Jean-Daniel Boissonnat
\footnote{
This research has been partially supported by the 7th Framework
Programme for Research of the European Commission, under FET-Open
grant number 255827 (CGL Computational Geometry Learning). Partial 
support has also been provided by the Advanced Grant of the European 
Research Council GUDHI (Geometric Understanding in Higher Dimensions).
}
\\
{\normalsize INRIA, Geometrica}\\
{\normalsize Sophia-Antipolis, France}\\
\url{Jean-Daniel.Boissonnat@inria.fr}
\and
Ramsay Dyer
\footnotemark[1]
\\
{\normalsize Bernoulli Institute for Mathematics and Computer Science}\\
{\normalsize University of Groningen}\\ 
{\normalsize Groningen, The Netherlands}\\
\url{r.h.dyer@rug.nl}
\and
Arijit Ghosh
\footnote{
Supported by the Indo-German Max Planck Center for Computer Science (IMPECS).
}
\footnotemark[1]
\footnote{
Part of the work was done when the author was a visiting scientist at 
ACM Unit, Indian Statistical Institute, Kolkata, India.
}
\\
{\normalsize D1: Algorithms \& Complexity}\\
{\normalsize Max-Planck-Institut f\"ur Informatik}\\
{\normalsize Saarbr\"ucken, Germany}\\
\url{agosh@mpi-inf.mpg.de}
}
\begin{document}

\maketitle

\pagenumbering{roman}

%

\begin{abstract}
  We present an algorithm that takes as input a finite point set in
  $\rem$, and performs a perturbation that guarantees that the
  Delaunay triangulation of the resulting perturbed point set has
  quantifiable stability with respect to the metric and the point
  positions. There is also a guarantee on the quality of the
  simplices: they cannot be too flat.  The algorithm provides an
  alternative tool to the weighting or refinement methods to remove
  poorly shaped simplices in Delaunay triangulations of arbitrary
  dimension, but in addition it provides a guarantee of stability for
  the resulting triangulation.
\end{abstract}

\clearpage

\tableofcontents
\clearpage

\pagenumbering{arabic}

%

\section{Introduction}
\label{sec:intro}

The main contribution of this paper is to provide a proof that, for a
quantifiable $\delta$, a $\delta$-generic point set may be obtained as
a perturbation of an existing point set. In Euclidean space $\rem$, a
discrete point set $\pts$ is said to be \defn{$\delta$-generic} if
every Delaunay $m$-simplex has no other sample points within a
distance of $\delta$ from its circumsphere.

The Delaunay triangulation of such a point set is stable with respect
to small perturbations of either the points or of the metric
\cite{boissonnat2013stab1}. This makes $\delta$-generic sets important
in various contexts. The original motivation for this work is the
desire to establish a general framework for Delaunay triangulations on
Riemannian manifolds. 

The stability issue with geometric structures also arises in the
context of robust computation, where a high precision may be demanded
to resolve near degenerate configurations. Halperin and
Shelton~\cite{halperin1998} developed a general technique of
controlled perturbation in this setting. Funke et al.~\cite{funke2005}
presented a controlled perturbation algorithm for computing planar
Delaunay triangulations, which may be extended to higher dimensions.
Their algorithm can also be seen as seeking to produce a
$\delta$-generic point set, and in this respect, although the
motivation and context are different, our algorithm also shares some
properties with theirs.  However, in their approach all the points are
perturbed simultaneously with a probability of success that decreases
with the total size of the input point set. This makes the approach
unworkable for our desired application of triangulating general
manifolds.

By contrast, in the algorithm we present here each point is perturbed
in turn and is never subsequently visited after a successful
perturbation is found for that point. The probability of success is
independent of the total number of points or even the local
sampling density. We discuss the difference between our algorithm and
the approach of Funke et al.~\cite{funke2005} in more detail when we
conclude in \Secref{sec:conclusions}.

A well known issue with higher dimensional Delaunay triangulations is
the presence of poorly shaped (flat) ``sliver'' simplices. This
creates poorly conditioned systems in numerical applications, and
technical problems in geometric applications such as meshing
submanifolds. In fact, the issue is related to the above mentioned
problems with computing the Delaunay triangulation itself; the
existence of slivers is an indication that the point set is close to a
degenerate configuration \cite{boissonnat2013stab1}.

Existing work on removing slivers from high dimensional Euclidean
Delaunay triangulations has been based on two main techniques. The
first approach involves weighting the points to obtain a weighted
Delaunay triangulation with no slivers \cite{cheng2000}. This
technique was employed in the first work on reconstructing a
submanifold of arbitrary dimension in Euclidean space
\cite{cheng2005}, as well as in more recent work which avoids the
exponential cost of constructing a Delaunay triangulation of the
ambient space \cite{boissonnat2014tancplx.dcg}.  The other approach is to
refine the point set \cite{li2003}. This technique was used for
constructing anisotropic triangulations based on locally defined
Riemannian metrics \cite{boissonnat2011aniso.tr}, and also for meshing
submanifolds in Euclidean space \cite{boissonnat2010meshing}.  

The algorithm presented here provides a third approach, and it
guarantees a Delaunay triangulation that is stable in addition to
being sliver free. The perturbation approach enjoys the best aspects
of the other two methods. If the sample set is sufficiently dense,
there is no need to add more sample points. We also have the benefit
of using the standard metric, rather than squared distances where the
triangle inequality no longer applies. This latter aspect of the
weighting paradigm becomes awkward when considering perturbations of
the metric.

In spirit our algorithm is an extension of the algorithm presented by
Edelsbrunner et al. \cite{edelsbrunner2000smoothing} for creating a
sliver free Delaunay triangulation in $\rthree$. We extend this work
in two ways: We extend it into higher dimensions, and we also extend
it to provide $\delta$-genericity. It is this latter aspect that
embodies our primary technical contribution. In our context the
concept of sliver, and the existing extensions to higher dimensions,
were inadequate; we need to eliminate simplices that do not belong to
a Delaunay triangulation, and have no upper bound on their
circumradius.  The heart of the reason for this need to consider non-Delaunay simplices is that a violation of $\delta$-genericity is witnessed by a set $\tau$ of $m+2$ points, where $p \in \tau$ is within a distance $\delta$ of the circumsphere of the Delaunay simplex $\sigma = \tau \setminus \{p\}$. This simplex $\tau$ is not a Delaunay simplex in general, but either it, or one of its faces, represents a problem that we need to eliminate.

Our algorithm perturbs each point at most once. The correctness demonstration for this approach relies heavily on the Hoop Lemma~\ref{lem:hoop}, which says that the simplices that need to be eliminated have the property that every vertex lies close to the circumsphere of its opposing facet.

The algorithm itself is characterised by its simplicity. It is much
simpler than the refinement or weighting schemes.  In essence, at each
iteration we perturb a point $p \mapsto p'$ in such a way as to ensure
that $p'$ does not lie too close to the circumsphere of any nearby
$m$-simplex in the current point set $\ppts \setminus \{p'\}$. It is
not immediately obvious that this should result in a $\delta$-generic
point set: if $p'$ is not ``too close'' to the circumsphere of an $m$-simplex $\splxs$ in the current point set we need to be ensured that the distance from $p'$ to the circumsphere of $\splxs$ remains greater than $\delta$ even after the vertices of $\splxs$ itself have been perturbed. The analysis reveals that we can get this ensurance, even though the algorithm never explicitly considers the circumspheres of simplices containing the point that is being perturbed. 


\section{Background}
\label{sec-background-definition}

We work in $m$-dimensional Euclidean space $\rem$, where distances are
determined by the standard norm, $\norm{\cdot}$.  The distance between
a point $p$ and a set $\X \subset \rem$, is the infimum of the
distances between $p$ and the points of $\X$, and is denoted
$\distEm{p}{\X}$.  We refer to the distance between two points $a$ and
$b$ as $\norm{b-a}$ or $\distEm{a}{b}$ as convenient. A ball
$\ballEm{c}{r} = \{ x \, | \, \distEm{x}{c}< r \}$ is open, and
$\cballEm{c}{r}$ is its topological closure.  Generally, we denote the
topological closure of a set $\X$ by $\close{\X}$, the interior by
$\intr{\X}$, and the boundary by $\bdry{\X}$. The convex hull is
denoted $\convhull{\X}$, and the affine hull is $\affhull{\X}$.
The cardinality of a finite set $\pts$ is $\card{\pts}$.

\subsection{Sampling parameters}

The structures of interest will be built from a finite set $\pts
\subset \rem$, which we consider to be a set of \defn{sample
  points}. If $D \subset \rem$, then
$\pts$ is \defn{$\samconst$-dense} for $D$ if $\distEm{x}{\pts} <
\samconst$ for all $x \in D$.
We say that $\samconst$ is a \defn{sampling radius} for $D$ satisfied
by $\pts$.  If no domain $D$ is specified, we say
$\pts$ is $\samconst$-dense if $\distEm{x}{\pts \cup
  \bdry{\convhull{\pts}}} < \samconst$ for all $x \in
\convhull{\pts}$.  Equivalently, $\pts$ is $\samconst$-dense if it
satisfies a sampling radius $\samconst$ for
\begin{equation}
  \label{eq:contracted.hull}
  D_\samconst(\pts) = \{ x \in \convhull{\pts} \, | \,
  \distEm{x}{\bdry{\convhull{\pts}}} \geq \samconst \}.
\end{equation}
A convenience of this definition is expressed in
\Lemref{lem:perturb.Delone} below.

The set $\pts$ is \defn{$\sparsity$-separated} if $\distEm{p}{q} \geq
\sparsity$ for all $p,q \in \pts$. We usually assume that $\sparsity =
\sparseconst \samconst$ for some positive $\sparseconst \leq 1$. Such
a set is said to be a \defn{\ueset}, and if $\sparseconst = 1$, then
$\pts$ is an \defn{$\samconst$-net}. If $\pts$ is a \ueset\ for $D$,
then the open balls of radius $\samconst$ centred at the points of
$\pts$ cover $D$, and the likewise centred open balls of radius
$\frac{\sparseconst \samconst}{2}$ are pairwise disjoint. The sampling
radius is sometimes called a \defn{covering radius}, and
$\frac{\sparseconst \samconst}{2}$ is a \defn{packing radius} for
$\pts$.  This consistent use of open balls to describe packing and
covering radii yields the strict and non strict inequalities in our
definitions of density and separation. The density and separation parameters are used extensively in the computational geometry literature on sampling and mesh generation, while the equivalent terminology of covering radius and packing radius is favoured in the crystalography and sphere packing literature. There is no standard notation for point sets described by these parameters. In our notation $\sparseconst$ is a dimensionless quantity that gives some measure of the quality of $\pts$, while $\samconst$ is a distance and is just an indication of scale.

We work with \ueset s, but this should not be viewed as a significant constraint on the point sets considered. Indeed \emph{any} finite set of distinct points is a \ueset\ for a large enough $\samconst$ and a small enough $\sparseconst$. Thus $\samconst$ and $\sparseconst$ are simply parameters that describe the point set. However, the parameter $\sparseconst$ has a direct bearing on the output guarantees of the algorithm. Our main result, \Thmref{thm-main-theorem-of-the-paper}, reveals that the expected running time of the algorithm, as well as the stability properties of the Delaunay triangulation of the output points, both depend on $\sparseconst$.  Also, our results only begin to become interesting when $D_{\samconst}(\pts)$ defined in \Eqnref{eq:contracted.hull} is non-empty; as explained in \Secref{sec:Delaunay.complexes}, the stability claims (\Thmref{thm:thick.eucl.stability}) about Delaunay simplices only apply to simplices that are not too close to the boundary of the convex hull.

\subsection{Perturbations}

Our algorithm will return a perturbation of a given \ueset. Here we
define perturbations in our context, and observe that a perturbed
\ueset\ is itself a \pueset.
\begin{de}[Perturbation]
  \label{def:perturbation}
  A \defn{$\pertconst$-perturbation} of a \ueset\ $\pts \subset \rem$
  is a bijective application $\pert: \pts \to \ppts \subset \rem$ such
  that $\distEm{\pert(p)}{p} \leq \pertconst$ for all $p \in
  \pts$, and $\pertconst < \frac{\sparseconst \samconst}{2}$.

  For convenience, we will demand a stronger bound on $\pertconst$ and
  omit the explicit qualification: unless otherwise specified, a
  \defn{perturbation} will always refer to a
  $\pertconst$-perturbation, with $\pertconst = \pertbnd \samconst$
  for some
  \begin{equation}
    \label{eq:pertbnd}
    \pertbnd \leq \frac{\sparseconst}{4}.
  \end{equation}
  We also refer to $\ppts$ itself as a perturbation of $\pts$.
  We generally use $p'$ to denote the point $\pert(p) \in \ppts$, and
  similarly, for any point $q' \in \ppts$ we understand $q$ to be its
  preimage in $\pts$.
\end{de}

Given a perturbation constrained by \Eqnref{eq:pertbnd}, we do not expect a close relationship between the associated Delaunay complexes (defined in \Secref{sec:Delaunay.complexes}), but we can at least relate the sampling parameters of the two point sets:
\begin{lem}
  \label{lem:perturb.Delone}
  If $\pts \subset \rem$ is a \ueset, and $\ppts$ is a $\pertbnd\samconst$-perturbation of
  $\pts$, with $\pertbnd \leq \frac{\sparseconst}{4}$, then $\ppts$ is a \pueset, where
  \begin{itemize}
  \item $\psamconst = (1 + \pertbnd)\samconst \leq \frac{5}{4}\samconst$,
    and
  \item $\psparseconst = \frac{\sparseconst - 2\pertbnd}{1 + \pertbnd}
    \geq \frac{2}{5} \sparseconst$.
  \end{itemize}
\end{lem}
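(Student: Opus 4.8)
\noindent\textit{Proof proposal.}\quad The plan is to establish separately that $\ppts$ is $\psamconst$-dense and $\psparseconst\psamconst$-separated for the stated $\psamconst,\psparseconst$, and to observe $0<\psparseconst\le 1$, so that $\ppts$ is indeed a \pueset. Separation is the easy half: for distinct $p',q'\in\ppts$ with preimages $p,q\in\pts$, the triangle inequality together with $\distEm{p'}{p},\distEm{q'}{q}\le\pertbnd\samconst$ gives $\distEm{p'}{q'}\ge\distEm{p}{q}-2\pertbnd\samconst\ge(\sparseconst-2\pertbnd)\samconst$, which is positive since $\pertbnd\le\sparseconst/4$ and equals $\psparseconst\psamconst$ for the claimed values. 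Assuming the density half, $\psparseconst=(\sparseconst-2\pertbnd)\samconst/\psamconst=\tfrac{\sparseconst-2\pertbnd}{1+\pertbnd}$ is forced, and the estimates $\psamconst\le\tfrac54\samconst$, $\psparseconst\ge\tfrac25\sparseconst$ and $\psparseconst\le 1$ follow from elementary one-variable reasoning: $1+\pertbnd$ is increasing in $\pertbnd$ and $\tfrac{\sparseconst-2\pertbnd}{1+\pertbnd}$ is decreasing, so it suffices to evaluate at $\pertbnd=\sparseconst/4$ and use $\sparseconst\le 1$.

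For density we must show $\distEm{x}{\ppts\cup\bdry{\convhull{\ppts}}}<\psamconst$ for every $x\in\convhull{\ppts}$. If $\distEm{x}{\bdry{\convhull{\ppts}}}<\psamconst$ we are done, so assume $\distEm{x}{\bdry{\convhull{\ppts}}}\ge\psamconst$. The idea is to reduce to the density of $\pts$ via the inclusion of contracted hulls $D_{\psamconst}(\ppts)\subseteq D_{\samconst}(\pts)$ (cf.\ \Eqnref{eq:contracted.hull}): granting this, $x\in D_{\samconst}(\pts)$, so by the equivalent formulation of $\samconst$-density there is $p\in\pts$ with $\distEm{x}{p}<\samconst$, and hence $\distEm{x}{p'}\le\distEm{x}{p}+\distEm{p}{p'}<\samconst+\pertbnd\samconst=\psamconst$.

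Proving the inclusion $D_{\psamconst}(\ppts)\subseteq D_{\samconst}(\pts)$ is the crux, and it is purely a statement about two convex bodies lying within Hausdorff distance $\pertbnd\samconst$ of one another. First, $\convhull{\ppts}$ is contained in the closed $\pertbnd\samconst$-neighbourhood of $\convhull{\pts}$, as that neighbourhood is convex and contains every $p'$. Now take $x\in D_{\psamconst}(\ppts)$; since $\convhull{\ppts}$ is closed, $\distEm{x}{\bdry{\convhull{\ppts}}}\ge\psamconst$ forces $\cballEm{x}{\psamconst}\subseteq\convhull{\ppts}$. Suppose, for a contradiction, that some $y$ with $\distEm{x}{y}\le\samconst$ lies outside $\convhull{\pts}$; strictly separate $y$ from the compact convex set $\convhull{\pts}$ to obtain a unit vector $n$ with $\scprod{u}{n}<\scprod{y}{n}$ for all $u\in\convhull{\pts}$, and put $z=y+\pertbnd\samconst\,n$. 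Using the inclusion in the $\pertbnd\samconst$-neighbourhood, for every $u\in\convhull{\ppts}$ there is $v\in\convhull{\pts}$ with $\distEm{u}{v}\le\pertbnd\samconst$, whence $\scprod{u}{n}\le\scprod{v}{n}+\pertbnd\samconst<\scprod{z}{n}$; so $z\notin\convhull{\ppts}$, yet $\distEm{x}{z}\le\distEm{x}{y}+\pertbnd\samconst\le\psamconst$, contradicting $\cballEm{x}{\psamconst}\subseteq\convhull{\ppts}$. Hence $\cballEm{x}{\samconst}\subseteq\convhull{\pts}$, which yields both $x\in\convhull{\pts}$ and $\distEm{x}{\bdry{\convhull{\pts}}}\ge\samconst$, i.e.\ $x\in D_{\samconst}(\pts)$. (Alternatively, this inclusion is an instance of R\aa dstr\"om's cancellation lemma for Minkowski sums with a ball of radius $\pertbnd\samconst$.)

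I expect this convex-geometry inclusion to be the one genuine obstacle; everything else is bookkeeping. Inside it, the points deserving care are the strict-versus-non-strict inequalities (in the separation step, and in passing from a distance-to-boundary bound to a closed-ball containment) and the degenerate case where $\convhull{\pts}$ or $\convhull{\ppts}$ fails to be full-dimensional---there the relevant density condition holds vacuously, since the topological boundary in $\rem$ then coincides with the whole hull.
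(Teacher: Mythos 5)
Your proposal is correct and follows essentially the same route as the paper: both halves reduce the density claim to the inclusion $D_{\psamconst}(\ppts)\subseteq D_{\samconst}(\pts)$ and establish that inclusion by combining a Hausdorff-distance bound between the two convex hulls with a separating/supporting-hyperplane contradiction. The only difference is organizational --- the paper bounds the distance from $\bdry{\convhull{\pts}}$ to $\bdry{\convhull{\ppts}}$ and then chains triangle inequalities, whereas you prove the ball containment $\cballEm{x}{\samconst}\subseteq\convhull{\pts}$ directly in the style of R\aa dstr\"om cancellation; your treatment of the strict inequalities and of degenerate hulls is sound.
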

\begin{proof}

\begin{figure}[ht]
 \begin{center}
    \includegraphics[width=0.5\linewidth]{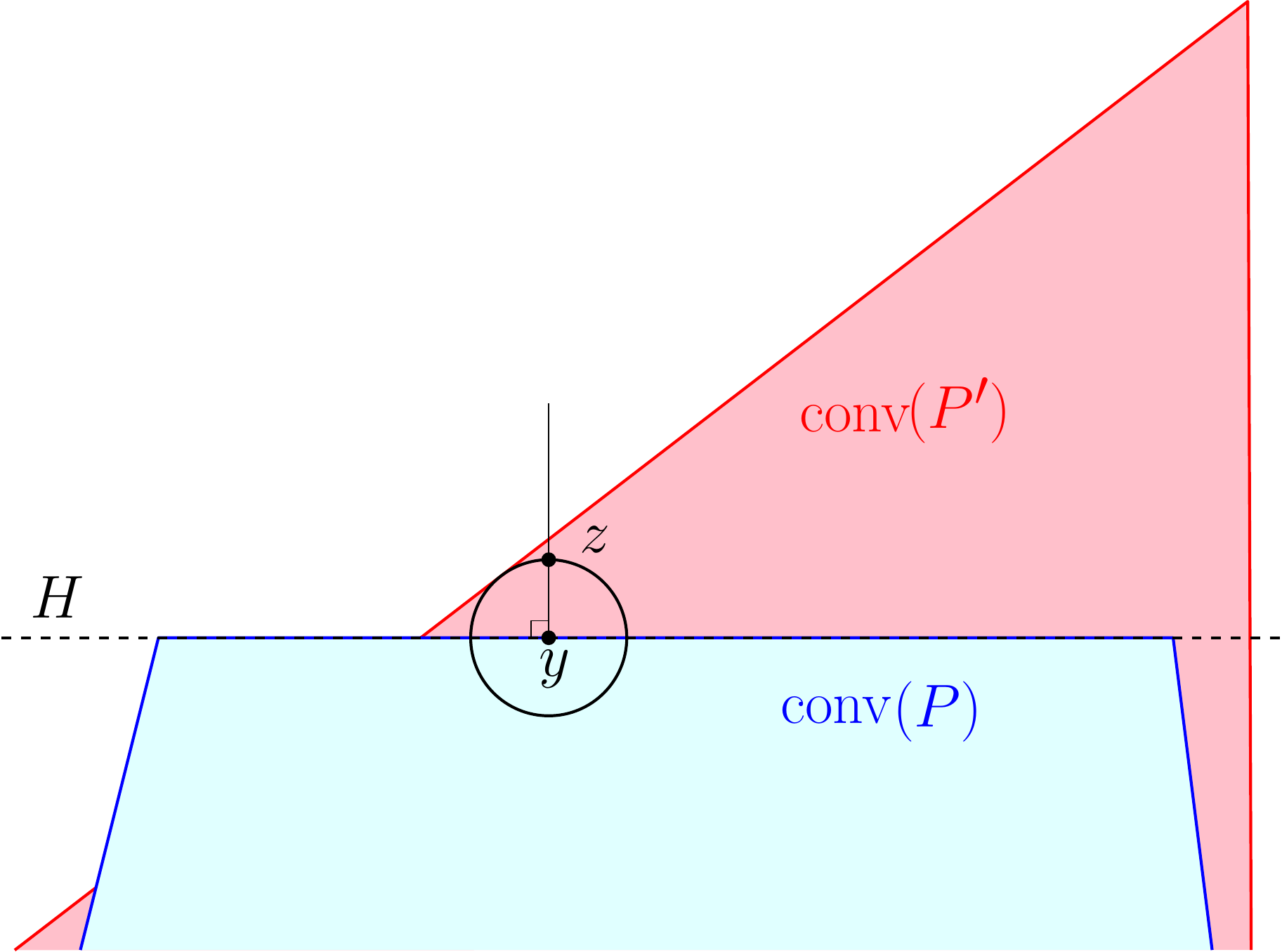} 
  \end{center}
  \caption[Perturbed convex hull]{\Lemref{lem:perturb.Delone}: $\bdry{\convhull{P}}$ and $\bdry{\convhull{P'}}$ must be close.}
  \label{fig:conv.hull.pert}
\end{figure}
The only non-trivial assertion is the density bound.
We will show that 
 \begin{equation*}
    \smhullppts \subseteq \smhullpts.
  \end{equation*}
  It follows that for any $x \in D_{\psamconst}(\ppts)$, we have
  $\distEm{x}{\ppts} \leq \distEm{x}{\pts} + \pertbnd\samconst < (1 +
  \pertbnd) \samconst = \psamconst$.

 We first observe that for any $y \in \convhull{\pts}$, we have
  \begin{equation}
    \label{eq:bnd.y.convppts}
    \distEm{y}{\convhull{\ppts}} \leq \pertbnd \samconst. 
  \end{equation}
  To see this, we use Carath\'eodory's Theorem to write $y =
  \sum_{i=0}^m \lambda_i p_i$, where $p_i \in \pts$ and the
  $\lambda_i$ are non-negative barycentric coordinates: $\sum_{i=0}^m
  \lambda_i = 1$. It follows that the point $y^* = \sum_{i=0}^m
  \lambda_i p'_i$ lies in $\convhull{\ppts}$, and $\norm{y^* - y} \leq
  \sum_{i=0}^m \lambda_i \norm{p'_i - p_i} \leq \pertbnd
  \samconst$. Similarly, we have that if $z \in \convhull{\ppts}$,
  then
  \begin{equation}
    \label{eq:bnd.z.convpts}
    \distEm{z}{\convhull{\pts}} \leq \pertbnd \samconst. 
  \end{equation}

  This implies that if $y \in \bdry{\convhull{\pts}}$, then
  $\distEm{y}{\bdry{\convhull{\ppts}}} \leq \pertbnd
  \samconst$. Indeed, assume that $y \in \convhull{\ppts}$, since
  otherwise the assertion is an immediate consequence of
  \Eqnref{eq:bnd.y.convppts}. To reach a contradiction, assume
  $\distEm{y}{\bdry{\convhull{\ppts}}} = R > \pertbnd \samconst$. Then
  $\close{B} = \cballEm{y}{R} \subseteq \convhull{\ppts}$. Let $H$ be
  a hyperplane through $y$ and supporting $\convhull{\pts}$, and let
  $z \in \bdry{\close{B}}$ lie on a line through $y$ and orthogonal to
  $H$ and in the open half-space that doesn't contain
  $\convhull{\pts}$, as shown in \Figref{fig:conv.hull.pert}. Then
  $\dist{z}{\convhull{\pts}} = R > \pertbnd \samconst$, contradicting
  \Eqnref{eq:bnd.z.convpts}.

  Suppose $x \in \smhullppts$. Let $y \in
  \bdry{\convhull{\pts}}$ be such that $\distEm{x}{y} =
  \distEm{x}{\bdry{\convhull{\pts}}}$, and let $z \in
  \bdry{\convhull{\ppts}}$ satisfy $\distEm{y}{z} =
  \distEm{y}{\bdry{\convhull{\ppts}}}$. Then
  \begin{equation*}
    \begin{split}
      \psamconst &\leq \distEm{x}{z} \leq \distEm{x}{y} +
      \distEm{y}{z}\\
      &= \distEm{x}{\bdry{\convhull{\pts}}} +
      \distEm{y}{\bdry{\convhull{\ppts}}}\\
      &\leq \distEm{x}{\bdry{\convhull{\pts}}} + \pertbnd \samconst,
    \end{split}
 \end{equation*}
 and we obtain $\distEm{x}{\bdry{\convhull{\pts}}} \geq \psamconst -
 \pertbnd\samconst = \samconst$. Hence $x \in \smhullpts$.
\end{proof}

%

\subsection{Simplices}

Although our problem setting is geometric in nature, it is convenient
to work with the framework of abstract simplices and complexes.  A
\defn{simplex} $\splxs$ is a non-empty finite set. The
\defn{dimension} of $\splxs$ is given by $\dim{\splxs} = \card{\splxs}
- 1$, and a $j$-simplex refers to a simplex of dimension $j$. The
dimension of a simplex is sometimes indicated with a superscript:
$\splxs^j$.  The elements of $\splxs$ are called the \defn{vertices}
of $\splxs$. We do not distinguish between a $0$-simplex and its
vertex. If a simplex $\splxs$ is a subset of $\splxt$, we say it is a
\defn{face} of $\splxt$, and we write $\splxs \leq \splxt$. A
$1$-dimensional face is called an \defn{edge}. If $\splxs$ is a proper
subset of $\splxt$, we say it is a \defn{proper face} and we write
$\splxs < \splxt$.  A \defn{facet} of $\splxt$ is a face $\splxs$ with
$\dim{\splxs} = \dim{\splxt} - 1$.

For any vertex $p \in \splxs$, the \defn{face opposite} $p$ is the
face determined by the other vertices of $\splxs$, and is denoted
$\splxsp$. If $\splxs$ is a $j$-simplex, and $p$ is not a vertex of
$\splxs$, we may construct a $(j+1)$-simplex $\splxt =
\splxjoin{p}{\splxs}$, called the \defn{join} of $p$ and $\splxs$. It
is the simplex defined by $p$ and the vertices of $\splxs$, i.e.,
$\splxs = \splxtp$.

We will be considering simplices whose vertices are points in $\rem$,
and this endows the simplices with geometric properties, but we do not
require the vertices to be affinely independent. If $\splxs \subset
\rem$ and $x \in \splxs$, then $x$ is a vertex of $\splxs$.

The \defn{length} of an edge is the distance between its vertices. The
\defn{diameter} of a simplex $\splxs$ is its longest edge length, and
is denoted $\longedge{\splxs}$. The shortest edge length is
denoted $\shortedge{\splxs}$.  If $\splxs$ is a $0$-simplex, we define
$\shortedge{\splxs} = \longedge{\splxs} = 0$.

The \defn{altitude} of $p$ in $\splxs$ is $\splxalt{p}{\splxs} =
\distEm{p}{\affhull{\splxsp}}$. A poorly-shaped simplex can be
characterized by the existence of a relatively small altitude. The
\defn{thickness} of a $j$-simplex $\splxs$ is the dimensionless
quantity
\begin{equation*}
  \thickness{\splxs} =
  \begin{cases}
    1& \text{if $j=0$} \\
    \min_{p \in \splxs} \frac{\splxalt{p}{\splxs}}{j
      \longedge{\splxs}}& \text{otherwise.}
  \end{cases}
\end{equation*}
We say that $\splxs$ is $\thickbnd$-thick, if $\thickness{\splxs} \geq
\thickbnd$. If $\splxs$ is $\thickbnd$-thick, then so are
all of its faces. Indeed if $\splxs^j \leq \splxs$, then the smallest
altitude in $\splxs^j$ cannot be smaller than that of $\splxs$, and also
$\longedge{\splxs^j} \leq \longedge{\splxs}$. 

A \defn{circumscribing ball} for a simplex $\splxs$ is any
$m$-dimensional ball that contains the vertices of $\splxs$ on its
boundary.  If $\thickness{\splxs} = 0$, we say that $\splxs$ is
\defn{degenerate}, and such a simplex may not admit any circumscribing
ball.  If $\splxs$ admits a circumscribing ball, then it has a
\defn{circumcentre}, $\circcentre{\splxs}$, which is the centre of the
unique smallest circumscribing ball for $\splxs$. The radius of this
ball is the \defn{circumradius} of $\splxs$, denoted
$\circrad{\splxs}$. A degenerate simplex $\splxs$ may or may not have
a circumcentre and circumradius; we write $\circrad{\splxs} < \infty$
to indicate that it does.  In this case we can also define the
\defn{diametric sphere} as the boundary of the smallest circumscribing
ball: $\diasphere{\splxs} =
\bdry{\ballEm{\circcentre{\splxs}}{\circrad{\splxs}}}$, and the
\defn{circumsphere}: $\circsphere{\splxs} = \diasphere{\splxs} \cap
\affhull{\splxs}$. Observe that if $\splxs \leq \splxt$, then
$\circsphere{\splxs} \subseteq \circsphere{\splxt}$. If $\dim \splxs =
m$, then $\circsphere{\splxs} = \diasphere{\splxs}$.

\subsection{Complexes}

An \defn{abstract simplicial complex} (we will just say
\defn{complex}) is a set $\mathcal{K}$ of simplices such that if
$\splxs \in \mathcal{K}$, then all the faces of $\splxs$ are also
members of $\mathcal{K}$. The union of the vertices of all the
simplices of $\mathcal{K}$ is the \defn{vertex set} of
$\mathcal{K}$. We say that $\mathcal{K}$ is a \defn{complex on $\pts$}
if $\pts$ includes the vertex set of $\mathcal{K}$. Our complexes are
finite and the number of simplices in a complex $\mathcal{K}$ is
denoted $\card{\mathcal{K}}$.  The \defn{complete complex} on $\pts$,
denoted $\cpltcplx{\pts}$, is set of all simplices that have vertices
in $\pts$. If we let $\pwrset{\pts}$ denote the set of subsets of
$\pts$, then $\cpltcplx{\pts} = \pwrset{\pts} \setminus \emptyset$. A
complex $\mathcal{K}$ is the complete complex on $\pts$ if and only if
$\pts$ is the vertex set of $\mathcal{K}$ and $\pts \in \mathcal{K}$.

A subset $\mathcal{L} \subseteq \mathcal{K}$ is a \defn{subcomplex} of
$\mathcal{K}$ if it is also a complex. If $\mathcal{K}$ is a complex
on $\pts$, and $\mathcal{K}'$ is a complex on $\ppts$, then a map
$\pert: \pts \to \ppts$ induces a \defn{simplicial map} $\mathcal{K}
\to \mathcal{K}'$ if for every $\splxs \in \mathcal{K}$,
$\pert(\splxs) \in \mathcal{K}'$. Thus the image of the simplicial map is a
subcomplex of $\mathcal{K}'$. We denote the simplicial map with
the same symbol, $\pert$. If $\pert$ is injective on
$\pts$, and $\pert(\mathcal{K}) = \mathcal{K}'$, then $\pert$
is an \defn{isomorphism}.

Although we prefer to work with abstract simplices and complexes, the
underlying motivation for this work is centred in the concept of a
\defn{triangulation}, which demands traditional geometric simplicial
complexes for its definition. A \defn{geometric realisation} of a
complex $\mathcal{K}$ with vertex set $\pts$, is a topological
space $\carrier{\mathcal{K}} \subset \amb$ such that there is a
bijection $g: \pts \to \tpts \subset \carrier{\mathcal{K}}$ with the
property that $\bigcup_{\splxs \in \mathcal{K}} \convhull{g(\splxs)} =
\carrier{\mathcal{K}}$, and if $\splxt, \splxt' \in \mathcal{K}$, then
$\convhull{g(\splxt)} \cap \convhull{g(\splxt')} = \X$, where either
$\X = \emptyset$, or $\X = \convhull{g(\splxs)}$
with $\splxs = (\splxt \cap \splxt') \in \mathcal{K}$.

If $\mathcal{K}$ is a complex on $\pts \subset \rem$, we say that
$\mathcal{K}$ is \defn{embedded} if the inclusion map $\incl: \pts
\hookrightarrow \rem$ yields a geometric realisation of
$\mathcal{K}$. A \defn{triangulation of a connected set $\X \subset
  \rem$} is an embedded complex $\mathcal{K}$ on $\pts \subset \X$
such that $\carrier{\mathcal{K}} = \X$. A \defn{triangulation of
$\pts \subset \rem$} is a triangulation of $\convhull{\pts}$.

\subsection{Delaunay complexes}
\label{sec:Delaunay.complexes}

Our definition of the Delaunay complex is equivalent to defining it as
the nerve of the Voronoi diagram, however we do not exploit
the Voronoi diagram in this work. 

An \defn{empty ball} is one that contains no point from $\pts$. 
\begin{de}[Delaunay complex]
  \label{def:Delaunay.complex}
  A \defn{Delaunay ball} is a maximal empty ball. Specifically, $B =
  \ballEm{x}{r}$ is a Delaunay ball if any empty ball centred at $x$
  is contained in $B$. A simplex $\splxs$ is a \defn{Delaunay
    simplex} if there exists some Delaunay ball $B$ such that the
  vertices of $\splxs$ belong to $\bdry{B}\cap \pts$.  The
  \defn{Delaunay complex} is the set of Delaunay simplices, and is
  denoted $\delP$.
\end{de}

If $\X \subset \rem$, then the \defn{Delaunay complex of $\pts$
  restricted to $\X$} is the subcomplex of $\delP$ consisting of those
simplices that have a Delaunay ball centred in $\X$. We are interested
in the case where $\X = \smhullpts$ for a finite $\samconst$-dense
sample set $\pts$. We denote the Delaunay complex of $\pts$ restricted
to $\smhullpts$ by $\rdelpts$. Our interest in this subcomplex is due
to the following observation that is an immediate consequence of the
definitions. If the radius of a Delaunay ball $\splxs$ exceeds $\samconst$, then the centre of that ball is at a distance of more than $\samconst$ from any point in $\pts$. Thus we have:
\begin{lem}
  \label{lem:rdel.small.circrad}
  If $\pts$ is $\samconst$-dense, then
  every simplex $\splxs \in \rdelpts$ has a Delaunay ball with radius
  less than $\samconst$, and in particular $\circrad{\splxs} <
  \samconst$. 
\end{lem}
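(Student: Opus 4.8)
The plan is to unwind the definitions: a simplex in $\rdelpts$ comes equipped, by construction, with a Delaunay ball whose centre lies in $\smhullpts = D_\samconst(\pts)$, and on that set the $\samconst$-density hypothesis directly bounds the distance to $\pts$, hence the radius of any empty ball centred there.

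First I would fix $\splxs \in \rdelpts$ and, by the definition of the restricted Delaunay complex, choose a Delaunay ball $B = \ballEm{x}{r}$ with $x \in \smhullpts$ and with the vertices of $\splxs$ lying in $\bdry{B}\cap\pts$. Since $\pts$ is $\samconst$-dense it satisfies the sampling radius $\samconst$ for $D_\samconst(\pts)$, so $\distEm{x}{\pts} < \samconst$; consequently there is a point $p \in \pts$ with $\distEm{x}{p} < \samconst$.

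Next I would invoke that $B$ is an empty ball, i.e. $\ballEm{x}{r}\cap\pts = \emptyset$, so every point of $\pts$ — in particular $p$ — lies at distance at least $r$ from $x$. Combining the two estimates gives $r \leq \distEm{x}{p} < \samconst$, which is the claimed bound on the radius of the Delaunay ball. For the circumradius statement, observe that $\cballEm{x}{r}$ is a circumscribing ball for $\splxs$, since the vertices of $\splxs$ lie on its boundary; therefore the smallest circumscribing ball of $\splxs$ has radius $\circrad{\splxs} \leq r < \samconst$.

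There is no substantial obstacle here: the statement is essentially a repackaging of the definitions of $\rdelpts$, of $\samconst$-density in the form adapted to $D_\samconst(\pts)$, and of emptiness of a Delaunay ball. The only points that warrant a little care are to apply the density hypothesis in the correct form — using that the centre $x$ lies in the contracted hull $D_\samconst(\pts)$, not merely in $\convhull{\pts}$, so that $\distEm{x}{\pts} < \samconst$ holds — and to recall that emptiness of a Delaunay ball concerns its open interior, so that boundary points of $\pts$ (such as the vertices of $\splxs$ themselves) are not excluded.
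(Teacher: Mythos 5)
Your proposal is correct and follows exactly the route the paper takes: the paper presents this lemma as an immediate consequence of the definitions, remarking (in contrapositive form) that a Delaunay ball of radius exceeding $\samconst$ would have its centre at distance more than $\samconst$ from $\pts$, contradicting density at the centre $x \in \smhullpts$. Your write-up merely spells out the same argument directly, including the final passage from the Delaunay ball to the circumradius bound.
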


A Delaunay simplex $\splxs$ is \defn{$\delta$-protected} if it has a
Delaunay ball $B$ such that $\distEm{q}{\bdry{B}} > \delta$ for all
$q \in \pts \setminus \splxs$.  We say that $B$ is a
$\delta$-protected Delaunay ball for $\splxs$. 
We say that $\splxs$ is \defn{protected} to mean that it is
$\delta$-protected for some unspecified $\delta > 0$.

A \ueset\ $\pts \subset \rem$ is \defn{$\delta$-generic} if all the Delaunay $m$-simplices in $\rdelpts$ are $\delta$-pro\-tec\-ted. The set $\pts$ is simply \defn{generic} if it is $\delta$-generic for some unspecified $\delta > 0$.  If $\pts$ is generic, then $\rdelpts$ is embedded \cite[Lemmas 3.5]{boissonnat2013stab1}, and with an abuse of language we call $\rdelpts$ the \defn{restricted Delaunay triangulation} of $\pts$. (We are abusing the language because in general $\rdelpts$ coincides with neither $\convhull{P}$ nor $D_{\samconst}$.)
If $\pts$ is a $\delta$-generic \ueset, then the Delaunay
triangulation exhibits stability with respect to small perturbations
of the points or of the metric~\cite{boissonnat2013stab1}. This gives us motivation to demonstrate that $\delta$-generic point sets can be produced algorithmically, which is the primary contribution of the current work. 

We will present an algorithm that, when given a \ueset, and a small
positive parameter $\flakebnd < 1$, will generate a $\delta$-generic
\pueset\ $\ppts$ such that all the $m$ simplices in $\rdelppts$ are
$\flakebnd^m$-thick.  As an example in this context, the stability
with respect to the sample positions \cite[Theorem
4.14]{boissonnat2013stab1}, can be stated as:
\begin{thm}[Delaunay stability]
  \label{thm:thick.eucl.stability}
  Suppose $\ppts \subset \rem$ is a \pueset, and all the $m$-simplices
  in $\rdelppts$ are $\flakebnd^m$-thick and $\delta$-protected, where
  $\delta = \delta_0 \psparseconst \psamconst$, with $0 \leq \delta_0
  \leq 1$. If $\pert: \ppts \to \tpts$ is a $\pertconst$-perturbation
  of $\ppts$ with
  \begin{equation*}
    \pertconst \leq \frac{\flakebnd^m \psparseconst^2
      \delta_0}{18}\psamconst, 
  \end{equation*}
  then $\pert: \rdelppts \to \mathcal{K} \subseteq \delof{\tpts}$ is a
  simplicial isomorphism onto an embedded subcomplex $\mathcal{K}$ of
  $\delof{\tpts}$. 
\end{thm}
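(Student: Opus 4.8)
The plan is to show that every $m$-simplex $\splxs \in \rdelppts$, together with a chosen $\delta$-protected Delaunay ball, survives $\pert$ essentially unchanged: the protection budget $\delta$ is large enough to absorb both the displacement of the sample points and the resulting displacement of the circumsphere, and the latter stays small precisely because the simplices are thick. Concretely, fix $\splxs \in \rdelppts$ with $\dim\splxs = m$. By \Lemref{lem:rdel.small.circrad} applied to the $\psamconst$-dense set $\ppts$, $\splxs$ has a Delaunay ball $B = \ballEm{c}{r}$ with $c = \circcentre{\splxs}$ and $r = \circrad{\splxs} < \psamconst$; by hypothesis $B$ is $\delta$-protected, so $\norm{q'-c} > r+\delta$ for every $q' \in \ppts\setminus\splxs$, where $\delta = \delta_0\psparseconst\psamconst$. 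Separation (\Lemref{lem:perturb.Delone}) gives $L := \longedge{\splxs} \ge \psparseconst\psamconst$, while $L \le 2r < 2\psamconst$. Since $\splxs$ is $\flakebnd^m$-thick its smallest altitude is at least $m\flakebnd^m L$, which dwarfs $\pertconst$ (indeed $\pertconst/(m\flakebnd^m L) \le \delta_0\psparseconst/(18m) < 1$); hence, by a routine perturbation estimate for thickness, $\tsplxs := \pert(\splxs)$ is still non-degenerate --- say $\tfrac{1}{2}\flakebnd^m$-thick --- so it has circumcentre $\tilde c$, circumradius $\tilde r$, and all its vertices on $\bdry{\ballEm{\tilde c}{\tilde r}}$.

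The core of the argument is to bound $\norm{\tilde c - c}$ and $\abs{\tilde r - r}$ by $\pertconst$, using thickness. Placing the origin at $c$, so that every vertex of $\splxs$ has norm $r$, the circumcentre of $\tsplxs$ solves a linear system $\tilde A\,\tilde c = \tilde b$ whose rows are the vectors $2(\tilde p_i - \tilde p_0)$ and whose right-hand side is $\tilde b_i = \norm{\tilde p_i}^2 - \norm{\tilde p_0}^2$; here $\norm{\tilde b} = O(\sqrt m\, r\pertconst)$ because the corresponding unperturbed system is solved by $c = 0$, while $\norm{\tilde A^{-1}}$ is controlled by the thickness of $\tsplxs$ through a pseudoinverse bound of the form $\norm{\tilde A^{-1}} = O\bigl((\flakebnd^m L)^{-1}\bigr)$. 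This yields $\norm{\tilde c - c} = O\bigl(r\pertconst/(\flakebnd^m L)\bigr)$ and then $\abs{\tilde r - r} \le \norm{\tilde c - c} + \pertconst$. Using $r < \psamconst$ and $L \ge \psparseconst\psamconst$, both quantities are $O\bigl(\pertconst/(\flakebnd^m\psparseconst)\bigr)$, and substituting $\pertconst = \flakebnd^m\psparseconst^2\delta_0\psamconst/18$ makes them $O(\psparseconst\delta_0\psamconst)$ with a constant small enough that
\[
  \eta := \delta - \pertconst - \norm{\tilde c - c} - \abs{\tilde r - r} > 0 .
\]
Making this conditioning bound precise in terms of $\thickness{\tsplxs}$, and verifying that thickness itself degrades only mildly under the perturbation, is the technical heart of the proof and the step I expect to be the main obstacle; the presence of $\psparseconst^2$ (rather than $\psparseconst$) and of $\flakebnd^m$ in the admissible $\pertconst$ is exactly what is needed to absorb the loss incurred here.

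Granting the estimate, the conclusion for $m$-simplices is immediate. For any $\tilde q \in \tpts\setminus\tsplxs$ with preimage $q' \in \ppts\setminus\splxs$,
\[
  \norm{\tilde q - \tilde c} \ge \norm{q'-c} - \norm{\tilde q - q'} - \norm{c - \tilde c} > (r+\delta) - \pertconst - \norm{c - \tilde c} \ge \tilde r + \eta .
\]
Thus $\ballEm{\tilde c}{\tilde r}$ is an empty ball carrying the vertices of $\tsplxs$ on its boundary, while every other point of $\tpts$ is at distance strictly greater than $\tilde r$ from $\tilde c$; hence the maximal empty ball centred at $\tilde c$ has radius exactly $\tilde r$, so it is a Delaunay ball, $\tsplxs \in \delof{\tpts}$, and moreover $\tsplxs$ is $\eta$-protected.

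It remains to package this into the stated isomorphism. Every simplex of $\rdelppts$ is a face of an $m$-simplex of $\rdelppts$ (the restricted Delaunay complex of a generic net is pure; cf.\ \cite{boissonnat2013stab1}), and faces of Delaunay simplices are Delaunay, so $\pert$ carries $\rdelppts$ into $\delof{\tpts}$; set $\mathcal{K} = \pert(\rdelppts)$, a subcomplex. Because $\pert$ is a bijection $\ppts \to \tpts$ (\Defref{def:perturbation}), the induced simplicial map is injective, hence a simplicial isomorphism onto $\mathcal{K}$. Finally $\mathcal{K}$ is pure of dimension $m$ and all its $m$-simplices are protected Delaunay simplices; by the argument underlying the embeddedness of restricted Delaunay triangulations of generic point sets \cite{boissonnat2013stab1}, such a subcomplex is embedded, which is the last remaining assertion.
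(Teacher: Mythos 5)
First, a point of reference: the paper does not prove this statement at all --- it is quoted from \cite[Theorem 4.14]{boissonnat2013stab1} and used as a black box --- so there is no in-paper argument to compare against. Your outline does follow the strategy of that reference (track the circumcentre and circumradius of each thick, protected $m$-simplex under the perturbation and show the protection budget $\delta$ absorbs the total displacement), and your endgame --- the chain $\norm{\tilde q-\tilde c}>(r+\delta)-\pertconst-\norm{c-\tilde c}\geq \tilde r+\eta$ showing the perturbed circumball is a maximal empty ball --- is sound once the displacement bounds are in hand.

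However, the step you yourself flag as ``the main obstacle'' is a genuine gap, and the justification you offer for it does not work. You claim $\tsplxs$ remains roughly $\flakebnd^m$-thick because $\pertconst$ is small compared with the minimal altitude $m\flakebnd^m\longedge{\splxs}$. But an altitude $\splxalt{p}{\splxs}$ is a distance to $\affhull{\splxsp}$, and moving the vertices of the facet by $\pertconst$ tilts that affine hull by an angle of order $\pertconst/(\thickness{\splxsp}\longedge{\splxsp})$; the resulting altitude loss is of order $\pertconst/\flakebnd^m$, not $\pertconst$. The ``routine'' thickness-perturbation estimate therefore needs $\pertconst\lesssim\flakebnd^{2m}\longedge{\splxs}$, whereas the hypothesis only gives $\pertconst\leq\flakebnd^m\psparseconst^2\delta_0\psamconst/18$, which is far weaker when $\delta_0$ is not tiny (it may be as large as $1$). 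Since your pseudoinverse bound $\norm{\tilde A^{-1}}=O\bigl((\flakebnd^m L)^{-1}\bigr)$ also presupposes a thickness lower bound on the \emph{perturbed} simplex, the whole quantitative core is left unestablished; in the reference the protection hypothesis has to be used to control thickness as well (cf.\ the protection-implies-thickness result of \cite{boissonnat2013stab1} recalled in \Secref{sec:main.result}), not merely to supply the emptiness budget as in your outline. A second, independent gap is the unproved purity claim: a $j$-simplex of $\rdelppts$ with $j<m$ has a Delaunay ball centred in $\smhullppts$, but the $m$-simplices containing it as a face need not have circumcentres in $\smhullppts$, so they are not covered by the thickness and protection hypotheses, and your argument does not handle those faces.
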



\section{Forbidden configurations}
\label{sec:forbidden}

Our goal is to produce a point set whose Delaunay triangulation has
nice properties.  In this section we identify specific configurations
of points whose existence in a \pueset\ $\ppts$ implies that $\ppts$
does not meet the requirements of \Thmref{thm:thick.eucl.stability}.
These configurations are a particular family of thin simplices that we
call \defn{\dgconfig s}.

For a \ueset\ the Delaunay triangles automatically enjoy a lower bound
on their thickness due to the bounds on their circumradius and
shortest edge (as verified by a calculation similar to the one in Lemma 3.13 of the Delaunay stability paper~\cite{boissonnat2013stab1}). However, higher dimensional Delaunay simplices may have
arbitrarily small thickness. The problem simplices in three
dimensional Delaunay triangulations have their vertices all near ``the
equator'' of their circumsphere, and were dubbed
\defn{slivers}~\cite{cheng2000}. They were characterised as simplices
that had an upper bound on both their thickness and the ratio of their
circumradius to shortest edge length.

The essential property of slivers, that is exploited by many
algorithms that seek to remove them, is the fact that every vertex
lies close to the circumcircle of its opposing facet. This property is
a consequence of the defining characteristics of a sliver, and it is
demonstrated in a ``Torus Lemma'' \cite{edelsbrunner2000smoothing}.
The Torus Lemma is important because it places a bound on the volume
of possible positions of a fourth vertex that would make a sliver when
joined with a fixed set of three vertices.  

The concept of a sliver has been extended to higher dimensions in
various works, and likewise there is a higher dimensional analogue of
the Torus Lemma \cite{li2003}.  In our current context, we will be
considering unwanted simplices that are not subjected to an upper
bound on their circumradius, because they are not Delaunay
simplices. For this reason, we introduce \defn{flakes} in
\Secref{sec:flakes}. Flakes have one of the important properties of
slivers: there is an upper bound on all of the altitudes, but flakes
are not subjected to a circumradius bound.

A flake that appears in the Delaunay complex of a \ueset\ is
necessarily a sliver in the traditional sense, but the Torus Lemma
does not apply to flakes in general.  In \Secref{sec:delta.gen.forbid}
we introduce the \defn{\dgconfig s}, a subfamily of flakes that may be
considered to be a generalisation of slivers.  In
\Secref{sec:hoop.property} we show that \dgconfig s will exhibit
the important property embodied in the Torus Lemma. We call this
property the \defn{hoop property}, and the Hoop \Lemref{lem:hoop} is
our extension of the Torus Lemma to the current context.


\subsection{Flakes}
\label{sec:flakes}

In dimensions higher than three, a simple upper bound on the thickness
of a simplex is not sufficient to bound \emph{all} of the altitudes of
the simplex. In order to obtain an effective bound on all of the
altitudes, a small upper bound on the thickness needs to be coupled
with a relatively larger lower bound on the thickness of the
facets. For this reason we introduce a thickness requirement that is
gradated with the dimension. We exploit a positive real parameter
$\flakebnd$, which is no larger than one. In the following definition, $\flakebnd^j$ means $\flakebnd$ raised to the $j^{\text{th}}$ power.
\begin{de}[$\flakebnd$-good simplices and $\flakebnd$-flakes]
  \label{def:flake}
  \label{def:good.simplex}
  A simplex $\splxs$ is \defn{$\flakebnd$-good} if for all $j$ with
  $0\leq j \leq \dim \splxs$, we have $\thickness{\splxs^j} \geq
  \flakebnd^j$ for all $j$-simplices $\splxs^j \leq \splxs$. A simplex
  is \defn{$\flakebnd$-bad} if it is not $\flakebnd$-good. A
  \defn{$\flakebnd$-flake} is a $\flakebnd$-bad simplex in which all
  the proper faces are $\flakebnd$-good.
\end{de}
Observe that a flake must have dimension at least $2$, since
$\thickness{\splxs^j} = 1$ for $j < 2$. Also, since a flake may be
degenerate, but its facets cannot, the dimension of a flake can be as
high as $m+1$, but no higher.

Earlier definitions of slivers  in higher dimensions
\cite{li2003,cheng2005} correspond to flakes together with the
additional requirement that the circumradius to shortest edge ratio be
bounded.  The dimension-gradated requirement on simplex quality
(altitude bound) is implicitly present in these earlier works.

Ensuring that all simplices in a complex $\mathcal{K}$ are
$\flakebnd$-good is the same as ensuring that there are no flakes in
$\mathcal{K}$.  Indeed, if $\splxs$ is $\flakebnd$-bad, then it has a
$j$-face $\splxs^j \leq \splxs$ that is not $\flakebnd^j$-thick. By
considering such a face with minimal dimension we arrive at the
following important observation:
\begin{lem}
  \label{lem:bad.has.flake}
  A simplex is $\flakebnd$-bad if and only if it has a face that is a
  $\flakebnd$-flake.
\end{lem}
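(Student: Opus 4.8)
The plan is to prove the biconditional in \Lemref{lem:bad.has.flake} by establishing each direction separately, with the forward direction being the substantive one.

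\textbf{The easy direction.} Suppose $\splxs$ has a face $\splxt \leq \splxs$ that is a $\flakebnd$-flake. By \Defref{def:flake}, $\splxt$ is $\flakebnd$-bad, so there is some $j$ and some $j$-face $\splxt^j \leq \splxt$ with $\thickness{\splxt^j} < \flakebnd^j$. But $\splxt^j \leq \splxt \leq \splxs$, so $\splxt^j$ is also a face of $\splxs$, witnessing that $\splxs$ fails the $\flakebnd$-good condition. Hence $\splxs$ is $\flakebnd$-bad. (Note this direction does not even use the minimality built into the definition of a flake; it only uses that a flake is bad.)

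\textbf{The main direction.} Suppose $\splxs$ is $\flakebnd$-bad; I want to produce a face of $\splxs$ that is a $\flakebnd$-flake. Since $\splxs$ is bad, the set of its faces that are $\flakebnd$-bad is non-empty (it contains $\splxs$ itself). Among all $\flakebnd$-bad faces of $\splxs$, choose one, call it $\splxt$, of minimal dimension. I claim $\splxt$ is a $\flakebnd$-flake. It is $\flakebnd$-bad by construction, so it remains to check that every proper face of $\splxt$ is $\flakebnd$-good. Let $\splxs' < \splxt$ be a proper face. Then $\splxs'$ is a face of $\splxs$ (transitivity of the face relation), and $\dim \splxs' < \dim \splxt$, so by minimality of $\splxt$ the face $\splxs'$ cannot be $\flakebnd$-bad; i.e., $\splxs'$ is $\flakebnd$-good. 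Thus $\splxt$ satisfies \Defref{def:flake} and is a $\flakebnd$-flake, completing the proof.

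\textbf{Where the content lies.} This lemma is essentially a formal ``minimal bad witness'' argument: the definition of $\flakebnd$-flake was engineered precisely so that flakes are the minimal $\flakebnd$-bad simplices, and the only facts needed are the transitivity of the face relation $\leq$ and the finiteness (hence well-foundedness with respect to dimension) of the poset of faces of $\splxs$. There is no real obstacle; the one thing to be careful about is the direction of the implication in the definition of $\flakebnd$-good — being good is a condition on \emph{all} faces, so a single bad face makes the whole simplex bad, which is exactly what lets the minimality argument go through. One could phrase this even more briefly by referencing the discussion immediately preceding the lemma statement (``By considering such a face with minimal dimension\ldots''), but spelling out both directions as above makes the argument self-contained.
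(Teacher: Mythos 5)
Your proof is correct and follows essentially the same route as the paper, which proves this lemma in the sentence immediately preceding its statement by taking a minimal-dimension face violating the thickness bound; your "minimal $\flakebnd$-bad face" is the same object, and you additionally spell out the trivial converse direction that the paper leaves implicit.
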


We obtain an upper bound on the altitudes of a $\flakebnd$-flake
through a consideration of dihedral angles. In particular, we observe
the following general relationship between simplex altitudes:
\newcommand{\spq}{\splxs_{pq}}
\begin{figure}[ht]
  \begin{center}
    \includegraphics[width=0.5\linewidth]{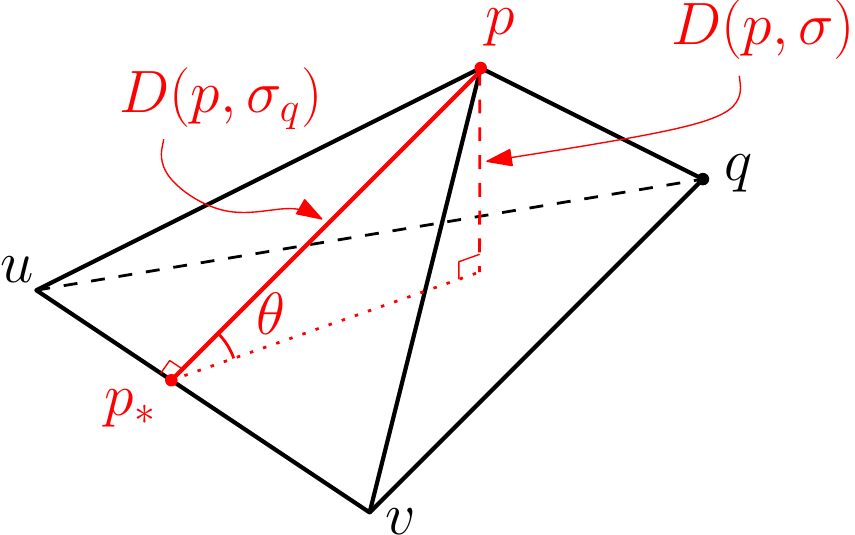} 
  \end{center}
  \caption[Dihedral angle]{The sine of the dihedral angle $\theta$ between the facets $\splxsq = \asimplex{p,u,v}$, and $\splxsp = \asimplex{q,u,v}$ of $\splxs = \asimplex{p,q,u,v}$ is given by $\frac{\splxalt{p}{\splxs}}{\splxalt{p}{\splxsq}}$, i.e., the ratio of the altitude of $p$ in $\splxs$ to the altitude of $p$ in $\splxsq$. The point $p_*$ is the orthogonal projection of $p$ into the affine hull of $\spq = \asimplex{u,v}$.
 }
  \label{fig:dihedral}
\end{figure}
\begin{lem}
  \label{lem:alt.ratios}
  If $\splxs$ is a $j$-simplex with $j \geq 2$, then for any two
  vertices $p,q \in \splxs$, the dihedral angle between $\splxsp$ and
  $\splxsq$ defines an equality between ratios of altitudes:
  \begin{equation*}
    \sin \angleop{\affhull{\splxsp}}{\affhull{\splxsq}} =
    \frac{\splxalt{p}{\splxs}}{\splxalt{p}{\splxsq}}
    =
    \frac{\splxalt{q}{\splxs}}{\splxalt{q}{\splxsp}}.
  \end{equation*}
\end{lem}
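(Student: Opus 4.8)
The plan is to reduce the identity to a single assertion about one dihedral angle, using the symmetry of the statement in $p$ and $q$. Let $p_*$ denote the orthogonal projection of $p$ onto $\affhull{\spq}$, where $\spq = \splxs_{pq}$ is the codimension-$2$ face of $\splxs$ obtained by deleting both $p$ and $q$ (see \Figref{fig:dihedral}). The dihedral angle $\theta = \angleop{\affhull{\splxsp}}{\affhull{\splxsq}}$ is, by definition, the angle at $p_*$ in the two-dimensional picture one gets after projecting $\rem$ orthogonally onto the $2$-plane through $p_*$ that is the orthogonal complement of $\affhull{\spq}$; in that plane $\affhull{\splxsp}$ and $\affhull{\splxsq}$ become two lines through $p_*$, and $\theta$ is the angle between them. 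The point $p$ projects to a point on the line representing $\affhull{\splxsq}$ (since $p \in \splxsq$), at distance $\splxalt{p}{\spq}$ from $p_*$, and similarly $q$ projects onto the line representing $\affhull{\splxsp}$ at distance $\splxalt{q}{\spq}$ from $p_*$.

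First I would record the elementary fact that altitudes are unaffected by the projection just described: since $\affhull{\splxsp}$ contains $\affhull{\spq}$, the distance from $p$ to $\affhull{\splxsp}$ equals the distance in the $2$-plane from the image of $p$ to the line representing $\affhull{\splxsp}$; and likewise $\splxalt{p}{\spq}$ is the distance in that $2$-plane from the image of $p$ to $p_*$, because $p_*$ already realises the foot of the perpendicular from $p$ to $\affhull{\spq}$. Now drop a perpendicular in the $2$-plane from the image of $p$ to the line $\affhull{\splxsp}$: its foot together with $p_*$ and the image of $p$ form a right triangle with hypotenuse of length $\splxalt{p}{\spq}$, one leg of length $\splxalt{p}{\splxs}$, and the angle at $p_*$ equal to $\theta$. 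Hence $\sin\theta = \splxalt{p}{\splxs}/\splxalt{p}{\spq}$. Finally I would observe that $\splxalt{p}{\spq} = \splxalt{p}{\splxsq}$: the face $\splxsq$ is the join of $p$ with $\spq$, so its affine hull is spanned by $\affhull{\spq}$ together with $p$, and the altitude of $p$ over the opposite face $\spq$ within $\splxsq$ is exactly the distance from $p$ to $\affhull{\spq}$, i.e. $\splxalt{p}{\spq}$. This gives the first equality, $\sin\theta = \splxalt{p}{\splxs}/\splxalt{p}{\splxsq}$; the second equality follows by the same argument with the roles of $p$ and $q$ interchanged (the dihedral angle between $\affhull{\splxsp}$ and $\affhull{\splxsq}$ is symmetric in $p$ and $q$).

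The only point requiring care — and the step I expect to be the main obstacle to a fully rigorous write-up — is checking that all the quantities involved genuinely survive the orthogonal projection onto the $2$-plane, i.e. that $\angleop{\affhull{\splxsp}}{\affhull{\splxsq}}$, $\splxalt{p}{\splxs}$, $\splxalt{q}{\splxs}$, and $\splxalt{p}{\spq}$ are all computed correctly in the projected picture. This is where one must use that $\affhull{\spq} = \affhull{\splxsp} \cap \affhull{\splxsq}$ has codimension exactly $1$ in each of $\affhull{\splxsp}$ and $\affhull{\splxsq}$ (which holds because the facets $\splxsp, \splxsq$ are non-degenerate, as $\splxs$ has a non-degenerate facet whenever $j \geq 2$ and we may assume $\splxs$ itself is non-degenerate, or else both sides are handled separately), so that the dihedral angle is well defined and the projection does not collapse any of the relevant distances. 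Once that bookkeeping is in place, the computation is the routine right-triangle trigonometry sketched above, and no further calculation is needed.
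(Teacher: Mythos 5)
Your proof is correct and follows essentially the same route as the paper's: both work at the foot $p_*$ of the perpendicular from $p$ to $\affhull{\spq}$, identify $\norm{p-p_*}$ with $\splxalt{p}{\splxsq}$, and read off $\sin\theta = \splxalt{p}{\splxs}/\splxalt{p}{\splxsq}$ from the resulting right triangle, with the symmetric argument for $q$. The only cosmetic difference is that you realise the dihedral angle via orthogonal projection onto the plane orthogonal to $\affhull{\spq}$ (which, note, should be taken \emph{within} $\affhull{\splxs}$ when $j<m$), whereas the paper uses the equivalent characterisation of the angle as the maximal distance to $\affhull{\splxsp}$ over unit vectors of $\affhull{\splxsq}$ based at $p_*$.
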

\begin{proof}
  An example of the assertion is depicted in \Figref{fig:dihedral}.
  Let $\spq = \splxsp \cap \splxsq$, and let $p_*$ be the projection
  of $p$ into $\affhull{\spq}$. Taking $p_*$ as the origin, we see
  that $\frac{p-p_*}{\splxalt{p}{\splxsq}}$ has the maximal distance
  to $\affhull{\splxsp}$ out of all the unit vectors in
  $\affhull{\splxsq}$, and this distance is
  $\frac{\splxalt{p}{\splxs}}{\splxalt{p}{\splxsq}}$. By definition
  this is the sine of the angle between $\affhull{\splxsp}$ and
  $\affhull{\splxsq}$. A symmetric argument is carried out with $q$ to
  obtain the result.
\end{proof}

The usefulness of the definition of flakes lies in the following observation:
\begin{lem}[Flakes have small altitude]
  \label{lem:flake.alt.bnd}
  If $\splxt$ is a $\flakebnd$-flake, then for any vertex $p \in
  \splxt$,
  \begin{equation*}
    \splxalt{p}{\splxt} < \frac{ 2\longedge{\splxt}^{2}
      \flakebnd }{ \shortedge{\splxt} }.
  \end{equation*}
\end{lem}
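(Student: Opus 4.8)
The plan is to bound $\splxalt{p}{\splxt}$ by relating it, via \Lemref{lem:alt.ratios}, to the altitude of $p$ in a facet together with the sine of a dihedral angle, and then to express that sine in terms of the thickness of the $\flakebnd$-bad simplex $\splxt$. Since $\splxt$ is a flake, its facets $\splxtp$ and $\splxtq$ are $\flakebnd$-good; being $\flakebnd$-bad, $\splxt$ fails the thickness requirement at its own dimension, so $\thickness{\splxt} < \flakebnd^{\dim \splxt}$. The idea is to trade the weak thickness of $\splxt$ against the strong thickness of a facet that is one dimension lower, picking up only one extra power of $\flakebnd$, which is where the factor of $\flakebnd$ (rather than a larger power) in the statement comes from.

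More concretely: let $j = \dim \splxt$, and write $\thickness{\splxt} = \min_{x \in \splxt} \frac{\splxalt{x}{\splxt}}{j \longedge{\splxt}}$. First I would pick a vertex $q \in \splxt$ realising the minimum altitude of $\splxt$, so $\splxalt{q}{\splxt} = \thickness{\splxt}\, j \longedge{\splxt} < \flakebnd^{j} j \longedge{\splxt}$; this gives a bound on the \emph{smallest} altitude. To pass to an arbitrary vertex $p$, apply \Lemref{lem:alt.ratios} to the pair $p, q$:
\begin{equation*}
  \splxalt{p}{\splxt} = \splxalt{q}{\splxt} \cdot \frac{\splxalt{p}{\splxtq}}{\splxalt{q}{\splxtp}},
\end{equation*}
using that the common value $\sin \angleop{\affhull{\splxtp}}{\affhull{\splxtq}}$ equals both ratios in the lemma. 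Now $\splxalt{q}{\splxtp}$ is an altitude in the facet $\splxtp$, which is $\flakebnd$-good, hence $(j-1)$-thick enough: $\splxalt{q}{\splxtp} \geq \thickness{\splxtp}(j-1)\shortedge{\splxtp} \geq \flakebnd^{j-1}(j-1)\shortedge{\splxt}$ — wait, one must be careful that thickness is defined with the \emph{longest} edge in the denominator, so instead $\splxalt{q}{\splxtp} \geq \flakebnd^{j-1}(j-1)\longedge{\splxtp}$ does not immediately give a lower bound; rather one uses $\splxalt{q}{\splxtp} \geq \thickness{\splxtp}\,(j-1)\,\shortedge{\splxtp}$ only if that inequality holds, which it does not in general. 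The clean route is: $\splxalt{q}{\splxtp} \geq \thickness{\splxtp} \cdot (j-1) \cdot \longedge{\splxtp} \cdot \frac{\shortedge{\splxtp}}{\longedge{\splxtp}}$ is circular; better to observe directly that the smallest altitude of the $(j-1)$-simplex $\splxtp$ is $\thickness{\splxtp}(j-1)\longedge{\splxtp} \geq \flakebnd^{j-1}(j-1)\shortedge{\splxt}$ is still wrong. The correct and simple bound is $\splxalt{q}{\splxtp} \geq \thickness{\splxtp}\,(j-1)\,\longedge{\splxtp} \geq \flakebnd^{j-1}(j-1)\shortedge{\splxt}$ fails because $\longedge{\splxtp}$ could be small; but in fact every edge of $\splxtp$ has length $\geq \shortedge{\splxt}$, so $\longedge{\splxtp} \geq \shortedge{\splxt}$, and hence $\splxalt{q}{\splxtp} \geq \flakebnd^{j-1}(j-1)\shortedge{\splxt}$ does hold. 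For the numerator, bound $\splxalt{p}{\splxtq} \leq \longedge{\splxtq} \leq \longedge{\splxt}$. Assembling,
\begin{equation*}
  \splxalt{p}{\splxt} < \flakebnd^{j} j \longedge{\splxt} \cdot \frac{\longedge{\splxt}}{\flakebnd^{j-1}(j-1)\shortedge{\splxt}} = \frac{j}{j-1}\,\flakebnd\,\frac{\longedge{\splxt}^2}{\shortedge{\splxt}},
\end{equation*}
and since $j \geq 2$ (flakes have dimension at least $2$) we have $\frac{j}{j-1} \leq 2$, yielding the claimed bound.

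The main obstacle I anticipate is not the chain of inequalities but getting the bookkeeping on edge lengths and thickness factors exactly right: thickness uses the longest edge of each simplex in its denominator, so one must repeatedly convert between "altitude $\geq$ thickness $\times$ (shortest edge)" (always valid) and the per-simplex longest-edge normalisation, and one must be sure that the facet $\splxtp$ really is $\flakebnd$-good with the $(j-1)$-power bound — which is exactly the content of \Defref{def:flake} applied to the flake $\splxt$ — and that every edge length appearing satisfies $\shortedge{\splxt} \leq \cdot \leq \longedge{\splxt}$. A secondary subtlety is the degenerate boundary case $j = 2$, where $\splxtp$ is an edge and $\thickness{\splxtp} = 1$, which must be checked to be consistent with the formula; it is, since then $\splxalt{q}{\splxtp}$ is just an edge length $\geq \shortedge{\splxt}$. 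I would also double-check the edge case where $\splxt$ itself is degenerate (possible, as noted, when $\dim \splxt = m+1$): \Lemref{lem:alt.ratios} still applies since it only needs $j \geq 2$ and makes no nondegeneracy assumption, and $\splxalt{p}{\splxt}$ may be $0$, for which the strict inequality trivially holds.
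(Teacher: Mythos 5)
Your proposal is correct and takes essentially the same route as the paper's proof: apply \Lemref{lem:alt.ratios} with $q$ a minimal-altitude vertex, bound $\splxalt{q}{\splxt} < j\,\flakebnd^{j}\longedge{\splxt}$ and $\splxalt{q}{\splxtp} \geq (j-1)\,\flakebnd^{j-1}\longedge{\splxtp} \geq (j-1)\,\flakebnd^{j-1}\shortedge{\splxt}$, bound $\splxalt{p}{\splxtq}\leq\longedge{\splxt}$, and finish with $\frac{j}{j-1}\leq 2$. The step you agonised over --- lower-bounding $\splxalt{q}{\splxtp}$ --- is resolved in the paper exactly as in your final version, via $\longedge{\splxtp}\geq\shortedge{\splxt}$.
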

\begin{proof}
  Recalling \Lemref{lem:alt.ratios} we have
  \begin{equation*}
    \splxalt{p}{\splxt} = \frac{\splxalt{q}{\splxt}
      \splxalt{p}{\splxtq} } {\splxalt{q}{\splxtp}},
  \end{equation*}
  and taking $q$ to be a vertex with minimal altitude, we have
  \begin{equation*}
    \splxalt{q}{\splxt} = k \thickness{\splxt} \longedge{\splxt}
    < k \flakebnd^k \longedge{\splxt},
  \end{equation*}
  and
  \begin{equation*}
    \begin{split}
      \splxalt{q}{\splxtp}
      &\geq (k-1) \thickness{\splxtp} \longedge{\splxtp}\\
      &\geq (k-1) \flakebnd^{k-1} \longedge{\splxtp}\\
      &\geq (k-1) \flakebnd^{k-1} \shortedge{\splxt},
    \end{split}
  \end{equation*}
  and
  \begin{equation*}
    \splxalt{p}{\splxtq} \leq \longedge{\splxtq} \leq \longedge{\splxt},
  \end{equation*}
  and since $k \leq 2(k-1)$, the bound is obtained.
\end{proof}

\subsection{Properties of $\delta$-generic point sets}
\label{sec:delta.gen.forbid}

\begin{figure}[ht]
  \begin{center}
    \includegraphics[width=0.5\linewidth]{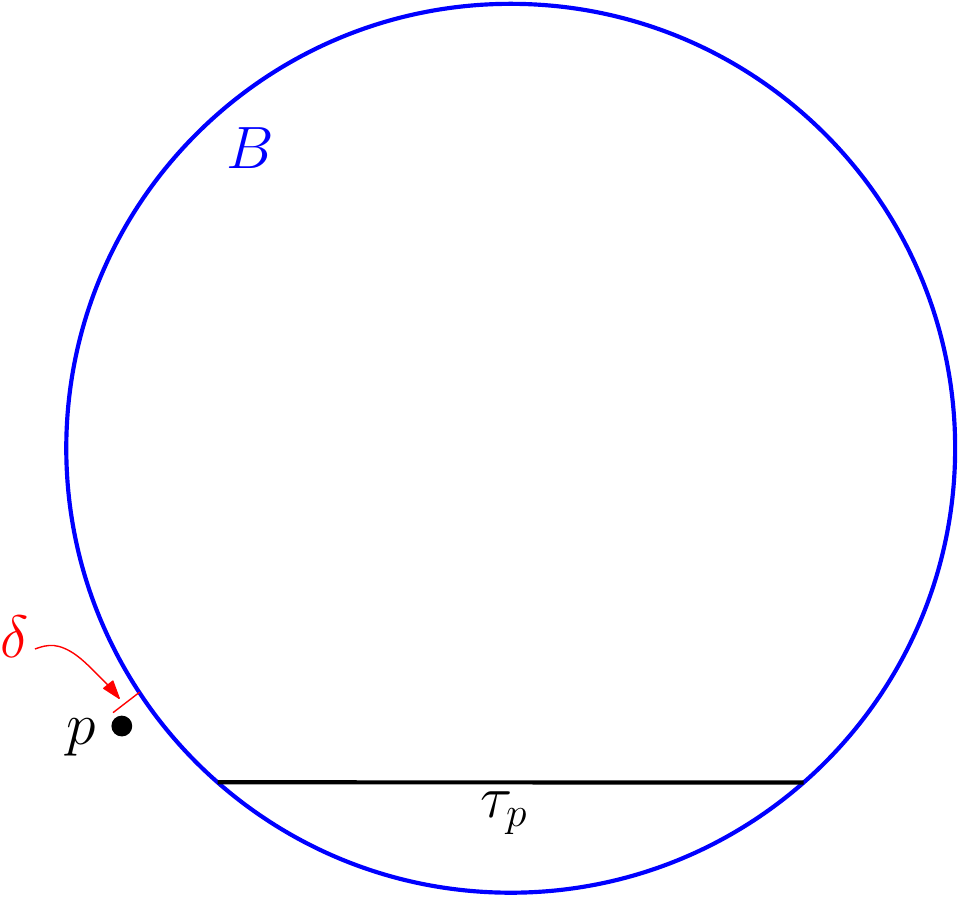} 
  \end{center}
  \caption[Forbidden configuration]{ A forbidden configuration is a
    flake $\splxt$ that has a vertex $p$ that lies within a distance
    $\delta$ from a small circumscribing ball of the opposing facet
    $\splxtp$.}
  \label{fig:forbidden.config}
\end{figure}
In order to ensure a $\delta$-generic point set $\ppts$, we need to
consider simplices that may not appear in any Delaunay
triangulation. Specifically, we do not have a circumradius bound on
the problem configurations. This makes their description more
complicated than the traditional definition of a sliver.  As
schematically depicted in \Figref{fig:forbidden.config}, we have the
following characterisation of the configurations that we need to
avoid:
\begin{de}[Forbidden configuration]
  \label{def:forbidden.config}
  Let $\pts' \subset \rem$ be a \pueset.  A $(k+1)$-simplex $\splxt
  \subseteq \pts'$, is a \defn{\dgconfig} in $\ppts$ if it is a
  $\flakebnd$-flake, with $k\leq m$, and there exists a $p \in \splxt$
  such that $\opface{p}{\splxt}$ has a circumscribing ball $B =
  \ballEm{C}{R}$ with $R < \psamconst$, and $\abs{\distEm{p}{C} - R}
  \leq \delta$, where $\delta = \delta_0 \psparseconst \psamconst$. We
  say that the \dgconfig\ is \defn{certified} by $p$ and $B$.
\end{de}
We remark that the definition of a \dgconfig\ depends on two
parameters, $\flakebnd$, and $\delta_0$, as well as on the parameters
which we associate with the sample set $\ppts$, namely
$\psparseconst$, and $\psamconst$.

In order to guarantee that the \pueset\ $\ppts$ is $\delta$-generic,
with $\delta = \delta_0 \psparseconst \psamconst$, it is sufficient to
ensure that there is no \dgconfig\ with vertices in $\ppts$:
\begin{lem}
  \label{lem:non.protection.implies.forbidden}
  Suppose $\ppts \subset \rem$ is a \pueset. If there exists an
  $m$-simplex $\splxs^m \in \rdelppts$ which is not
  $\delta$-protected, with $\delta = \delta_0 \psparseconst
  \psamconst$, then $\cpltcplx{\ppts}$ contains a \dgconfig. Likewise,
  if any $\splxs^m \in \rdelppts$ is not $\flakebnd$-good, then
  $\cpltcplx{\ppts}$ contains a \dgconfig.
\end{lem}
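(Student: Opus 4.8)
The plan is to handle the two claims in parallel, since both hinge on producing a $\flakebnd$-flake that is certified by some vertex and a small circumscribing ball. First consider the case where some $\splxs^m \in \rdelppts$ is not $\delta$-protected. By \Lemref{lem:rdel.small.circrad}, $\splxs^m$ has a Delaunay ball $B = \ballEm{C}{R}$ with $R < \psamconst$, and since $\splxs^m$ is not $\delta$-protected, there is a point $p \in \ppts \setminus \splxs^m$ with $\distEm{p}{\bdry B} \leq \delta$. Because the vertices of $\splxs^m$ lie on $\bdry B$, the ball $B$ is a circumscribing ball for $\splxs^m$ and $\abs{\distEm{p}{C} - R} \leq \delta$. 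Form the $(m+1)$-simplex $\splxt = \splxjoin{p}{\splxs^m}$. If $\splxt$ is itself a $\flakebnd$-flake we are essentially done; otherwise we invoke \Lemref{lem:bad.has.flake} to extract a flake among its faces, and the main work is to check that this extracted flake still comes with a certifying vertex and a small circumscribing ball.

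The crux is the following: $\splxt$ need not be $\flakebnd$-good, because $\splxs^m$ may be flat. But every proper face of $\splxt$ other than $\splxs^m$ itself contains $p$. So either $\splxs^m$ is $\flakebnd$-bad, in which case by \Lemref{lem:bad.has.flake} it contains a flake $\splxt'$ which is a face of $\splxs^m$ — and every face of the Delaunay simplex $\splxs^m$ has circumsphere contained in $\bdry B$, so $\splxt'$ is certified by taking $B$ itself (restricted appropriately) with $R < \psamconst$, choosing any vertex of $\splxt' $ opposite which to certify; wait — more carefully, here we need $\splxt'$ to play the role of $\splxtp$ in the definition, i.e. we need a flake one dimension higher. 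The cleaner route: if $\splxs^m$ is $\flakebnd$-bad, then the second claim of the lemma applies and we are in the situation of the second statement, so it suffices to prove that statement; if $\splxs^m$ is $\flakebnd$-good, then since $\splxt = \splxjoin{p}{\splxs^m}$ is $(m+1)$-dimensional and all its proper faces either equal $\splxs^m$ (good, by assumption) or contain $p$ — and among those containing $p$, the ones that are bad must themselves contain a flake; choosing a minimal-dimension bad face shows $\splxt$ contains a flake $\splxt'$, and $\splxt'$ has a vertex $p$ not in $\splxt' \setminus \{p\} \leq \splxs^m$, whose opposing facet is a face of $\splxs^m$, hence has circumsphere inside $\bdry B$ with circumradius $< R < \psamconst$, and $\distEm{p}{C} - R$ within $\delta$ of zero. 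So $\splxt'$ is a \dgconfig\ certified by $p$ and (the appropriate restriction of) $B$.

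For the second statement, suppose $\splxs^m \in \rdelppts$ is not $\flakebnd$-good. By \Lemref{lem:bad.has.flake} it contains a face $\splxt'$ that is a $\flakebnd$-flake; since flakes have dimension between $2$ and $m+1$ and $\splxt' \leq \splxs^m$, in fact $2 \leq \dim \splxt' \leq m$, so $\splxt'$ is a $(k+1)$-simplex with $k \leq m-1 < m$ as required. Pick any vertex $p \in \splxt'$; its opposing facet $\opface{p}{\splxt'}$ is a proper face of $\splxs^m$, hence of the Delaunay ball $B = \ballEm{C}{R}$ with $R < \psamconst$ (\Lemref{lem:rdel.small.circrad}), so its vertices lie on $\bdry B$ and $B$ is a circumscribing ball for $\opface{p}{\splxt'}$ with radius $R < \psamconst$. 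Moreover $p$ is itself a vertex of $\splxs^m$, so $\distEm{p}{C} = R$ exactly, giving $\abs{\distEm{p}{C} - R} = 0 \leq \delta$. Thus $\splxt'$ is a \dgconfig\ certified by $p$ and $B$. The main obstacle I anticipate is purely bookkeeping: making sure in the first case that when $\splxs^m$ is $\flakebnd$-good the extracted flake $\splxt'$ genuinely has its certifying vertex $p$ equal to the perturbed point witnessing non-protection (so that the opposing facet lands inside $\splxs^m$ and inherits the small ball), rather than some other vertex — and handling the possibility that $\splxt' = \splxt$ is the full $(m+1)$-simplex, where the circumscribing ball for $\opface{p}{\splxt'} = \splxs^m$ is exactly $B$.
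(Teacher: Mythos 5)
Your argument is correct and follows essentially the same route as the paper: form the join $\splxt = \splxjoin{p}{\splxs^m}$, extract a $\flakebnd$-flake via \Lemref{lem:bad.has.flake}, and certify it with the Delaunay ball $B$ of $\splxs^m$ (which has radius $<\psamconst$ by \Lemref{lem:rdel.small.circrad}), splitting into cases according to whether the flake contains $p$ --- your good/bad dichotomy on $\splxs^m$ is the same case split. The only points to make explicit are (i) why \Lemref{lem:bad.has.flake} applies at all, namely that the $(m+1)$-simplex $\splxt$ has $m+2$ vertices in $\rem$ and is therefore degenerate, hence $\flakebnd$-bad, and (ii) that no ``restriction'' of $B$ is needed, since a circumscribing ball is by definition an $m$-dimensional ball containing the vertices on its boundary, so $B$ itself circumscribes every face of $\splxs^m$.
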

\begin{proof}
  Suppose $\splxs^m \in \rdelppts$ is not $\delta$ protected. Then
  there exists a $p \in \ppts \setminus \splxs^m$ such that $0 \leq
  \distEm{p}{\circcentre{\splxs^m}} - \circrad{\splxs^m} \leq \delta$. The
  $(m+1)$-simplex $\tilde{\splxt} = \splxjoin{p}{\splxs^m}$ is
  necessarily degenerate, therefore, by \Lemref{lem:bad.has.flake},
  there is a $\flakebnd$-flake $\splxt \leq \tilde{\splxt}$. If $p$ belongs
  to $\splxt$, then $\splxt$ is necessarily a \dgconfig\ certified by
  $p$ and $B = \ballEm{\circcentre{\splxs^m}}{\circrad{\splxs^m}}$,
  because $\delta \leq \delta_0 \psparseconst \psamconst$. If $p$ does
  not belong to $\splxt$, then it is a \dgconfig\ certified by any one
  of its vertices and $B$.

  A similar argument reveals a \dgconfig\ if $\splxs^m$ is not
  $\flakebnd$-good. 
\end{proof}

\subsection{The Hoop property}
\label{sec:hoop.property}

We characterise the property of \dgconfig s that is important for
algorithmic purposes as follows:
\begin{de}[Hoop property]
  \label{def:hoop}
  A simplex $\splxt \subset \rem$ has the \defn{\hoop\ property} if
  there is a constant $\hoopbnd > 0$ such that for every $p \in
  \splxt$, the opposing facet has a circumcentre and
  \begin{equation*}
    \distEm{p}{\circsphere{\opface{p}{\splxt}}} \leq \hoopbnd
    \circrad{\opface{p}{\splxt}} < \infty.
  \end{equation*}
\end{de}

\subsubsection{The Hoop Lemma}
\label{sec:hoop.lem}

We emphasise that the \emph{symmetric} nature of the hoop property is essential for our purposes. The hoop property says that \emph{every} vertex is close to the circumsphere of the opposing facet. We obtain this bound in two steps. First we exploit the thickness of the facets to show that \dgconfig s have a natural symmetry characterised by the fact that \emph{every} vertex lies close to some small circumscribing sphere of its opposing facet:
\begin{lem}[Symmetry of \dgconfigs]
  \label{lem:close.to.some.sphere}
  Suppose $\splxt = \splxjoin{q}{\splxs}$ is a $(k+1)$-simplex
  certified by $q$ and $\ballEm{C}{R}$ as a \dgconfig\ in a
  \pueset. If $\delta_0 \leq \frac{1}{4}$, then for any $p \in
  \splxt$ there exists a ball $B = \ballEm{C_p}{R_p}$ circumscribing
  $\splxtp$ and such that
  \begin{equation*}
    R_p \leq \left(1 + \frac{3\delta_0}{\psparseconst \flakebnd^k}
    \right) R,
  \end{equation*}
  and
  \begin{equation*}
    \distEm{p}{\bdry{B}} \leq \left( \frac{6\delta_0}{\psparseconst^2
        \flakebnd^k} \right) \shortedge{\splxtp}.
  \end{equation*}
\end{lem}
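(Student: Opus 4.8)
The plan is to start from the certifying data: we are given a ball $\ballEm{C}{R}$ circumscribing $\splxs = \opface{q}{\splxt}$ with $R < \psamconst$ and $\abs{\distEm{q}{C} - R} \leq \delta = \delta_0 \psparseconst \psamconst$, and we must produce, for an arbitrary vertex $p \in \splxt$, a circumscribing ball of $\opface{p}{\splxt}$ whose boundary passes near $p$ and whose radius is not much larger than $R$. The key structural fact is that $\splxt$ is a $\flakebnd$-flake, so every proper face — in particular every facet — is $\flakebnd$-good, hence $\flakebnd^k$-thick for the $k$-dimensional facets. I would also record the geometric consequences of being a \pueset: $\shortedge{\splxt} \geq \psparseconst \psamconst$ (roughly — via \Lemref{lem:perturb.Delone} and the separation bound on $\ppts$), and that all edge lengths are comparably sized since the circumradius $R$ of $\splxs$ is less than $\psamconst$ and the vertices of $\splxtp$ all lie near $\bdry{\ballEm{C}{R}}$.

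The core of the argument is a perturbation-of-the-circumcenter computation. Fix $p \in \splxt$. I would consider the affine hull $\affhull{\opface{p}{\splxt}}$ and the center $C_p$ of the smallest circumscribing ball of $\opface{p}{\splxt}$; this exists and is finite because $\opface{p}{\splxt}$ is a facet of a flake, hence non-degenerate ($\flakebnd^k$-thick). The two facets $\splxs = \opface{q}{\splxt}$ and $\opface{p}{\splxt}$ share the $(k-1)$-face $\spq := \opface{q}{\splxt} \cap \opface{p}{\splxt}$, which contains $k$ of the $k+2$ vertices. The plan: compare $C$ (which is close to the circumsphere condition for all vertices of $\splxs$, and within $\delta$ of being equidistant to $q$) with $C_p$ (the true circumcenter of $\opface{p}{\splxt}$). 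Both are approximately equidistant from the common vertices in $\spq$. The displacement $C_p - C$ is controlled by how far $C$ fails to be equidistant to the vertex $p$ versus the vertex $q$: a perturbation of order $\delta$ in the "power" relative to one vertex translates, after dividing by the altitude of that vertex in its facet (which is at least $\flakebnd^k \shortedge{\cdot} \gtrsim \flakebnd^k \psparseconst \psamconst$), into a displacement of $C_p$ of order $\delta / (\flakebnd^k \psparseconst)$, i.e. of order $\frac{\delta_0 \psamconst}{\flakebnd^k}$. This is where the thickness $\flakebnd^k$ enters the denominator and where the factor $\psparseconst^{-1}$ or $\psparseconst^{-2}$ appears. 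From $\norm{C_p - C} \lesssim \frac{\delta_0 \psamconst}{\psparseconst \flakebnd^k}$ and $R_p^2 = R^2 + (\text{correction involving } \norm{C_p-C})$ one reads off the radius bound $R_p \leq \bigl(1 + \frac{3\delta_0}{\psparseconst \flakebnd^k}\bigr) R$, and then $\distEm{p}{\bdry{B}} = \abs{\distEm{p}{C_p} - R_p} \leq \abs{\distEm{p}{C} - R} + 2\norm{C_p - C} \lesssim \frac{\delta_0}{\psparseconst^2 \flakebnd^k} \shortedge{\splxtp}$, after converting the absolute bound on $C_p - C$ into a relative one by dividing by $\shortedge{\splxtp} \geq \psparseconst \psamconst$.

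I expect the main obstacle to be the bookkeeping that quantifies "how a bounded perturbation of the squared-distance function at one vertex moves the circumcenter", made quantitative in terms of the thickness $\flakebnd^k$ of the facet rather than of $\splxt$ itself (which may be degenerate). Concretely: writing the circumcenter condition as a linear system in the differences of the vertices, the relevant matrix is well-conditioned precisely because the facet is $\flakebnd^k$-thick, and one must carefully argue that the bound $R < \psamconst$ together with separation forces all the relevant edge lengths into a bounded ratio so that the constants $3$ and $6$ (rather than something dimension-dependent) come out. The hypothesis $\delta_0 \leq \tfrac14$ will be used to absorb lower-order terms — e.g. to guarantee $\delta / R_{\min} $ is small enough that the linearization is valid and that $\splxtp$ genuinely has a finite circumcenter close to $C$. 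The subsequent Hoop Lemma will then combine this with \Lemref{lem:flake.alt.bnd} (small altitudes) to upgrade "close to \emph{some} circumscribing sphere" to "close to the \emph{circumsphere}", but that is outside the present statement.
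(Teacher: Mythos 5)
Your overall architecture is the same as the paper's: observe that $C$ is nearly equidistant from all the vertices of $\splxtp$ (exactly at distance $R$ from the vertices in $\spq$, and within $\delta \leq \delta_0\shortedge{\splxs}$ of that distance for $q$), use the $\flakebnd^k$-thickness of the facet to convert this approximate equidistance into an actual circumscribing ball centred near $C$, and finish with the triangle inequality through a vertex common to $\splxs$ and $\splxtp$. The paper outsources the middle step to a quoted result (\cite[Lemma 4.3]{boissonnat2013stab1}); your linear-system/conditioning argument is essentially a proof of that result, so that part is fine.

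The genuine gap is your identification of the ball $B$. You take $C_p = \circcentre{\splxtp}$, the centre of the \emph{smallest} circumscribing ball, and claim $\norm{C_p - C} \lesssim \delta_0\psamconst/(\psparseconst\flakebnd^k)$. That is false in general: the centres of circumscribing balls of $\splxtp$ form the affine subspace $\circcentre{\splxtp} + \affhull{\splxtp}^{\perp}$, and approximate equidistance of $C$ from the vertices of $\splxtp$ only forces $C$ to lie close to \emph{this subspace}, not close to the particular point $\circcentre{\splxtp}$. The component of $C - \circcentre{\splxtp}$ orthogonal to $\affhull{\splxtp}$ can be as large as $R \approx \psamconst$; already for $m=2$, $k=1$ the point $C$ lies anywhere on the perpendicular bisector of the edge $\splxs$, while $\circcentre{\splxtp}$ is the midpoint of the edge $\splxtp$. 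The correct choice of $B$ is the circumscribing ball centred at the point of that affine subspace nearest $C$ (this is what the cited lemma produces); its radius is then controlled by the triangle inequality through a shared vertex, and your chain $\abs{\distEm{p}{C_p} - R_p} \leq \abs{\distEm{p}{C} - R} + 2\norm{C_p - C}$ goes through. With your choice of $B$ the second conclusion of the lemma would not follow with the stated constant: bounding the distance from $p$ to the \emph{diametric} sphere requires the flake altitude bound, which is deliberately reserved for \Lemref{lem:dist.to.circumsphere}. Indeed, your own closing remark about later upgrading ``close to some circumscribing sphere'' to ``close to the circumsphere'' presupposes that the ball produced here is not the smallest one.
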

\begin{proof}
  The idea is that $C$ is ``almost'' a circumcentre for $\splxtp$ in that the distances between $C$ and the vertices of $\splxtp$ are all very close. Since $\splxtp$ is thick, we can exploit a result \cite[Lemma 4.3]{boissonnat2013stab1} that says that $\splxtp$ must have a circumscribing ball with a centre near $C$. The bounds then follow from a consideration of the triangle inequality, and the fact that $\splxtp$ and $\splxs$ must have a vertex in common.

  We observe that for any $u,v \in \splxtp$ we have
  \begin{equation*}
    \big| \distEm{u}{C} - \distEm{v}{C} \big| \leq \delta_0
    \shortedge{\splxs}. 
  \end{equation*}
  It follows then, from \cite[Lemma 4.3]{boissonnat2013stab1}, that
  there is a circumscribing ball $B = \ballEm{C_p}{R_p}$ for $\splxtp$
  with
  \begin{equation*}
    \distEm{C_p}{C} \leq \frac{ (R + \delta_0
      \shortedge{\splxs} ) \delta_0 \shortedge{\splxs}}
    {\thickness{\splxtp}\longedge{\splxtp} }.
  \end{equation*}
  Since $\splxt$ is a $\flakebnd$-flake, $\thickness{\splxtp} \geq
  \flakebnd^k$. Thus, using
  $\frac{\shortedge{\splxs}}{\longedge{\splxtp}} \leq
  \frac{2}{\psparseconst}$, and $R \leq
  \frac{1}{\psparseconst}\shortedge{\splxtp}$ and $\delta_0 <
  \frac{1}{4}$, we find
  \begin{equation*}
    \distEm{C_p}{C} \leq \frac{ 2(R + 2\delta_0 R) \delta_0}
    {\psparseconst \flakebnd^k}
    \leq \frac{ 3\delta_0 R}
    {\psparseconst \flakebnd^k} \leq 
    \frac{ 3 \delta_0 \shortedge{\splxtp} } 
      {\psparseconst^2 \flakebnd^k}.
  \end{equation*}
  We have $k \geq 1$, since $\splxt$ is a flake, so $\splxs$ and $\splxtp$
  must share a common vertex. Thus the bounds follow from the triangle
  inequality. 
\end{proof}

In the next step we arrive at the \hoop\ property by exploiting the
altitude bound on every vertex that is guaranteed by
\Lemref{lem:flake.alt.bnd} because a \dgconfig\ is a
$\flakebnd$-flake.  The Symmetry \Lemref{lem:close.to.some.sphere}
allows us to exploit an argument similar to the traditional
demonstration of the torus lemma. The full proof is described in
\Secref{sec:hoop.lem.proof}. We arrive at the following Hoop Lemma, which is
a restatement of \Lemref{lem:shell.lem}:
\begin{lem}[Hoop Lemma]
  \label{lem:hoop}
  If
  \begin{equation*}
    \delta_0 \leq \frac{\psparseconst^2 \flakebnd^m}{6},
  \end{equation*}
  then a \dgconfig\ $\splxt$ in a \pueset\ has the \hoop\ property
  with
  \begin{equation*}
    \hoopbnd = \left(\frac{6}{\psparseconst} \right)^3
    \left( \flakebnd + \frac{\delta_{0}}{\flakebnd^m} \right).
  \end{equation*}
  Furthermore,
  the facets of $\splxt$ are subject to a circumradius bound:
  \begin{equation*}
    \circrad{\splxtp} < \left(1 + \frac{3 \delta_0}{\psparseconst
        \flakebnd^m} \right) \psamconst,
 \end{equation*}
 for all $p \in \splxt$. 
\end{lem}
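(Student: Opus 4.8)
The plan is to combine the two main ingredients that are already in place. First, Lemma~\ref{lem:close.to.some.sphere} (the Symmetry Lemma) tells us that for a \dgconfig\ $\splxt$ and any vertex $p \in \splxt$, there is a circumscribing ball $B = \ballEm{C_p}{R_p}$ for $\splxtp$ with $R_p \leq \left(1 + \frac{3\delta_0}{\psparseconst\flakebnd^m}\right)R$ and with $\distEm{p}{\bdry{B}} \leq \frac{6\delta_0}{\psparseconst^2\flakebnd^m}\shortedge{\splxtp}$. Since $\splxt$ is certified by $q$ and $\ballEm{C}{R}$ with $R < \psamconst$, the circumradius bound on $\splxtp$ is immediate: $\circrad{\splxtp} \leq R_p < \left(1 + \frac{3\delta_0}{\psparseconst\flakebnd^m}\right)\psamconst$. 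In particular $\splxtp$ is non-degenerate, so it has a genuine circumcentre $\circcentre{\splxtp}$ and circumsphere $\circsphere{\splxtp}$; this is where the hypothesis $\delta_0 \leq \frac{\psparseconst^2\flakebnd^m}{6}$ will be used, to keep the perturbation of $C_p$ away from $C$ small enough that the facet genuinely circumscribes. (One also needs $\delta_0 \leq \frac14$ to invoke Lemma~\ref{lem:close.to.some.sphere}, which is implied by the stated hypothesis since $\psparseconst,\flakebnd \leq 1$.)

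Second, to upgrade ``close to \emph{some} circumscribing sphere'' to ``close to \emph{the} circumsphere,'' I would run the standard torus-lemma-style argument. The point $p$ lies near $\bdry{B}$, and $B$ is one member of the pencil of balls circumscribing $\splxtp$; the smallest such ball is $\ballEm{\circcentre{\splxtp}}{\circrad{\splxtp}}$, and its boundary sphere intersected with $\affhull{\splxt}$ is $\circsphere{\splxtp}$. The displacement of $p$ from $\circsphere{\opface{p}{\splxt}}$ splits into a component ``along the pencil'' (i.e. in the direction normal to $\affhull{\splxtp}$) and a component within $\affhull{\splxt}$. The along-the-pencil component is controlled because $B$ circumscribes $\splxtp$ and $p$ has a bounded altitude over $\affhull{\splxtp}$ — this is exactly where Lemma~\ref{lem:flake.alt.bnd} enters: a \dgconfig\ is a $\flakebnd$-flake, so $\splxalt{p}{\splxt} < \frac{2\longedge{\splxt}^2\flakebnd}{\shortedge{\splxt}}$. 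The within-$\affhull{\splxt}$ component is controlled directly by the $\distEm{p}{\bdry{B}}$ bound from Lemma~\ref{lem:close.to.some.sphere}. Adding these, converting all edge-length ratios to $\psparseconst$ via $\longedge{\splxt}, \shortedge{\splxt} \in [\psparseconst\psamconst, 2\psamconst]$ (roughly) and $\psamconst$-comparability of $R$, $\circrad{\splxtp}$, and $\shortedge{\splxtp}$, yields a bound of the shape $\distEm{p}{\circsphere{\splxtp}} \leq \left(\frac{6}{\psparseconst}\right)^3\left(\flakebnd + \frac{\delta_0}{\flakebnd^m}\right)\circrad{\splxtp}$, i.e. the claimed value of $\hoopbnd$.

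The main obstacle is the geometry of the second step: carefully decomposing the position of $p$ relative to the pencil of circumscribing balls of $\splxtp$ and bounding the two components separately, making sure the altitude bound from Lemma~\ref{lem:flake.alt.bnd} is applied to the right altitude (the altitude of $p$ in $\splxt$, equivalently its distance to $\affhull{\splxtp}$) and that the constant bookkeeping — keeping track of all the factors of $\psparseconst$, $\flakebnd$, and $\delta_0$ as one passes between $R$, $\circrad{\splxtp}$, $\shortedge{\splxtp}$, $\longedge{\splxt}$, and $\psamconst$ — collapses cleanly into $\left(\frac{6}{\psparseconst}\right)^3\left(\flakebnd + \frac{\delta_0}{\flakebnd^m}\right)$. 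The statement itself says this is a restatement of Lemma~\ref{lem:shell.lem}, so the real work is deferred there (Section~\ref{sec:hoop.lem.proof}); the present lemma is obtained by specialising that more technical ``shell lemma'' to the \dgconfig\ setting and reading off the explicit constants. I would therefore structure the proof as: (i) verify the hypotheses of Lemma~\ref{lem:close.to.some.sphere} and Lemma~\ref{lem:flake.alt.bnd} hold, (ii) cite Lemma~\ref{lem:shell.lem} with these inputs, and (iii) simplify the resulting constant using $\psparseconst, \flakebnd \leq 1$ and $R < \psamconst$ to obtain both the hoop bound and the facet circumradius bound.
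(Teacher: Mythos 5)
Your proposal is correct and matches the paper's proof: the paper likewise combines the Symmetry Lemma~\ref{lem:close.to.some.sphere} with a torus-lemma-style two-dimensional argument (Lemma~\ref{lem:dist.to.circumsphere}) that uses the flake altitude bound of Lemma~\ref{lem:flake.alt.bnd}, and the stated constants are read off from Lemma~\ref{lem:shell.lem} using $k \leq m$ and $84 \leq 216$. One small correction: the hypothesis $\delta_0 \leq \psparseconst^2\flakebnd^m/6$ is not needed to ensure that $\splxtp$ has a circumcentre (the facets of a flake are $\flakebnd$-good, hence non-degenerate); rather, it keeps the auxiliary ball radius $R_p$ from Lemma~\ref{lem:close.to.some.sphere} below $\tfrac{3}{2}\psamconst$ so that Lemma~\ref{lem:dist.to.circumsphere} applies.
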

The definition of \dgconfig s is cumbersome, but the Hoop
\Lemref{lem:hoop} provides us with a symmetric property of \dgconfig s
that is easy to exploit. In particular, when we perturb a point $p \mapsto p'$, then for any nearby simplex $\splxs$, we are able to check whether $\splxt = \splxjoin{p'}{\splxs}$ is a forbidden configuration simply by examining the distance between $p'$, and the circumsphere for $\splxs$; we do not have to check this for all the vertices of $\splxt$.

\subsubsection{The perturbation setting}

Although we have described \dgconfig s and the Hoop Lemma in terms of
a \pueset\ $\ppts$, rather than a \ueset\ $\pts$, the notation is
simply a convenience for our current purposes. Until now we have not
supposed that $\ppts$ was a perturbation of a \ueset. We now review
the results in this setting.

If we constrain $\flakebnd$ and constrain $\delta_0$ relative to
$\flakebnd$, we observe that, for a \dgconfig\ that appears in a
perturbed point set, the properties expressed in the Hoop
\Lemref{lem:hoop}
can be simplified and, by using \Lemref{lem:perturb.Delone}, they can
be expressed in terms of the parameters of the original \ueset:
\begin{lem}[Hoop Lemma for perturbed points]
  \label{lem:clean.bad.hoop}
  Suppose $\ppts$ is a perturbation of the \ueset\ $\pts$, and $\splxt
  \subset \ppts$ is a \dgconfig.
  If
  \begin{equation*}
    \delta_0 \leq \flakebnd^{m+1} \quad \text{and} \quad \flakebnd
    \leq \frac{2\sparseconst^2 }{75},    
 \end{equation*}
  then $\splxt$ has the \hoop\ property, with
  \begin{equation*}
    \hoopbnd = 2 \left( \frac{16}{\sparseconst} \right)^3 \flakebnd.
  \end{equation*}
  Also, for all $p \in \splxt$,
  \begin{equation*}
    \circrad{\splxtp} < 2\samconst.    
  \end{equation*}
\end{lem}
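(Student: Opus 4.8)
The plan is to obtain \Lemref{lem:clean.bad.hoop} as a corollary of the Hoop \Lemref{lem:hoop}, simply by substituting the relations between the sampling parameters of $\ppts$ and of $\pts$ furnished by \Lemref{lem:perturb.Delone}, namely $\psamconst \leq \tfrac{5}{4}\samconst$ and $\psparseconst \geq \tfrac{2}{5}\sparseconst$, together with $\pertbnd \leq \tfrac{\sparseconst}{4}$ and the standing assumption $\sparseconst \leq 1$. By \Lemref{lem:perturb.Delone}, $\ppts$ is a \pueset, so $\splxt$ is a \dgconfig\ in $\ppts$ in the sense of \Defref{def:forbidden.config} with respect to the parameters $\psparseconst,\psamconst$, and \Lemref{lem:hoop} applies once we check its hypothesis $\delta_0 \leq \tfrac{\psparseconst^2 \flakebnd^m}{6}$.

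The first step is this verification. From $\psparseconst \geq \tfrac{2}{5}\sparseconst$ we get $\tfrac{\psparseconst^2 \flakebnd^m}{6} \geq \tfrac{2\sparseconst^2 \flakebnd^m}{75}$, while the hypotheses $\delta_0 \leq \flakebnd^{m+1}$ and $\flakebnd \leq \tfrac{2\sparseconst^2}{75}$ give $\delta_0 \leq \tfrac{2\sparseconst^2}{75}\flakebnd^{m}$; hence $\delta_0 \leq \tfrac{\psparseconst^2 \flakebnd^m}{6}$, as needed. The same chain shows $\delta_0 \leq \flakebnd \leq \tfrac14$, so the auxiliary hypotheses $\delta_0 \leq \tfrac14$ (as in \Lemref{lem:close.to.some.sphere}) and $\delta_0 \leq 1$ (as in \Defref{def:forbidden.config}) are automatic here.

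The second step is to simplify the two conclusions of \Lemref{lem:hoop}. For the hoop constant, $\hoopbnd = \left(\tfrac{6}{\psparseconst}\right)^3\!\left(\flakebnd + \tfrac{\delta_0}{\flakebnd^m}\right)$; since $\delta_0 \leq \flakebnd^{m+1}$, the second factor is at most $2\flakebnd$, and since $\psparseconst \geq \tfrac{2}{5}\sparseconst$ we have $\tfrac{6}{\psparseconst} \leq \tfrac{15}{\sparseconst} \leq \tfrac{16}{\sparseconst}$, so $\hoopbnd \leq 2\left(\tfrac{16}{\sparseconst}\right)^3 \flakebnd$; enlarging the hoop constant only weakens the hoop property, so $\splxt$ indeed has the hoop property with this value. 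For the circumradius, \Lemref{lem:hoop} gives $\circrad{\splxtp} < \left(1 + \tfrac{3\delta_0}{\psparseconst \flakebnd^m}\right)\psamconst$; using $\delta_0 \leq \flakebnd^{m+1}$, then $\psparseconst \geq \tfrac{2}{5}\sparseconst$, then $\flakebnd \leq \tfrac{2\sparseconst^2}{75}$, we bound $\tfrac{3\delta_0}{\psparseconst \flakebnd^m} \leq \tfrac{3\flakebnd}{\psparseconst} \leq \tfrac{15\flakebnd}{2\sparseconst} \leq \tfrac{\sparseconst}{5} \leq \tfrac{1}{5}$, so the bracket is at most $\tfrac{6}{5}$, and with $\psamconst \leq \tfrac{5}{4}\samconst$ we obtain $\circrad{\splxtp} < \tfrac{6}{5}\cdot\tfrac{5}{4}\samconst = \tfrac{3}{2}\samconst < 2\samconst$.

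There is no real obstacle here; the statement is pure bookkeeping on top of \Lemref{lem:hoop}. The only points that require care are tracking the direction of each inequality from \Lemref{lem:perturb.Delone} (a lower bound on $\psparseconst$ becomes an upper bound on $1/\psparseconst$), using $\sparseconst \leq 1$ at the final step to absorb the stray factor of $\sparseconst$, and noting explicitly that the hoop property is preserved under enlarging $\hoopbnd$, which is what legitimises rounding $\tfrac{15}{\sparseconst}$ up to $\tfrac{16}{\sparseconst}$.
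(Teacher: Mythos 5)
Your proposal is correct and follows exactly the route the paper intends: the paper gives no explicit proof of \Lemref{lem:clean.bad.hoop}, presenting it as an immediate consequence of the Hoop \Lemref{lem:hoop} combined with the parameter translations $\psparseconst \geq \tfrac{2}{5}\sparseconst$ and $\psamconst \leq \tfrac{5}{4}\samconst$ from \Lemref{lem:perturb.Delone}, which is precisely your bookkeeping. All of your inequality chains check out, including the verification of the hypothesis $\delta_0 \leq \psparseconst^2\flakebnd^m/6$ and the final bound $\circrad{\splxtp} < \tfrac{3}{2}\samconst < 2\samconst$.
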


For convenience, we restate the consequences of
\Lemref{lem:non.protection.implies.forbidden} in terms of the
algorithmically convenient property guaranteed by
\Lemref{lem:clean.bad.hoop}, together with a couple of other
properties that are a direct consequence of 
\Defref{def:forbidden.config}. In particular, if $\splxt$ is a
\dgconfig, then it follows directly from \Defref{def:forbidden.config}
that
\begin{equation*}
  \longedge{\splxt} < (2 + \delta_0\psparseconst)\psamconst.
\end{equation*}
From this observation, and \Lemref{lem:perturb.Delone},
we obtain the diameter bound \ref{hyp:diam.bnd} below. 
\begin{thm}[Properties of forbidden configurations]
  \label{thm:no.hoops.implies.protection}
  \label{thm:prop.forbid.cfg} 
  Suppose that $\pts \subset \rem$ is a \ueset\ and that $\ppts$ is a
  perturbation of $\pts$ such that there is no simplex $\splxt \subset
  \ppts$ that satisfies \emph{all} of the following properties:
  \begin{enumerate}[label={$\mathcal{P}$\arabic*}]
  \item \label{hyp:clean.hoop.bnd}
    Simplex $\splxt$ has the \hoop\ property, with
    $\hoopbnd = 2 \left( \frac{16}{\sparseconst} \right)^3 \flakebnd$.
  \item \label{hyp:clean.facet.rad.bnd}
    For all $p \in \splxt$,
    $\circrad{\splxtp} < 2\samconst$.
  \item \label{hyp:diam.bnd} 
   $\longedge{\splxt} < \frac{5}{2}(1 + \frac{1}{2}\delta_0\sparseconst)\samconst$.
  \item \label{hyp:good.facets}
    Every facet of $\splxt$ is $\flakebnd$-good.
  \end{enumerate}

  If
  \begin{equation}
    \label{eq:dnobnd}
    \delta_0 \leq \flakebnd^{m+1} \quad \text{and} \quad \flakebnd
    \leq \frac{2\sparseconst^2 }{75},
  \end{equation}
  then $\ppts$ contains no \dgconfig s, and thus all the $m$-simplices
  in $\rdelppts$ are $\flakebnd$-good and $\delta$-protected, with
  $\delta = \delta_0 \psparseconst \psamconst$.
\end{thm}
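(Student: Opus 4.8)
The plan is to argue by contraposition: I will show that the existence of a \dgconfig\ in $\ppts$ forces the existence of a simplex satisfying \emph{all} of \ref{hyp:clean.hoop.bnd}--\ref{hyp:good.facets}, contradicting the hypothesis of the theorem. So suppose $\splxt \subset \ppts$ is a \dgconfig. The bounds \eqref{eq:dnobnd} are exactly the hypotheses of \Lemref{lem:clean.bad.hoop}, so that lemma applies to $\splxt$ and immediately yields properties \ref{hyp:clean.hoop.bnd} and \ref{hyp:clean.facet.rad.bnd} verbatim: $\splxt$ has the \hoop\ property with $\hoopbnd = 2(16/\sparseconst)^3\flakebnd$, and $\circrad{\splxtp} < 2\samconst$ for every $p \in \splxt$. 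Property \ref{hyp:good.facets} is immediate from \Defref{def:forbidden.config}: a \dgconfig\ is a $\flakebnd$-flake, and by \Defref{def:flake} every proper face of a $\flakebnd$-flake is $\flakebnd$-good, so in particular every facet is.

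The only point that needs a short computation is the diameter bound \ref{hyp:diam.bnd}. Write $\splxt = \splxjoin{p}{\splxtp}$, where by \Defref{def:forbidden.config} the facet $\splxtp$ has a circumscribing ball $\ballEm{C}{R}$ with $R < \psamconst$ and $\abs{\distEm{p}{C} - R} \leq \delta$, $\delta = \delta_0\psparseconst\psamconst$. Every vertex $v$ of $\splxtp$ lies at distance exactly $R$ from $C$, so any edge of $\splxtp$ has length at most $2R$, while any edge incident to $p$ has length at most $\distEm{p}{C} + R \leq 2R + \delta$. Hence $\longedge{\splxt} < 2\psamconst + \delta = (2 + \delta_0\psparseconst)\psamconst$. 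Now I invoke \Lemref{lem:perturb.Delone}: $\psamconst \leq \frac{5}{4}\samconst$, and $\psparseconst \leq \sparseconst$ since $\pertbnd \geq 0$; therefore $\longedge{\splxt} < \frac{5}{4}(2 + \delta_0\sparseconst)\samconst = \frac{5}{2}(1 + \frac{1}{2}\delta_0\sparseconst)\samconst$, which is exactly \ref{hyp:diam.bnd}. Thus $\splxt$ satisfies all four properties, contradicting the assumption of the theorem, and we conclude that $\ppts$ contains no \dgconfig.

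Finally, with no \dgconfig\ present in $\ppts$, the contrapositive of \Lemref{lem:non.protection.implies.forbidden} says that no $m$-simplex of $\rdelppts$ can fail to be $\delta$-protected (with $\delta = \delta_0\psparseconst\psamconst$), and none can fail to be $\flakebnd$-good; equivalently, every $m$-simplex of $\rdelppts$ is both $\flakebnd$-good and $\delta$-protected, which is the remaining assertion. I do not expect a genuine obstacle here, as the theorem is essentially a repackaging of \Lemref{lem:clean.bad.hoop} and \Lemref{lem:non.protection.implies.forbidden}; the only care needed is in the chain of inequalities for \ref{hyp:diam.bnd}, and in particular in observing $\psparseconst \leq \sparseconst$ so that the diameter bound can be expressed cleanly in terms of the original parameters $\sparseconst$ and $\samconst$.
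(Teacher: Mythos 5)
Your proposal is correct and follows exactly the route the paper intends: the paper offers no formal proof of this theorem, only the remark that \ref{hyp:clean.hoop.bnd} and \ref{hyp:clean.facet.rad.bnd} come from \Lemref{lem:clean.bad.hoop}, \ref{hyp:good.facets} from the flake definition, and \ref{hyp:diam.bnd} from the bound $\longedge{\splxt} < (2+\delta_0\psparseconst)\psamconst$ combined with \Lemref{lem:perturb.Delone}, after which the conclusion is the contrapositive of \Lemref{lem:non.protection.implies.forbidden}. Your filling-in of the diameter computation (edges in $\splxtp$ bounded by $2R$, edges at $p$ by $2R+\delta$, then $\psamconst \leq \tfrac{5}{4}\samconst$ and $\psparseconst \leq \sparseconst$) is exactly the intended argument and is correct.
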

In order to eliminate forbidden configurations, we only need to ensure that any one of the four properties of \Thmref{thm:prop.forbid.cfg} cannot occur in any simplex. As discussed in \Remref{rem:bad.good.facet} below, the algorithm does not exploit \ref{hyp:good.facets}, and only partially exploits \ref{hyp:clean.facet.rad.bnd}.
%

\newcommand{\tp}{\tilde{p}}
\newcommand{\Staup}{\circsphere{\opface{p}{\splxt}}}
\newcommand{\tdno}{\tilde{\delta}_0}

\subsection{Proof of the Hoop Lemma}
\label{sec:hoop.lem.proof}

In this appendix we demonstrate the Hoop Lemma~\ref{lem:hoop}, which
can be stated in full detail as:
\begin{lem}[Hoop Lemma]
  \label{lem:shell.lem}
  Let $\splxt$ be a $(k+1)$-dimensional \dgconfig\ in a \pueset. If
  \begin{equation*}
    \delta_0 \leq \frac{\psparseconst^2 \flakebnd^k}{6},
  \end{equation*}
  then for any $p \in \splxt$
  \begin{equation*}
    \distEm{p}{\circsphere{\splxtp}} \leq
    \left( \frac{84}{\psparseconst^3} \frac{\delta_0}{\flakebnd^k}
      + \frac{216}{\psparseconst^3} \flakebnd \right)
    \circrad{\splxtp}, 
  \end{equation*}
  and
  \begin{equation*}
    \circrad{\splxtp} < \left(1 +  \frac{3\delta_0}{\psparseconst
        \flakebnd^k} \right) \psamconst. 
  \end{equation*}
\end{lem}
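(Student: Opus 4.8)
The second inequality is nearly immediate. The Symmetry \Lemref{lem:close.to.some.sphere}---its hypothesis $\delta_0 \leq \frac{1}{4}$ being implied by $\delta_0 \leq \frac{\psparseconst^2\flakebnd^k}{6}$ since $\psparseconst,\flakebnd \leq 1$---produces, for the prescribed vertex $p$, a ball $B = \ballEm{C_p}{R_p}$ circumscribing $\splxtp$ with $R_p \leq (1 + \frac{3\delta_0}{\psparseconst\flakebnd^k})R$ and $\distEm{p}{\bdry{B}} \leq \frac{6\delta_0}{\psparseconst^2\flakebnd^k}\shortedge{\splxtp}$. Being a proper face of the flake $\splxt$, the simplex $\splxtp$ is non-degenerate, so $\circrad{\splxtp} \leq R_p$; together with $R < \psamconst$ from \Defref{def:forbidden.config} this is the claimed circumradius bound. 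The same hypothesis gives $\frac{3\delta_0}{\psparseconst\flakebnd^k} \leq \frac{\psparseconst}{2} \leq \frac{1}{2}$, so $R_p < \frac{3}{2}\psamconst$; and $\circrad{\splxtp} \geq \frac{1}{2}\shortedge{\splxtp} \geq \frac{1}{2}\psparseconst\psamconst$, the first step because every edge of $\splxtp$ is a chord of its smallest circumscribing sphere, the second because $\ppts$ is separated. These are the only facts about the ambient scale I will need.

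For the first inequality the plan is a projection argument in the spirit of the torus lemma. Let $p_*$ and $C'$ be the orthogonal projections of $p$ and of $C_p$ onto $\affhull{\splxtp}$. As $C_p$ is equidistant from the vertices of the non-degenerate $\splxtp$, so is $C'$, which forces $C' = \circcentre{\splxtp}$; setting $h := \distEm{C_p}{\circcentre{\splxtp}}$ we then have the Pythagorean relation $R_p^2 = \circrad{\splxtp}^2 + h^2$ (so $h \leq R_p$) together with the identification $\bdry{B} \cap \affhull{\splxtp} = \circsphere{\splxtp}$. The flake altitude bound \Lemref{lem:flake.alt.bnd} gives $\distEm{p}{p_*} = \splxalt{p}{\splxt} < \frac{2\longedge{\splxt}^2\flakebnd}{\shortedge{\splxt}}$. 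Writing $p - C_p$ as its tangential part $p_* - \circcentre{\splxtp}$ plus its normal part $(p - p_*) - (C_p - \circcentre{\splxtp})$ (of length at most $\splxalt{p}{\splxt} + h$) and subtracting the two resulting Pythagorean expansions, I expect to obtain
\begin{equation*}
  \abs{\distEm{p_*}{\circcentre{\splxtp}}^2 - \circrad{\splxtp}^2}
  \leq \abs{\distEm{p}{C_p}^2 - R_p^2} + 2\splxalt{p}{\splxt}h + \splxalt{p}{\splxt}^2
  \leq \varepsilon(2R_p + \varepsilon) + 2\splxalt{p}{\splxt}R_p + \splxalt{p}{\splxt}^2,
\end{equation*}
where $\varepsilon := \distEm{p}{\bdry{B}}$, using $\distEm{p}{\bdry{B}} = \abs{\distEm{p}{C_p} - R_p}$ and $h \leq R_p$. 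Dividing by $\distEm{p_*}{\circcentre{\splxtp}} + \circrad{\splxtp} \geq \circrad{\splxtp}$ controls $\distEm{p_*}{\circsphere{\splxtp}}$, and since $p - p_*$ is orthogonal to $\affhull{\splxtp}$ the nearest point of $\circsphere{\splxtp}$ to $p$ lies above the nearest point to $p_*$, so
\begin{equation*}
  \distEm{p}{\circsphere{\splxtp}}
  = \sqrt{\splxalt{p}{\splxt}^2 + \distEm{p_*}{\circsphere{\splxtp}}^2}
  \leq \splxalt{p}{\splxt} + \frac{\varepsilon(2R_p + \varepsilon) + 2\splxalt{p}{\splxt}R_p + \splxalt{p}{\splxt}^2}{\circrad{\splxtp}}.
\end{equation*}

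The last step is to insert the scale estimates and divide through by $\circrad{\splxtp}$. Besides $R_p < \frac{3}{2}\psamconst$ and $\circrad{\splxtp} \geq \frac{1}{2}\psparseconst\psamconst$, I will use $\shortedge{\splxtp} \leq 2\circrad{\splxtp}$, whence $\varepsilon \leq \frac{12\delta_0}{\psparseconst^2\flakebnd^k}\circrad{\splxtp}$. By \Defref{def:forbidden.config} some vertex $q$ of $\splxt$ and some ball $\ballEm{C}{R}$ with $R < \psamconst$ satisfy: the vertices of $\splxs := \opface{q}{\splxt}$ lie on $\bdry{\ballEm{C}{R}}$ and $\abs{\distEm{q}{C} - R} \leq \delta_0\psparseconst\psamconst$. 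So each edge of $\splxt = \splxjoin{q}{\splxs}$ either lies in $\splxs$ (length $< 2R < 2\psamconst$) or joins $q$ to a vertex of $\splxs$ (length $\leq \distEm{q}{C} + R < (2 + \delta_0\psparseconst)\psamconst$), giving $\longedge{\splxt} < (2 + \delta_0\psparseconst)\psamconst$ while $\shortedge{\splxt} \geq \psparseconst\psamconst$ by separation; hence $\splxalt{p}{\splxt} < \frac{2(2 + \delta_0\psparseconst)^2}{\psparseconst}\flakebnd\psamconst$ and $\splxalt{p}{\splxt}^2 \leq \splxalt{p}{\splxt}\longedge{\splxt}$. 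Substituting all of this and using $\delta_0 \leq \frac{\psparseconst^2\flakebnd^k}{6}$ to absorb the two quadratic terms (the $\varepsilon^2$ term into the $\varepsilon R_p$ term, the $\splxalt{p}{\splxt}^2$ term into the $\splxalt{p}{\splxt}R_p$ term) should collapse the bound to $(\frac{84}{\psparseconst^3}\frac{\delta_0}{\flakebnd^k} + \frac{216}{\psparseconst^3}\flakebnd)\circrad{\splxtp}$, as required.

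The one genuinely delicate point is the second paragraph: verifying that $\bdry{B}$ meets $\affhull{\splxtp}$ in exactly $\circsphere{\splxtp}$---equivalently, that $\circcentre{\splxtp}$ is the foot of $C_p$ on $\affhull{\splxtp}$---and then carrying the cross-term between the two normal displacements $p - p_*$ and $C_p - \circcentre{\splxtp}$ cleanly through the squared-distance identity, which is what makes the controlled quantity $h$, rather than something arbitrary, enter the error bound. Producing the exact constants $84$ and $216$ afterwards only requires that the remaining estimates not be wasteful, and is otherwise routine.
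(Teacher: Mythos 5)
Your strategy is sound and it genuinely diverges from the paper's in its second half. The paper also begins with the Symmetry \Lemref{lem:close.to.some.sphere}, but then invokes an intermediate statement (\Lemref{lem:dist.to.circumsphere}) proved by a planar reduction: it intersects everything with the plane through $p$, the centre of $B$ and $\circcentre{\splxtp}$, and bounds the distance from $\tp \in \bdry{B}$ to the nearer intersection point $u$ of $\circsphere{\splxtp}$ with that plane via the triangle--altitude identity of \Lemref{lem:tri.alt.bnd}. Your orthogonal--decomposition/difference-of-squares computation replaces that chord argument and is, if anything, more self-contained. The supporting facts you use are all correct: $C'=\circcentre{\splxtp}$ and $\bdry{B}\cap\affhull{\splxtp}=\circsphere{\splxtp}$, the circumradius bound, $R_p<\tfrac32\psamconst$, $\circrad{\splxtp}\ge\tfrac12\psparseconst\psamconst$, and the diameter bound $\longedge{\splxt}<(2+\delta_0\psparseconst)\psamconst$.

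The one real shortfall is precisely the step you defer: the constants do not ``collapse'' to $84$ and $216$. In the paper's argument both $\varepsilon:=\distEm{p}{\bdry{B}}$ and $D:=\splxalt{p}{\splxt}$ enter the final bound linearly, each multiplied by the single amplification factor $2R_p/\circrad{\splxtp}\le 6/\psparseconst$ (plus one standalone $\varepsilon$); this is what yields $\tfrac{84}{\psparseconst^3}\tfrac{\delta_0}{\flakebnd^k}+\tfrac{216}{\psparseconst^3}\flakebnd$. Your expansion instead produces a standalone $D$ \emph{in addition to} $2DR_p/\circrad{\splxtp}$ (the cross term $2Dh$ with only $h\le R_p$ available), as well as the quadratic terms $\varepsilon^2/\circrad{\splxtp}$ and $D^2/\circrad{\splxtp}$. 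Pushing these through your own estimates, the coefficient of $D$ becomes $1+2R_p/\circrad{\splxtp}+D/\circrad{\splxtp}\le 13/\psparseconst$ rather than the paper's effective $6/\psparseconst$, giving roughly $\tfrac{470}{\psparseconst^3}\flakebnd$ in place of $\tfrac{216}{\psparseconst^3}\flakebnd$; the $\varepsilon$-terms similarly give about $\tfrac{96}{\psparseconst^3}\tfrac{\delta_0}{\flakebnd^k}$ rather than $\tfrac{84}{\psparseconst^3}\tfrac{\delta_0}{\flakebnd^k}$, since absorbing $\varepsilon^2$ costs more than nothing. So as written your route proves the Hoop Lemma only with these larger constants --- qualitatively the same statement, and downstream it would merely force a slightly larger absolute constant in \Lemref{lem:clean.bad.hoop} --- but to obtain the constants as stated you would need the paper's chord argument, or a sharper treatment of the cross term $2Dh$.
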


Recall that \Lemref{lem:close.to.some.sphere} demonstrated that any
vertex in a \dgconfig\ lies close to a circumscribing sphere for its
opposing face.  We now use the fact that a \dgconfig\ is a flake to
bound the distance from a vertex to the circumsphere of its opposing
face. We employ the following characterisation of the altitudes of a
triangle:
\begin{lem}[Triangle altitude bound]
  \label{lem:tri.alt.bnd}
  For any non-degenerate triangle $\zeta = \simplex{\tp,u,v}$, we have
  \begin{equation*}
    \splxalt{\tp}{\zeta} = \frac{\norm{\tp-v}\norm{\tp-u}}{2\circrad{\zeta}}.
  \end{equation*}
\end{lem}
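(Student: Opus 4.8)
The plan is to prove the "Triangle altitude bound" \Lemref{lem:tri.alt.bnd}, the classical identity relating the altitude of a triangle from a vertex $\tp$ to the two adjacent side lengths and the circumradius. I would present a short self-contained computation based on the standard relation between area, side lengths, and circumradius.

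First I would write the area of the triangle $\zeta = \simplex{\tp,u,v}$ in two ways. On one hand, using the altitude from $\tp$ onto the opposite side $\edge{u}{v}$, the area is $\tfrac{1}{2}\norm{u-v}\,\splxalt{\tp}{\zeta}$, since by definition $\splxalt{\tp}{\zeta} = \distEm{\tp}{\affhull{\{u,v\}}}$ is the height and $\norm{u-v}$ the base. On the other hand, the classical formula $\text{Area} = \frac{abc}{4R}$ for a triangle with side lengths $a,b,c$ and circumradius $R$ gives area $\frac{\norm{u-v}\,\norm{\tp-u}\,\norm{\tp-v}}{4\circrad{\zeta}}$. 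Equating the two expressions and cancelling the common factor $\tfrac12\norm{u-v}$ (nonzero since $\zeta$ is non-degenerate) yields
\begin{equation*}
  \splxalt{\tp}{\zeta} = \frac{\norm{\tp-u}\,\norm{\tp-v}}{2\circrad{\zeta}},
\end{equation*}
which is the claim. If I wanted to avoid invoking $\text{Area}=abc/(4R)$ as a black box, I would instead argue via the inscribed angle theorem: letting $\theta$ be the angle of $\zeta$ at the vertex $u$, we have $\splxalt{\tp}{\zeta} = \norm{\tp-u}\sin\theta$ (dropping the perpendicular from $\tp$ to the line through $u$ and $v$), and the law of sines gives $\frac{\norm{\tp-v}}{\sin\theta} = 2\circrad{\zeta}$, so $\sin\theta = \frac{\norm{\tp-v}}{2\circrad{\zeta}}$; substituting gives the stated identity. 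Either route is a few lines.

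There is essentially no obstacle here — this is a textbook planar fact and the only point requiring a word of care is the non-degeneracy hypothesis, which guarantees $\circrad{\zeta}$ is finite and nonzero and that $\norm{u-v}>0$ so the cancellation is legitimate. I would simply pick whichever of the two derivations above keeps the exposition shortest (likely the law-of-sines / inscribed-angle version, since it does not presuppose the area-circumradius formula), and state the one-line conclusion.
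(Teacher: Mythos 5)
Your proposal is correct, and the law-of-sines/inscribed-angle variant you say you would prefer is exactly the paper's proof: it sets $\alpha = \angle \tp u v$, notes $\sin\alpha = \norm{\tp - v}/(2\circrad{\zeta})$, and concludes from $\splxalt{\tp}{\zeta} = \norm{\tp-u}\sin\alpha$. No further comment needed.
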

\begin{proof}
 Let $\alpha = \angle \tp uv$ and observe that
  \begin{equation*}
    \sin \alpha = \frac{\norm{\tp-v}}{2\circrad{\zeta}}.
  \end{equation*}
  Since $\splxalt{\tp}{\zeta} = \norm{\tp-u} \sin \alpha$, the result
  follows. 
\end{proof}

\begin{lem}[Distance to circumsphere]
  \label{lem:dist.to.circumsphere}
  Suppose $\splxt$ is a $\flakebnd$-flake with
  $\longedge{\splxt} \leq 3\psamconst$ and $\shortedge{\splxt} \geq
  \psparseconst \psamconst$.  If there exists a $p \in \splxt$ and a
  ball $B = \ballEm{C}{R}$ circumscribing $\splxtp$, with $R <
  \frac{3}{2}\psamconst$, and such that $\distEm{p}{\bdry{B}} \leq
  \tdno \shortedge{\splxtp}$ for some $\tdno \geq 0$, then
  $\distEm{p}{\Staup} \leq \hoopbnd \circrad{\splxtp}$, with
  \begin{equation*}
    \hoopbnd = \frac{14}{\psparseconst} \tdno +
    \frac{216}{\psparseconst^3} \flakebnd. 
  \end{equation*}
\end{lem}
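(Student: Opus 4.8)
The plan is to split $\distEm{p}{\circsphere{\splxtp}}$ into a ``radial'' and an ``altitude'' contribution and bound each. Write $H=\affhull{\splxtp}$, $c=\circcentre{\splxtp}$, $r=\circrad{\splxtp}$, let $p_*$ be the orthogonal projection of $p$ onto $H$, and set $a=\distEm{p}{p_*}$ and $s=\distEm{p_*}{c}$. Since $\splxtp$ is a proper face of the $\flakebnd$-flake $\splxt$ it is $\flakebnd$-good, hence non-degenerate, so $c$, $r$, and $\circsphere{\splxtp}$ are well defined and $a=\splxalt{p}{\splxt}$. Every vertex of $\splxtp$ lies in $H$ at distance $r$ from $c$, so $\distEm{p}{x}^{2}=\distEm{p_*}{x}^{2}+a^{2}$ for all $x\in\circsphere{\splxtp}$, and minimising over $\circsphere{\splxtp}$ gives $\distEm{p}{\circsphere{\splxtp}}=\sqrt{(s-r)^{2}+a^{2}}$. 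It therefore suffices to bound $a$ and $|s-r|$.

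The altitude is handled at once by the flake hypothesis: by \Lemref{lem:flake.alt.bnd}, with $\longedge{\splxt}\le 3\psamconst$ and $\shortedge{\splxt}\ge\psparseconst\psamconst$, we get $a=\splxalt{p}{\splxt}<\frac{2\longedge{\splxt}^{2}\flakebnd}{\shortedge{\splxt}}\le\frac{18\flakebnd}{\psparseconst}\psamconst$. The substantive step is bounding $|s-r|$, and this is where the hypothesis that $p$ lies near the circumscribing sphere $\bdry{B}$ enters — it is the analogue of the Torus Lemma step. Since the vertices of $\splxtp$ lie on $\bdry{B}$ and in $H$, the centre $C$ of $B$ projects orthogonally onto $c$, so with $h=\distEm{C}{H}$ we have $R^{2}=r^{2}+h^{2}$. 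Because $\distEm{p}{\bdry{B}}\le\tdno\shortedge{\splxtp}$ and $\distEm{p}{p_*}=a$, the point $p_*$ lies within $g:=a+\tdno\shortedge{\splxtp}$ of $\bdry{B}$; as $\distEm{p_*}{C}=\sqrt{s^{2}+h^{2}}$, this says $|\sqrt{s^{2}+h^{2}}-R|\le g$, and clearing the conjugate (using $h^{2}=R^{2}-r^{2}$) yields $|s^{2}-r^{2}|\le g(2R+g)$, hence $|s-r|\le g(2R+g)/(s+r)$. One then feeds in the quantitative ingredients: $\shortedge{\splxtp}\le 2\circrad{\splxtp}=2r$ (an edge is a chord of the circumsphere), $r\ge\tfrac12\shortedge{\splxtp}\ge\tfrac{\psparseconst}{2}\psamconst$, and $h<R<\tfrac32\psamconst$. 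I expect \Lemref{lem:tri.alt.bnd}, applied to the triangle spanned by $p$, its nearest point on $\bdry{B}$, and a vertex of $\splxtp$, to be the clean device for carrying out this estimate and recovering the stated $\hoopbnd=\frac{14}{\psparseconst}\tdno+\frac{216}{\psparseconst^{3}}\flakebnd$.

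The main difficulty is not conceptual but quantitative, and two points demand care. First, in passing from $|s^{2}-r^{2}|$ to $|s-r|$ the ratio $R/r$ — which can be as large as $\approx 3/\psparseconst$ — is absorbed as a multiplicative factor, so the estimates must be arranged so this amplification stays inside the budget $\hoopbnd r$; exploiting the two-sided bound $r^{2}-g(2R-g)\le s^{2}\le r^{2}+g(2R+g)$ rather than merely $s+r\ge r$ buys back the needed factor of two. Second, one must dispose of the degenerate possibility that $p_*$ lies close to the circumcentre $c$, in which case $s+r$ is no larger than $r$: here one observes that $p$ being within $\tdno\shortedge{\splxtp}$ of $\bdry{B}$, together with $r\ge\tfrac{\psparseconst}{2}\psamconst$ and $R<\tfrac32\psamconst$, forces $s$ to be comparable to $r$ unless the parameters already make $\hoopbnd$ large enough for the crude estimate $\distEm{p}{\circsphere{\splxtp}}\le(s+r)+a$, with $s+r$ controlled by $\longedge{\splxt}+r$, to suffice. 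Once these cases are settled the remainder is routine triangle-inequality bookkeeping.
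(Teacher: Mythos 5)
Your geometric setup is correct and is a genuinely different route from the paper's: you split $\distEm{p}{\Staup}=\sqrt{(s-r)^2+a^2}$ into the altitude $a=\splxalt{p}{\splxt}$ (controlled by \Lemref{lem:flake.alt.bnd}, exactly as in the paper) and the in-plane radial defect $|s-r|$, which you control by the power-of-a-point identity $|s^2-r^2|\le g(2R+g)$ with $g=a+\tdno\shortedge{\splxtp}$. All the orthogonality relations you use ($R^2=r^2+h^2$, $\distEm{p_*}{C}=\sqrt{s^2+h^2}$) are valid. The problem is the step you yourself flag: recovering $|s-r|$ from $|s^2-r^2|$ by dividing by $s+r$. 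The stated constants have essentially no slack, and your route incurs two losses that the lemma's hypotheses do not let you absorb. First, the quadratic term $g^2/(s+r)$ can only be folded into a bound that is \emph{linear} in $\tdno$ and $\flakebnd$ if these are assumed small (roughly $\tdno\lesssim 1$ and $\flakebnd\lesssim\psparseconst/48$); the lemma assumes only $\tdno\ge 0$ and $\flakebnd\le 1$. Second, when $s<r$ the denominator $s+r$ is strictly less than $2r$, and the two-sided bound only guarantees $s\ge r/2$ (say) under further smallness assumptions; the resulting $4/3$-type penalty, combined with the separately added $a$, leaves essentially no room under $\frac{216}{\psparseconst^3}\flakebnd$, and the fully degenerate case ($s$ near $0$, forced only when $\tdno$ is large) needs its own argument that you sketch but do not close.

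The paper sidesteps all of this. It projects $p$ radially from $C$ onto $\bdry{B}$ to get $\tp$ (so $\distEm{p}{\tp}=\distEm{p}{\bdry{B}}\le\tdno\shortedge{\splxtp}$, with the case $p=C$ handled separately), works in the $2$-plane $Q$ through $p$, $C$, $\circcentre{\splxtp}$, and bounds the chord from $\tp$ to the nearer of the two points $u,v$ of $\Staup\cap Q$ by the \emph{exact} identity of \Lemref{lem:tri.alt.bnd} applied to $\simplex{\tp,u,v}$: all three of these points lie on the great circle $\bdry{B}\cap Q$ of radius exactly $R$, so $\distEm{\tp}{u}=2R\,\splxalt{\tp}{\simplex{\tp,u,v}}/\distEm{\tp}{v}$ with $\distEm{\tp}{v}\ge\circrad{\splxtp}\ge\tfrac12\shortedge{\splxtp}$ and $\splxalt{\tp}{\simplex{\tp,u,v}}\le\distEm{\tp}{p}+\splxalt{p}{\splxt}$. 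Every estimate is one-sided and unconditional, and the computation lands exactly on $\frac{14}{\psparseconst}\tdno+\frac{216}{\psparseconst^3}\flakebnd$. Note also that the triangle you propose for \Lemref{lem:tri.alt.bnd} (``$p$, its nearest point on $\bdry{B}$, and a vertex of $\splxtp$'') is not the right one: its vertices are not concyclic on a circle of known radius, so the altitude formula gives you nothing there. To repair your argument you would either need to add smallness hypotheses on $\tdno$ and $\flakebnd$ (acceptable for the downstream application, where $\tdno\le 1$, but it changes the lemma) or switch to the paper's chord estimate.
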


\begin{figure}
  \begin{center}
    \includegraphics[width=.6\columnwidth]{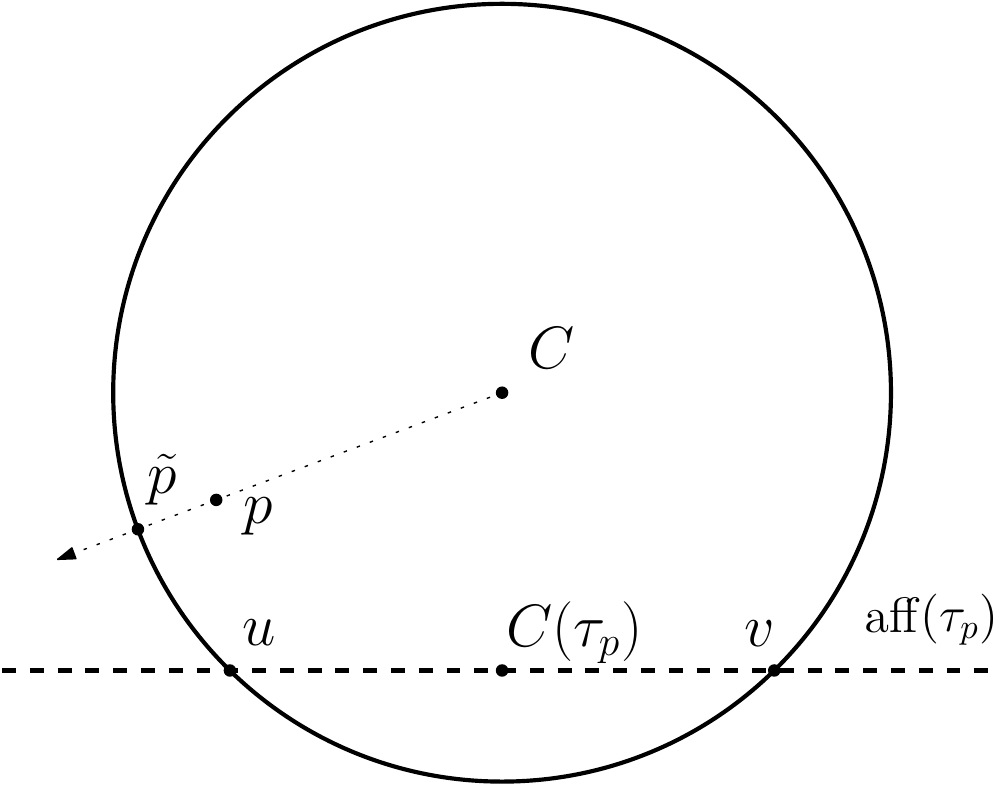} 
 \end{center}
 \caption{Diagram for \Lemref{lem:dist.to.circumsphere}.  }
 \label{fig:bound.to.Staup}
\end{figure}
\begin{proof}
  We are given that $p$ lies close to a circumscribing sphere $\bdry{B}$ for $\splxtp$. The fact that $\splxt$ is a flake implies that $p$ must also lie close to the affine hull of $\splxtp$. The result follows since $\Staup = \bdry{B} \cap \affhull{\splxtp}$. We quantify this by reducing the problem to two dimensions.

  Consider the plane $Q$ defined by $p$, $C$, and
  $\circcentre{\splxtp}$; if two of these three points coincide, we
  may choose $Q$ to be any plane which contains the three points. If
  $p=C$, then we have $\distEm{p}{\Staup} = R = \distEm{p}{\bdry{B}}
  \leq \tdno \shortedge{\splxtp} \leq \tdno 2 \circrad{\splxtp}$ which
  immediately implies the result. Thus suppose $p \neq C$. Let $\tp$ be
  the point of intersection of the ray from $C$ through $p$ with
  $\bdry{B}$, let $u \in \Staup \cap Q$ be the point closest to $\tp$,
  and let $v \in \Staup \cap Q$ be the farther point, as shown in
  \Figref{fig:bound.to.Staup}. Then
  \begin{equation}
    \label{eq:bound.to.Staup}
    \distEm{p}{u} \leq \distEm{p}{\tp} + \distEm{\tp}{u}.
  \end{equation}

  If $\tp = u \in \Staup$, then the result follows immediately, so we
  suppose these points to be distinct, and we consider the triangle
  $\zeta = \simplex{\tp,u,v}$. Since $\circrad{\zeta} = R$,
  \Lemref{lem:tri.alt.bnd} yields
  \begin{equation*}
    \distEm{\tp}{u} = \frac{2 R
      \splxalt{\tp}{\zeta}}{\distEm{\tp}{v}}.
  \end{equation*}
  Using our definition of $u$ we find
  \begin{equation*}
    \distEm{\tp}{v} \geq \frac{1}{2}\distEm{u}{v} = 
    \circrad{\splxtp} \geq \frac{1}{2} \shortedge{\splxtp}.   
  \end{equation*}
  The altitude is bounded by
  \begin{equation*}
    \begin{split}
      \splxalt{\tp}{\zeta}
      &\leq \distEm{\tp}{p} +
      \distEm{p}{\affhull{\seg{u}{v}}}\\
      &= \distEm{\tp}{p} + \splxalt{p}{\splxt}.
    \end{split}
  \end{equation*}
  Indeed, if $p^*$ is the orthogonal projection of $p$ into
  $\affhull{\splxtp}$, then $\seg{p}{p^*}$ is parallel to
  $\seg{C}{\circcentre{\splxtp}}$, because $\affhull{\splxtp}$ has
  codimension one in $\affhull{\splxt}$. It follows that $p^* \in
  Q \cap \affhull{\splxtp} = \affhull{\seg{u}{v}}$.

  By \Lemref{lem:flake.alt.bnd} 
  and the fact that $\longedge{\splxt} < 3\psamconst$, we have
  \begin{equation*}
    \splxalt{p}{\splxt} \leq  \frac{2\longedge{\splxt}^2
      \flakebnd} { \shortedge{\splxt}}
    \leq  \frac{6 \longedge{\splxt}
      \flakebnd} { \psparseconst }
    \leq  \frac{18 \flakebnd \shortedge{\splxtp}} { \psparseconst^2 }
  \end{equation*}
  Finally, recalling that $\distEm{p}{\tp} \leq \tdno
  \shortedge{\splxtp}$, and $R < \frac{3}{2}\psamconst$,
  we return to \Eqnref{eq:bound.to.Staup} and expand it using all of the
  subsequent displayed observations:
  \begin{equation}
    \label{eq:final.bound.to.Sw}
    \begin{split}
      \distEm{p}{u}
      &\leq \tdno \shortedge{\splxtp}
      + \frac{2 R \splxalt{\tp}{\zeta}}{\distEm{\tp}{v}}\\
      &\leq \tdno \shortedge{\splxtp} + \frac{4R}{\shortedge{\splxtp}}
      \left(\tdno \shortedge{\splxtp}
        + \frac{18 \flakebnd}{\psparseconst^2}
        \shortedge{\splxtp} \right)\\ 
      &< \tdno 2\circrad{\splxtp} + \frac{12}{\psparseconst}
      \left(\tdno  + \frac{18 \flakebnd}{\psparseconst^2}  \right)
      \circrad{\splxtp}\\
      &\leq \left( \frac{14}{\psparseconst} \tdno
        + \frac{216}{\psparseconst^3} \flakebnd  \right)
      \circrad{\splxtp}. 
    \end{split}
  \end{equation}
\end{proof}

\noindent
\textbf{Proof of \Lemref{lem:shell.lem}.}
  Using \Lemref{lem:close.to.some.sphere}, we  apply
  \Lemref{lem:dist.to.circumsphere} with
  \begin{equation*}
    \tdno =  \frac{6\delta_0}{\psparseconst^2 \flakebnd^k}.
  \end{equation*}
\noproof

%

\section{Algorithm}
\label{sec:algorithm}

In this section we present the algorithm. We start, in
\Secref{sec:main.result}, by announcing the guarantees of the
algorithm as our main theorem. 

\subsection{Main result}
\label{sec:main.result}

The goal and primary contribution of this paper is the presentation of
the perturbation \Algref{alg1}, and the demonstration of its
guarantees. 

In our analysis we employ three positive parameters, $\delta_0$,
$\flakebnd$, and $\pertbnd$, which are logically distinct. The
parameter $\delta_0$ specifies the protection that will be guaranteed
for the Delaunay $m$-simplices in $\rdelppts$, and $\flakebnd$ is a
bound on the quality of these simplices.  The analysis places an upper
bound on $\delta_0$ with respect to $\flakebnd$, and so for the
statement of our results, and the description of \Algref{alg1}, it is
convenient to combine the parameters by setting $\delta_0$ to be equal
to this upper bound:
\begin{equation*}
  \delta_0 = \flakebnd^{m+1}.
\end{equation*}
Our primary interest is in $\delta_0$, but it is more convenient to
express the results in terms of $\flakebnd$.  The analysis also places
an upper bound on $\flakebnd$ with respect to the parameter $\pertbnd$
that governs the amount of perturbation the input points may be
subjected to. We fix $\flakebnd$ with respect to this upper
bound, and let $\pertbnd$ be the only free parameter for the
algorithm.

The following theorem is demonstrated in \Secref{sec:analysis} and is
stated in full generality as \Thmref{thm:raw.main}:
\begin{thm}[Main result]
  \label{thm-main-theorem-of-the-paper}
  Taking as input a \ueset\ $\pts \subset \rem$, where
  $\sparseconst$ and $\samconst$ are known, and a positive parameter
  $\pertbnd \leq \frac{\sparseconst}{4}$,
  \Algref{alg1} produces a \pueset\
  $\ppts$ that is a $\pertbnd\samconst$-perturbation of $\pts$ such that all
  the Delaunay $m$-simplices in $\rdelppts$ are $\flakebnd$-good and
  $\delta$-protected, with
 \begin{equation*}
   \flakebnd = \frac{\pertbnd}{C},
   \quad \text{and} \quad 
   \delta = \flakebnd^{m+1}\psparseconst \psamconst,
  \end{equation*}
  where $C = \left( \frac{2}{\sparseconst} \right)^{3m^2 + 5m + 17}$,
  and $\psparseconst = \frac{\sparseconst-2\pertbnd}{1+ \pertbnd}$, and
  $\psamconst = (1+\pertbnd)\samconst$.

  The expected time complexity is
  \begin{equation*}
    \bigo{m}(\card{\pts})^2 + \left(\frac{2}{\sparseconst}\right)^{\bigo{m^2}}\card{\pts},
  \end{equation*}
  where the constant in the big-$O$ notation is an absolute constant.
\end{thm}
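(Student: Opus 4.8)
The plan is to treat the two assertions of \Thmref{thm-main-theorem-of-the-paper} separately: correctness (the output $\ppts$ contains no \dgconfig, hence the claimed genericity and quality), which rests on \Thmref{thm:prop.forbid.cfg} and \Lemref{lem:clean.bad.hoop}; and the expected time bound, which rests on a volume estimate for the set of ``bad'' candidate positions at each step. For correctness, observe first that \Algref{alg1} moves each input point by at most $\pertbnd\samconst$, so $\ppts$ is a $\pertbnd\samconst$-perturbation of $\pts$ and \Lemref{lem:perturb.Delone} supplies the stated $\psparseconst,\psamconst$; since $\delta_0=\flakebnd^{m+1}$ and $\flakebnd=\pertbnd/C$ with $C$ enormous and $\pertbnd\le\sparseconst/4$, the hypotheses \Eqnref{eq:dnobnd} of \Thmref{thm:prop.forbid.cfg} hold, so it suffices to show that no simplex $\splxt\subset\ppts$ simultaneously has all of the properties \ref{hyp:clean.hoop.bnd}--\ref{hyp:good.facets}; we arrange for property \ref{hyp:clean.hoop.bnd}, the hoop property with $\hoopbnd=2(16/\sparseconst)^3\flakebnd$, to fail. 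Combining this with \Lemref{lem:non.protection.implies.forbidden} then shows that every Delaunay $m$-simplex in $\rdelppts$ is $\flakebnd$-good and $\delta$-protected with $\delta=\flakebnd^{m+1}\psparseconst\psamconst$, and $\flakebnd$-goodness gives the $\flakebnd^m$-thickness.

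The heart of the correctness proof is the exclusion of property \ref{hyp:clean.hoop.bnd}. Suppose for contradiction that a \dgconfig\ $\splxt\subset\ppts$ survives; by \Lemref{lem:clean.bad.hoop} it has the hoop property and $\circrad{\opface{p}{\splxt}}<2\samconst$ for every $p\in\splxt$. Let $p'$ be the vertex of $\splxt$ that \Algref{alg1} perturbs last. At the moment $p'$ is placed, every other vertex of $\splxt$ is already in (essentially) its final position, so $\opface{p'}{\splxt}$ is a thick simplex of circumradius $O(\samconst)$ not containing $p'$, present in the current working set; extending it to an $m$-simplex $\splxs\supseteq\opface{p'}{\splxt}$ among the nearby points of that working set (possible since the working set is $O(\samconst)$-dense, with the circumradius kept $O(\samconst)$) and using $\circsphere{\opface{p'}{\splxt}}\subseteq\circsphere{\splxs}$, we see that $\splxs$ is exactly the kind of $m$-simplex the acceptance test for $p'$ inspects. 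The test is designed so that $\distEm{p'}{\circsphere{\splxs}}$ exceeds the threshold $\hoopbnd\circrad{\opface{p'}{\splxt}}$ even after the still-unprocessed vertices are later moved by up to $\pertbnd\samconst$, the margin absorbing the resulting $O(\pertbnd\,\mathrm{poly}(1/\sparseconst)\,\samconst)$ displacement of the circumsphere, and this contradicts the hoop property. I expect this to be the main obstacle: aligning the dimension of the simplices the algorithm tests with the possibly lower-dimensional opposing facets of \dgconfigs, and calibrating the test threshold so that the hoop bound of \Lemref{lem:clean.bad.hoop} survives both this extension and the future perturbation of vertices not yet processed, all consistently with the value of $C$; once that alignment is fixed, the rest is bookkeeping.

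For the probabilistic bound, fix a point $p$ about to be perturbed; its candidate positions range uniformly over $\ballEm{p}{\pertbnd\samconst}$. A candidate $p'$ is rejected only if it lies in one of finitely many annular shells, one around $\circsphere{\splxs}$ for each relevant simplex $\splxs$ in the current working set, each of thickness $O(\hoopbnd\circrad{\splxs})=O(\flakebnd\samconst)$. Since $\ppts$ is a \pueset\ with $\psparseconst\ge\tfrac25\sparseconst$, a standard packing argument bounds the number of working-set points within $O(\samconst)$ of $p$ by $(2/\sparseconst)^{O(m)}$, hence the number of relevant $m$-simplices by $(2/\sparseconst)^{O(m^2)}$, and each shell has volume at most $O(\flakebnd)\,\samconst^m$. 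Summing, the rejected volume is at most $(2/\sparseconst)^{O(m^2)}\,O(\flakebnd)\,\samconst^m$, while $\vol(\ballEm{p}{\pertbnd\samconst})=\ballvolm(\pertbnd\samconst)^m$; the choice $\flakebnd=\pertbnd/C$ with $C=(2/\sparseconst)^{3m^2+5m+17}$ is calibrated precisely so that this ratio is at most $\tfrac12$. Hence each perturbation step succeeds with probability at least $\tfrac12$, independently of $\card{\pts}$ and of $\samconst$, so the expected number of trials per point is $O(1)$ and the algorithm terminates after $\card{\pts}$ successful steps.

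For the running time, locating, for each point, the $(2/\sparseconst)^{O(m)}$ working-set points within $O(\samconst)$ costs $\bigo{m}$ per candidate pair by brute force, i.e. $\bigo{m}\card{\pts}^2$ in total (the quadratic term). Per perturbed point we then enumerate the $(2/\sparseconst)^{O(m^2)}$ relevant simplices and their circumspheres in time $(2/\sparseconst)^{O(m^2)}$ and perform an expected $O(1)$ trials, each testing membership of a candidate in $(2/\sparseconst)^{O(m^2)}$ shells; re-inspection after a neighbouring point is perturbed stays within the same local budget. Summed over the $\card{\pts}$ points this contributes $(2/\sparseconst)^{\bigo{m^2}}\card{\pts}$, giving the claimed $\bigo{m}\card{\pts}^2+(2/\sparseconst)^{\bigo{m^2}}\card{\pts}$ with absolute constants in the exponents; the remaining work is the explicit per-step arithmetic costs and the accounting that keeps all local work within $(2/\sparseconst)^{O(m^2)}$.
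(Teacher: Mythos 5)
Your overall architecture is the paper's: forbidden configurations excluded via the hoop property as the testable certificate, a packing-plus-shell-volume argument for the per-point success probability, and the same two-term complexity accounting. The probabilistic and complexity halves of your sketch line up with Lemmas~\ref{lem-analysis-size-of-the-np-complex}, \ref{lem:shell.vol.bnd}, \ref{lem:good.perturbation.existence} and \ref{lem:time.space.complexity} (modulo a dropped factor of $\pertbnd^{m-1}$ in your per-shell volume, which reappears when you take the ratio, so the conclusion $\gamma\le\tfrac12$ stands).

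The genuine gap is in the step you yourself flag as ``the main obstacle'', and your proposed fix points in the wrong direction. Having chosen $p'=p'_i$ to be the \emph{last} vertex of a surviving \dgconfig\ $\splxt$ to be perturbed, you correctly observe that every other vertex of $\splxt$ is already in its final position --- but then there are no ``still-unprocessed vertices'' of $\splxt$ whose later motion could displace the relevant circumsphere, so no absorption margin of order $O(\pertbnd\,\mathrm{poly}(1/\sparseconst)\samconst)$ is needed, and indeed none is present in the threshold $2\hoopbnd\samconst$ of Line~\ref{lin:hoopbnd}. Likewise the extension of $\opface{p'}{\splxt}$ to an $m$-simplex is unnecessary: $\mathcal{S}_{p_i}$ is the full $m$-skeleton of the neighbourhood and $\dim\opface{p'}{\splxt}\le m$, so the acceptance test examines that facet directly, in its final position, as a simplex of the working set $\pts_{i-1}\setminus\{p_i\}$. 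The paper packages this as the induction of \Lemref{lem:no.forbidden.output}: any \dgconfig\ in $\pts_i$ meeting an already-perturbed point must contain $p'_i$ (otherwise it already lay in $\pts_{i-1}$, contradicting the inductive hypothesis), hence has the form $\splxjoin{p'_i}{\splxs}$ with $\splxs\subset\pts_{i-1}\setminus\{p_i\}$, which the good-perturbation test excludes; the \emph{symmetry} of the hoop property is precisely what guarantees that the last-perturbed vertex is itself close to the circumsphere of its opposing facet, so that checking only simplices not containing the point being perturbed suffices. Since your write-up leaves this calibration open and frames it as a future-perturbation issue that does not in fact arise, the correctness half of the argument is not complete as proposed.
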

Although we require knowledge of two sampling parameters, $\sparseconst$, and $\samconst$, in practice one is easily deduced from the other by finding the minimum distance between two  points in $\pts$, and using the relation $\distEm{p}{q} \geq \sparseconst \samconst$.

We recall that by itself $\delta_0 = \flakebnd^{m+1}$ guarantees a lower
thickness bound proportional to $\delta_0^2 = \flakebnd^{2m+2}$ on the
Delaunay $m$-simplices \cite[Theorem 3.11]{boissonnat2013stab1}, but
this is much smaller than the $\flakebnd^m$ thickness guaranteed by
\Thmref{thm-main-theorem-of-the-paper}. If we were to set $\delta_0 =
0$ we would have a ``sliver exudation'' algorithm which would not
guarantee any $\delta$-genericity, but 
$\flakebnd$ would only increase by a factor of two.

\subsection{Algorithm overview}

We present an algorithm that will perturb an input \ueset\ $\pts$ to
obtain a \pueset\ $\ppts$ which contains no \dgconfigs.  The algorithm
takes as input a finite \ueset\ $\pts = \{p_1, \ldots, p_{n} \}
\subset \rem$. The output is obtained after $n$ iterations, such that
at the $i^{\text{th}}$ iteration a perturbation $\pts_i = \{p'_1,
\ldots, p'_i, p_{i+1}, \ldots, p_n \}$ is produced by perturbing the
point $p_i \mapsto p'_i$ in a way that ensures that there are no
\dgconfig s incident to $p'_i$ in $\pts_i$. Thus we have a sequence of
perturbations
\begin{equation*}
  \pts = \pts_0 \to \pts_1 \to \cdots \to \pts_n,
\end{equation*}
such that for all $i \in [1, \ldots, n]$, $\pts_i$ is a perturbation
of $\pts$ as well as of $\pts_{i-1}$, and $\pts_{i-1} \setminus
\{p_i\} = \pts_i \setminus \{p'_i\}$. Thus all the sets $\pts_i$ are
\pueset s.

At the $i^{th}$ iteration of the algorithm, all the points $p_{1}$ to
$p_{i-1}$, have already been perturbed, and the points $p_{i}$ to
$p_{n}$ have not yet been perturbed.  Using a uniform distribution, we
pick a random point $x \in B(p_{i}, \pertbnd\epsilon)$.
\begin{de}
  \label{def:good.perturbation}
  We say that $x$ is a \defn{good perturbation} of $p_i$ if for all
  simplices $\splxs \in \pts_{i-1} \setminus \{ p_i \}$, the
  simplex $\splxjoin{x}{\splxs}$ is not a \dgconfig.
\end{de}
If $x$ is a good perturbation of $p_i$, we let $p'_i = x$ and go on to
the next iteration, otherwise we choose a new random point from
$\ballEm{p_i}{\pertbnd\samconst}$.  The algorithm for determining if
$x$ is a good perturbation is discussed in
\Secref{sec:good.perturbations}, and the existence of good
perturbations is established in \Secref{sec:analysis}. The essential
ingredient is the \hoop\ property, and especially the symmetric nature
of this property.

The algorithm is shown in pseudocode in Algorithm~\ref{alg1}. Since a
good perturbation $p \mapsto p'$ ensures that there are no \dgconfig s
incident to $p'$ in the current point set, and in particular that no
new \dgconfig s are created, the output of the algorithm cannot
contain any \dgconfigs:
\begin{lem}
  \label{lem:no.forbidden.output}
  After the $i^{\text{th}}$ iteration of the algorithm, there are no
  \dgconfig s in $\cpltcplx{\pts_i}$ incident to $p'_j \in \pts_i$
  for any $j \in [1, \ldots, i]$. In particular, when the $n^{\text{th}}$
  iteration is completed, $\pts_n$ contains no \dgconfigs.
\end{lem}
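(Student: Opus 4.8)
The plan is a straightforward induction on $i$, the one substantive point being the observation that whether a simplex $\splxt$ is a \dgconfig\ is an \emph{intrinsic} property of $\splxt$: it depends only on the geometry of $\splxt$ together with the fixed parameters $\flakebnd$, $\delta_0$, $\psparseconst$, $\psamconst$, and not on which ambient point set happens to contain $\splxt$. Indeed, \Defref{def:forbidden.config} imposes no condition that references points outside $\splxt$, and by \Lemref{lem:perturb.Delone} every $\pts_i$ is a \pueset\ with the \emph{same} $\psparseconst$ and $\psamconst$ (these are determined by $\sparseconst$, $\samconst$ and $\pertbnd$ alone). Hence the statement ``$\splxt$ is not a \dgconfig'' has the same meaning whether $\splxt$ is regarded as a simplex of $\cpltcplx{\pts_{i-1}}$ or of $\cpltcplx{\pts_i}$. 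I would also record at the outset the trivial facts that a \dgconfig\ has dimension at least $2$ (it is a flake) and that, by construction, $\pts_{i-1}\setminus\{p_i\} = \pts_i\setminus\{p'_i\}$.

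For the base case $i=0$ there is nothing to prove. For the inductive step, assume that after iteration $i-1$ no simplex of $\cpltcplx{\pts_{i-1}}$ incident to some $p'_j$ with $j\le i-1$ is a \dgconfig. At iteration $i$ the algorithm replaces $p_i$ by a good perturbation $p'_i$ of $p_i$ with respect to $\pts_{i-1}$. Let $\splxt\in\cpltcplx{\pts_i}$ be incident to $p'_j$ for some $1\le j\le i$; I will show $\splxt$ is not a \dgconfig, splitting on whether $p'_i\in\splxt$. If $p'_i\notin\splxt$, then $\splxt\subseteq\pts_i\setminus\{p'_i\}=\pts_{i-1}\setminus\{p_i\}$, so $\splxt\in\cpltcplx{\pts_{i-1}}$ with $p_i\notin\splxt$; the incident vertex $p'_j$ then satisfies $j\le i-1$, so the inductive hypothesis and the intrinsic-property observation give that $\splxt$ is not a \dgconfig. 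If $p'_i\in\splxt$, then either $\dim\splxt\le 1$, in which case $\splxt$ is trivially not a \dgconfig, or $\splxs:=\opface{p'_i}{\splxt}$ is a nonempty simplex contained in $\pts_i\setminus\{p'_i\}=\pts_{i-1}\setminus\{p_i\}$, and \Defref{def:good.perturbation} applied to the good perturbation $p'_i$ gives that $\splxjoin{p'_i}{\splxs}=\splxt$ is not a \dgconfig. This closes the induction. For the final assertion, apply the result with $i=n$: every nonempty simplex of $\cpltcplx{\pts_n}$ contains a vertex, which is some $p'_j$ with $j\le n$, so no simplex of $\cpltcplx{\pts_n}$ is a \dgconfig.

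The argument has no real obstacle; the only thing needing care — and hence the ``hard part,'' though it is mild — is the bookkeeping around the identity $\pts_{i-1}\setminus\{p_i\}=\pts_i\setminus\{p'_i\}$ combined with the fact, justified via \Lemref{lem:perturb.Delone}, that \dgconfig-ness is independent of the surrounding point set. Once these are in place the proof is the routine case analysis above.
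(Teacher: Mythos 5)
Your proof is correct and follows essentially the same induction as the paper: the key step in both is that a forbidden configuration in $\pts_i$ meeting some $p'_j$ with $j<i$ would have to contain $p'_i$ (else it already lies in $\pts_{i-1}$, contradicting the inductive hypothesis), and then it contradicts the definition of a good perturbation. Your explicit remarks — that being a \dgconfig\ is intrinsic to the simplex given the fixed parameters $\psparseconst,\psamconst$, and that the case $j=i$ is handled directly by \Defref{def:good.perturbation} — are points the paper leaves implicit, but they do not change the argument.
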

\begin{proof}
  By the definition of a good perturbation, there is no \dgconfig\ 
  incident to $p_1 \in \pts_1$ after the first iteration has
  completed. Assume that at the $i^{\text{th}}$ iteration there are no
  \dgconfig s in $\pts_{i-1}$ incident to any $p'_j \in \pts_{i-1}$ for
  all $j < i$. At the completion of the $i^{\text{th}}$ iteration
  $\pts_{i-1} \setminus \{p_i\} = \pts_i \setminus \{p'_i\}$, so if
  there is a \dgconfig\ $\splxt \subset \pts_i$ that includes a $p'_j$
  with $j < i$, then $\splxt$ must also include $p'_i$, since
  otherwise we would have $\splxt \subset \pts_{i-1}$. But this
  contradicts the fact that $p'_i$ was chosen to be a good
  perturbation of $p_i$, thus establishing the claim.
\end{proof}

\begin{algorithm}
  \caption{Randomized perturbation algorithm}
  \label{alg1}
  \begin{algorithmic}
    \STATE{\rm Input:}\quad \ueset\ $\pts_0 = \{p_{1}, \dots, p_{n}\}
    \subset \rem$ and $\pertbnd$ 
   \FOR{$i = 1$ \TO $n$}
        \STATE ${\rm Flag} \gets 0$
        \STATE $x \gets p_i$
        \WHILE{${\rm Flag} \neq 1$}
            \IF{\texttt{good\_perturbation}$(x,p_i,\pts_{i-1})$}
                \STATE $p'_{i} \gets x$
                \STATE $\pts_i \gets ( \pts_{i-1} \setminus \{p_i\} ) \cup \{p'_i\}$
                \STATE ${\rm Flag} \gets 1$
            \ELSE
            \STATE // \texttt{random\_point}$(B(p_{i}, \pertbnd\epsilon))$
            outputs a point from the uniform distribution on $B(p_{i},
            \pertbnd\epsilon)$  
            \STATE $x \gets$ \texttt{random\_point}$(B(p_{i},
            \pertbnd\epsilon))$ 
            \ENDIF
        \ENDWHILE
    \ENDFOR
    \STATE // $\pts_{n} = \{ p_{1}', \, \dots, \, p_{n}'\}$, a
    $\delta$-generic \pueset, as described in
    \Thmref{thm-main-theorem-of-the-paper} 
    \STATE{\rm Output:}\quad $\pts_{n}$
  \end{algorithmic}
\end{algorithm}

\subsection{Implementation of good perturbations}
\label{sec:good.perturbations}

The geometric computations of the algorithm occur in the
\texttt{good\_perturbation} procedure, which is outlined in
\Algref{alg:good.perturbation}.  
%
The check for a good perturbation is a local operation.
We first establish a bound on the number of possible distinct
\dgconfig s incident to $p'$ in a perturbation $\ppts$ of
$\pts$. The first step is to bound the radius of a ball centred on $p$
that contains all such \dgconfigs:
\begin{lem}
  \label{lem:bound.forbid.rad}
  Suppose $\ppts$ is a perturbation of $\pts$, and $\splxt \subset
  \ppts$ is a \dgconfig, with $\delta_0 \leq \frac{2}{5}$.  If $p \in
  \pts$ and $p \mapsto p' \in \splxt$, then all the vertices of
  $\splxt$ originate from elements of $\pts$ contained in the ball
  $\ballEm{p}{r}$, with $r = (3 + \frac{\sparseconst}{2})\samconst$.
\end{lem}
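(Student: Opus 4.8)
The plan is to bound, in terms of $\samconst$, how far any vertex of a forbidden configuration $\splxt$ can be from the perturbed point $p'$, and then pull this back to $\pts$ via the perturbation bound $\pertconst \le \pertbnd\samconst \le \frac{\sparseconst}{4}\samconst$. Let $p' \in \splxt$ and let $q' \in \splxt$ be any other vertex, with preimages $p, q \in \pts$ respectively. The key distance bound is the diameter bound on a forbidden configuration: since $\splxt$ is a \dgconfig\ with certifying ball of radius $R < \psamconst$ and $\abs{\distEm{p'}{C} - R} \le \delta$, every vertex of $\splxt$ lies within $R + \delta$ of $C$, so $\longedge{\splxt} \le 2R + \delta < (2 + \delta_0\psparseconst)\psamconst \le \frac{5}{2}(1 + \frac{1}{2}\delta_0\sparseconst)\samconst$ using \Lemref{lem:perturb.Delone} and $\delta_0 \le \frac{2}{5}$ (this is exactly property \ref{hyp:diam.bnd}).

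From here I would assemble the triangle inequality $\distEm{p}{q} \le \distEm{p}{p'} + \distEm{p'}{q'} + \distEm{q'}{q} \le \pertconst + \longedge{\splxt} + \pertconst$. Using $\pertconst \le \frac{\sparseconst}{4}\samconst \le \frac{1}{4}\samconst$, $\psamconst \le \frac{5}{4}\samconst$, and the diameter bound above, this gives $\distEm{p}{q} < \frac{1}{2}\samconst + (2 + \delta_0\psparseconst)\psamconst$. The remaining task is purely a matter of checking that this quantity is at most $r = (3 + \frac{\sparseconst}{2})\samconst$; since $\delta_0 \le \frac{2}{5}$ and $\psparseconst \le \sparseconst \le 1$, one has $(2+\delta_0\psparseconst)\psamconst \le (2 + \frac{2}{5})\cdot\frac{5}{4}\samconst = 3\samconst$, and adding $\frac12\samconst$ stays comfortably below $(3 + \frac{\sparseconst}{2})\samconst$ as long as $\sparseconst \ge 1$ — so I would instead be slightly more careful and absorb the slack into the explicit constant $r$ given in the statement, which the looser estimates $\pertconst \le \frac{1}{4}\samconst$ and $\psamconst \le \frac{5}{4}\samconst$ should accommodate. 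Since $q'$ was an arbitrary vertex of $\splxt$ and $q = \pert^{-1}(q')$ its preimage in $\pts$, this shows every vertex of $\splxt$ originates from a point of $\pts$ lying in $\ballEm{p}{r}$.

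The only mild obstacle is bookkeeping the constants so that the final bound matches the stated $r = (3 + \frac{\sparseconst}{2})\samconst$ rather than merely \emph{some} $O(\samconst)$ radius; the diameter bound \ref{hyp:diam.bnd} together with $\psamconst \le \frac{5}{4}\samconst$ does the heavy lifting, and no geometric insight beyond the triangle inequality is needed.
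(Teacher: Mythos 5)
Your proof is correct and follows essentially the same route as the paper's: the triangle inequality through $p'$ and $q'$, the diameter bound (Property~\ref{hyp:diam.bnd}) on the forbidden configuration, and the perturbation bound. The one loose end is the final constant check, which you leave slightly open; it closes cleanly if you keep the $\sparseconst$-dependence instead of bounding $\sparseconst\le 1$ too early. Namely, $\distEm{p}{p'}+\distEm{q}{q'}\le 2\pertbnd\samconst\le\frac{\sparseconst}{2}\samconst$ accounts exactly for the $\frac{\sparseconst}{2}\samconst$ term in $r$, while $\longedge{\splxt}<\frac{5}{2}\samconst+\frac{5}{4}\delta_0\sparseconst\samconst\le 3\samconst$ because $\delta_0\sparseconst\le\frac{2}{5}$; summing gives the stated $r$. (A minor caution on your re-derivation of \ref{hyp:diam.bnd}: the bound $2R+\delta$, rather than $2R+2\delta$, comes from the fact that the vertices of $\splxtp$ lie exactly on $\bdry{B}$ and only $p$ may be off the sphere by $\delta$; the weaker $2R+2\delta$ would not leave enough room in $r$.)
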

\begin{proof}
  Suppose $q' \in \splxt$ originates from $q \in \pts$. Then, using
  Property~\ref{hyp:diam.bnd} and the perturbation
  bound~\eqref{eq:pertbnd}, the triangle inequality yields
  \begin{equation*}
   \begin{split}
    \distEm{p}{q}
    &\leq \longedge{\splxt} + \distEm{p}{p'} + \distEm{q}{q'}\\
    &< \frac{5}{2}(1 +  \frac{1}{2}\delta_0 \sparseconst)\samconst + 2\pertbnd
    \samconst\\ 
     &\leq (3 + \frac{1}{2}\sparseconst)\samconst.
    \end{split}
 \end{equation*}
\end{proof}

We exploit \Lemref{lem:bound.forbid.rad} to define the local
structures in which we check for \dgconfigs.  For any
point $p \in \pts$, let
\begin{equation*}
  \mathcal{N}_{p} = \ballEm{p}{(3 + \frac{\sparseconst}{2})\samconst}
  \cap \pts \setminus \{p\}, 
\end{equation*}
and define $\mathcal{S}_p$ to be the $m$-skeleton
of the complete complex on $\mathcal{N}_p$. In other words,
$\mathcal{S}_p$ consists of all $j$-simplices with vertices in
$\mathcal{N}_p$ and $j \leq m$. 

We let $\mathcal{S}_{p_i}(\pts_{i-1})$ denote the simplices in
$\pts_{i-1}$ that correspond to simplices in $\mathcal{S}_{p_i}$.  If
$\splxs' \in \pts_{i-1} \setminus \{p_i\}$ is such that it forms a
\dgconfig\ with $x \in \ballEm{p_i}{\pertbnd \samconst}$, then
$\splxs'$ belongs to $\mathcal{S}_{p_i}(\pts_{i-1})$.

\begin{algorithm}[ht] 
  \caption{\texttt{good\_perturbation}$(x,p, \ppts)$}
  \label{alg:good.perturbation}
  \begin{algorithmic}[1]
    \STATE // Test if $x$ is a good perturbation of $p$ in $\ppts$.
    \STATE // $\mathcal{S}_p(\ppts)$ is defined in
    \Secref{sec:good.perturbations}, and $\hoopbnd$
    is defined by Property~\ref{hyp:clean.hoop.bnd} of
    \Thmref{thm:prop.forbid.cfg}. 
    \STATE compute $\mathcal{S}_p(\ppts)$
    \FOR {each $\splxs \in \mathcal{S}_p(\ppts)$}
    \IF {$\circrad{\splxs} < \infty$} \label{lin:radbnd}
    \IF {
      $\abs{\distEm{x}{\circcentre{\splxs}} - \circrad{\splxs}} \leq 
        \hoopbnd 2\samconst$} \label{lin:hoopbnd}
    \RETURN \FALSE
    \ENDIF
    \ENDIF
    \ENDFOR
    \RETURN \TRUE
  \end{algorithmic}
\end{algorithm}
\Algref{alg:good.perturbation} reveals that Algorithm~\ref{alg1} uses
two geometric predicates: (1) a distance comparison (to compute
$\mathcal{S}_p(\ppts)$), and (2) the in-sphere tests implicit in
Line~\ref{lin:hoopbnd} of \Algref{alg:good.perturbation}. The
complexity of the algorithm will be discussed in
\Secref{ssec-algorithm-complexity}.
\begin{remk}
  \label{rem:bad.good.facet}
   We observe that \texttt{good\_perturbation} does not explicitly
   exploit Property~\ref{hyp:good.facets} of \dgconfig s. Also,
   Property~\ref{hyp:clean.facet.rad.bnd} is only really used for the
   bound on the right hand side of the inequality of
   Line~\ref{lin:hoopbnd}.  The volumetric analysis presented in
   \Secref{sec:analysis} counts all simplices $\splxs$ that could be a
   facet of a simplex with diameter bounded by
   Property~\ref{hyp:diam.bnd}, without consideration of the
   circumradius or thickness of $\splxs$. However, Properties
   \ref{hyp:good.facets} and \ref{hyp:clean.facet.rad.bnd} may be
   important in applications, and Line~\ref{lin:radbnd} serves as a
   reminder that they may be taken into account.
\end{remk}

%

\section{Analysis of the algorithm}
\label{sec:analysis}

In this section we will prove
\Thmref{thm-main-theorem-of-the-paper}. We begin with a calculation of
the number of simplices contained in the local complexes
$\mathcal{S}_p(\ppts)$. Then in \Secref{sec:good.perts.exist}, following a standard practice in the analysis of perturbation algorithms~\cite{edelsbrunner2000smoothing,halperin2004controlled}, we
perform the volume calculations that show the existence of good
perturbations, and the probability of finding one with a random
point. Then in \Secref{ssec-algorithm-complexity} we analyse the
complexity and precision required by the algorithm.

\begin{lem}
  \label{lem-analysis-size-of-the-np-complex}
  Let $\pts \subset \rem$ be a \ueset.  For all $p \in \pts$, we have
  $\card{\mathcal{N}_{p}} \leq E_{1} \stackrel{{\rm def}}{=}
  \left( \frac{8}{\sparseconst} \right)^m$, and
  \begin{equation*}
    \card{\mathcal{S}_{p}} < E
    \stackrel{{\rm def}}{=} 
    2\left( \frac{8}{\sparseconst} \right)^{m^2 +m}. 
  \end{equation*}
\end{lem}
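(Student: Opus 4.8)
The plan is a textbook volume-packing estimate for $\card{\mathcal{N}_p}$, followed by a crude dimension-by-dimension count for $\card{\mathcal{S}_p}$; the only thing requiring care is squeezing the constants down to the stated $\left(\tfrac{8}{\sparseconst}\right)^m$ and $2\left(\tfrac{8}{\sparseconst}\right)^{m^2+m}$, which forces us to invoke $\sparseconst\le 1$ at the right moments.

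First I would bound $\card{\mathcal{N}_p}$. Since $\pts$ is a \ueset\ it is $\sparseconst\samconst$-separated, so (as already observed in the paper) the open balls $\ballEm{q}{\sparseconst\samconst/2}$ for $q\in\mathcal{N}_p$ are pairwise disjoint. Each such ball is contained in $\ballEm{p}{(3+\sparseconst)\samconst}$: if $q\in\mathcal{N}_p$ then $\distEm{p}{q}<(3+\tfrac{\sparseconst}{2})\samconst$, and adding the radius $\sparseconst\samconst/2$ of the small ball gives $(3+\sparseconst)\samconst$. Comparing Lebesgue measures — the volume of a radius-$r$ ball in $\rem$ is $\ballvolm r^m$, and both $\ballvolm$ and $\samconst$ cancel — yields
\[
  \card{\mathcal{N}_p}\left(\frac{\sparseconst}{2}\right)^m \le (3+\sparseconst)^m,
  \qquad\text{hence}\qquad
  \card{\mathcal{N}_p} \le \left(\frac{2(3+\sparseconst)}{\sparseconst}\right)^m.
\]
Because $\sparseconst\le 1$ we have $2(3+\sparseconst)=6+2\sparseconst\le 8$, so $\card{\mathcal{N}_p}\le (8/\sparseconst)^m = E_1$.

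Next I would count $\mathcal{S}_p$. By definition it contains exactly the $j$-simplices on the vertex set $\mathcal{N}_p$ with $0\le j\le m$, so writing $n=\card{\mathcal{N}_p}\le E_1$,
\[
  \card{\mathcal{S}_p} = \sum_{j=0}^{m}\binom{n}{j+1}
  \le \sum_{i=1}^{m+1}\frac{n^i}{i!}
  \le \sum_{i=1}^{m+1} E_1^{\,i}.
\]
Since $E_1 = (8/\sparseconst)^m \ge 8 > 2$ (using $\sparseconst\le 1$ and $m\ge 1$), the geometric series is dominated by $E_1^{m+1}\sum_{k\ge 0}E_1^{-k} = E_1^{m+1}\cdot\frac{E_1}{E_1-1} < 2E_1^{m+1}$. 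Finally $E_1^{m+1} = \big((8/\sparseconst)^m\big)^{m+1} = (8/\sparseconst)^{m^2+m}$, so $\card{\mathcal{S}_p} < 2(8/\sparseconst)^{m^2+m} = E$, as claimed.

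There is no genuine obstacle: the argument is a standard packing bound plus a loose binomial count. The main (minor) point of attention is the bookkeeping of constants — in particular noting that the disjoint radius-$\sparseconst\samconst/2$ balls are strictly contained in the enlarged ball, and applying $\sparseconst\le 1$ both to collapse $6+2\sparseconst$ into $8$ and to guarantee $E_1\ge 2$ so that the geometric series contributes only the factor $2$.
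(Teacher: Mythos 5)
Your proof is correct and follows essentially the same route as the paper: the same packing argument with disjoint balls of radius $\sparseconst\samconst/2$ inside the enlarged ball of radius $(3+\sparseconst)\samconst$, followed by the same geometric-series bound $\sum_{i=1}^{m+1}E_1^i < 2E_1^{m+1}$. Your explicit justification that $E_1\ge 2$ (needed for the factor $2$) is a small point the paper leaves implicit, but otherwise the arguments coincide.
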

\begin{proof}
  In order to bound $\card{\mathcal{N}_p}$ we will use a packing
  argument in the ball $\ballEm{p}{(3 +
    \frac{\sparseconst}{2})\samconst}$ described in
  \Lemref{lem:bound.forbid.rad}. We extend the radius by the packing
  radius $r = \frac{\sparseconst \samconst}{2}$ of $\pts$. Thus let $R
  = (3 + \sparseconst)\epsilon$. It follows then that for any $p \in
  \pts$
  \begin{equation*}
    \card{\mathcal{N}_{p}} \leq  \left(\frac{R}{r}\right)^{m} =
    \left( \frac{2}{\sparseconst} \left( 3 + \sparseconst \right) \right)^m
    \leq \left( \frac{8}{\sparseconst} \right)^m = E_1.
  \end{equation*}
  This implies that for all $p\in \pts$,
  \begin{equation*}
  \card{\mathcal {S}_{p}} \leq \sum_{j=1}^{m+1} E_{1}^{j} <
  2E_1^{m+1} \leq  2\left( \frac{8}{\sparseconst} \right)^{m^2 +m} =   E.
  \end{equation*}
\end{proof}

\subsection{Existence of good perturbations}
\label{sec:good.perts.exist}

Recall that for any simplex $\splxs$ with $\circrad{\splxs} < \infty$
the circumsphere $\circsphere{\splxs}$ is contained in the diametric
sphere $\diasphere{\splxs}$. Thus if $\dist{x}{\diasphere{\splxs}} >
\hoopbnd \circrad{\splxs}$, then $\dist{x}{\circsphere{\splxs}} >
\hoopbnd \circrad{\splxs}$, and $\splxt = \splxjoin{x}{\splxs}$ cannot
have the \hoop\ property. As discussed below, it is convenient to use
$\diasphere{\splxs}$ instead of $\circsphere{\splxs}$, and there is
little cost since these objects coincide when $\splxs$ is an
$m$-simplex, and this dominates the calculation we are about to
describe.

The \texttt{good\_perturbation} procedure uses this sufficient
criterion to filter for good perturbations. The probability of
successfully finding a good perturbation by choosing a random point is
based on a volume calculation. Specifically, exploiting Properties
\ref{hyp:clean.hoop.bnd} and \ref{hyp:clean.facet.rad.bnd} of
\dgconfig s described in \Thmref{thm:prop.forbid.cfg}, we define the
\defn{forbidden volume} $F_p(\splxs)$ for $p$ contributed by $\splxs$
as the volume occupied in the perturbation ball
$\ballEm{p}{\pertconst}$ for $p$ consisting of those points that are
within a distance $\hoopbnd 2 \samconst$ from $\diasphere{\splxs}$, as
depicted in \Figref{fig:forbidden.volume}.

\begin{figure}[ht]
  \begin{center}
    \includegraphics[width=0.6\linewidth]{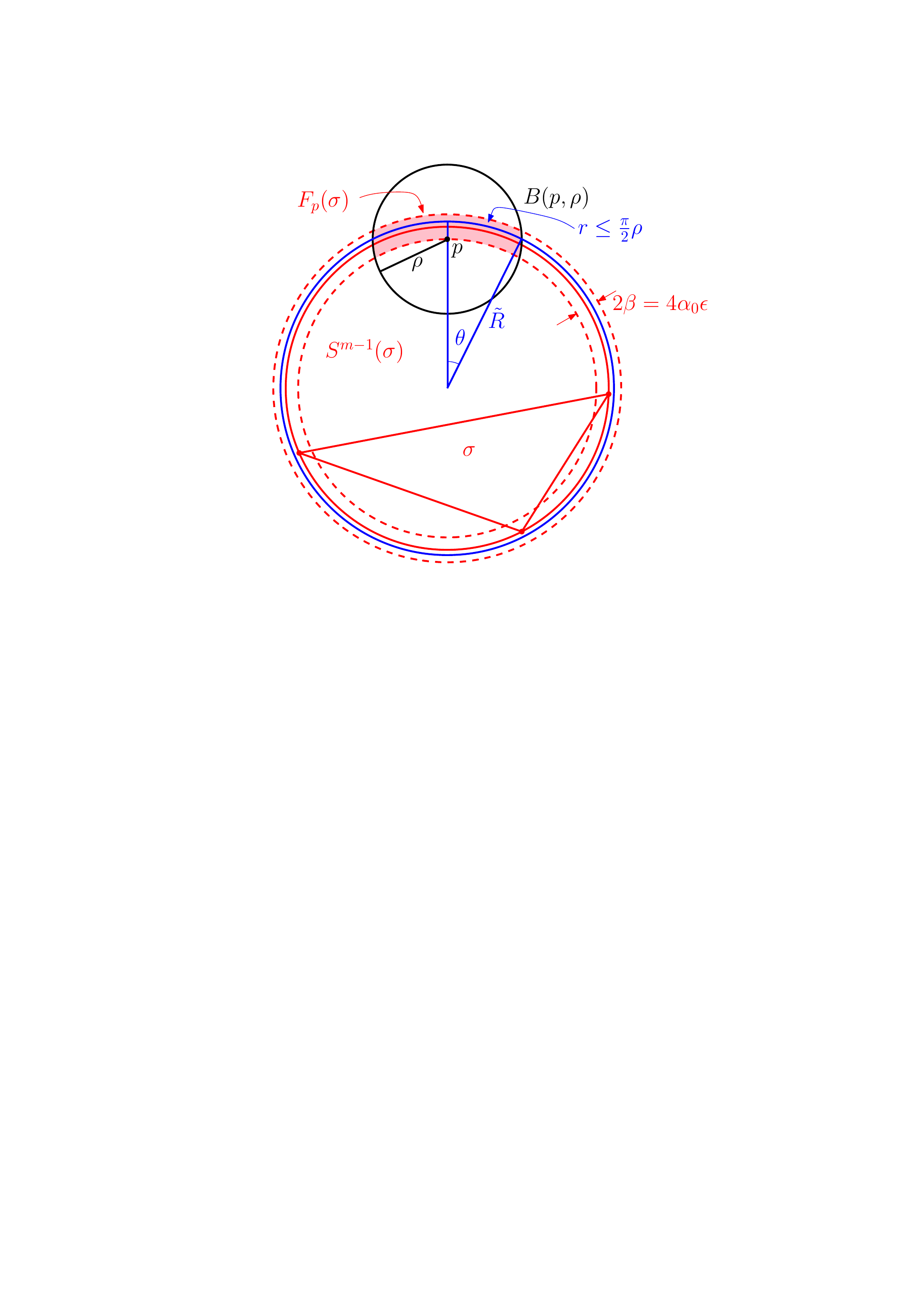} 
  \end{center}
  \caption[Forbidden volume]{The \defn{forbidden volume} $F_p(\splxs)$
    that a simplex $\splxs$ removes from the perturbation ball
    $\ballEm{p}{\pertconst}$ constitutes the points in
    $\ballEm{p}{\pertconst}$ that are within a distance $\hoopbnd
    2\samconst$ from $\diasphere{\splxs}$, as suggested by Properties
    \ref{hyp:clean.hoop.bnd} and \ref{hyp:clean.facet.rad.bnd} of
    \Thmref{thm:prop.forbid.cfg}.}
  \label{fig:forbidden.volume}
\end{figure}
 We let $\ballvol{j}$ denote the volume of a $j$-dimensional Euclidean
unit ball. The following lemma yields a bound on the forbidden volumes
$F_p(\splxs)$: 
\begin{lem}[Forbidden volume]
  \label{lem:shell.vol.bnd}
  If $S^{m-1}$ is a sphere of radius $R$ in $\rem$, then for any $p
  \in \rem$, and $\pertconst < R - \beta$, the volume
  $F_p(\pertconst, \beta, S^{m-1})$ of points contained in
  $\ballEm{p}{\pertconst}$, and within a distance $\beta$ from
  $S^{m-1}$ is bounded by
 \begin{equation*}
   F_p(\pertconst,\beta, S^{m-1}) \leq
   \ballvol{m-1}(\frac{\pi}{2}\pertconst)^{m-1} 2\beta. 
 \end{equation*}
\end{lem}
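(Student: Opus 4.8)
The plan is to reduce the $m$-dimensional estimate to an elementary one-dimensional observation about how far a point of $S^{m-1}$ can be from the line through $p$ and the sphere's center, and then bound the relevant region by a cylinder. Concretely, let $c$ be the center of $S^{m-1}$, and consider the line $L$ through $p$ and $c$. For a point $y$ within distance $\beta$ of $S^{m-1}$, let $y^*$ be its orthogonal projection onto $L$; the forbidden set is contained in $\ballEm{p}{\pertconst}$, and I will show that the component of $y - p$ orthogonal to $L$ has length at most roughly $\frac{\pi}{2}\pertconst$, while the component along $L$ ranges over an interval — so the whole forbidden region sits inside a cylinder whose cross-section is an $(m-1)$-ball of radius $\frac{\pi}{2}\pertconst$ and whose height is $2\beta$.

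The key geometric step is the bound on the ``lateral'' spread. Fix $y \in \ballEm{p}{\pertconst}$ with $\distEm{y}{S^{m-1}} \le \beta$. Let $z$ be the nearest point of $S^{m-1}$ to $y$, so $z$ lies on the ray from $c$ through $y$ and $\norm{y - z} \le \beta$; in particular $\big|\,\norm{y-c} - R\,\big| \le \beta$. Decompose $y - p = v_\parallel + v_\perp$ relative to $L$. Since $\norm{y - p} \le \pertconst$, trivially $\norm{v_\perp} \le \pertconst$, but I want the sharper factor $\frac{\pi}{2}$: the point is that $y$ lies on a sphere about $c$ of radius between $R-\beta$ and $R+\beta \ge \pertconst + \beta > \pertconst$ (using $\pertconst < R - \beta$, hence $R > \pertconst$), so moving from $p$ to $y$ along such a large sphere, the angular displacement $\theta$ seen from $c$ satisfies $\norm{v_\perp} \le (\text{radius})\sin\theta$ while the arc length is $(\text{radius})\theta$; combined with $\norm{y-p}\le\pertconst$ and the chord-vs-arc comparison $\text{chord} \ge \frac{2}{\pi}\,\text{arc}$ on a half-circle, one gets $\norm{v_\perp} \le \frac{\pi}{2}\pertconst$. (If $p$ itself is not near the sphere, one first replaces $p$ by a reference point on $L$ at the right distance from $c$; the details are the routine trigonometry I will not grind through here.) Meanwhile the ``axial'' coordinate: projecting everything onto $L$, the condition $\big|\,\norm{y-c}-R\,\big|\le\beta$ forces $y^*$ to lie in a segment of $L$ of length at most $2\beta$ (the preimage under the map $w \mapsto \norm{w-c}$ restricted to $L$ of the interval $[R-\beta, R+\beta]$, intersected with the constraint that the transverse part is small, is contained in such a segment).

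Putting these together, the forbidden set is contained in the product of an $(m-1)$-dimensional ball of radius $\frac{\pi}{2}\pertconst$ (in the hyperplane orthogonal to $L$) with a segment of length $2\beta$ along $L$. Its volume is therefore at most $\ballvol{m-1}\,(\tfrac{\pi}{2}\pertconst)^{m-1}\cdot 2\beta$, which is the claimed bound.

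\medskip\noindent\textbf{Main obstacle.} The only nonroutine point is justifying the constant $\frac{\pi}{2}$ rather than a cruder bound: this requires carefully setting up the right reference point on $L$ (so that $p$ and $y$ lie on a common large sphere about $c$) and invoking the arc-length-versus-chord inequality on that sphere, using $\pertconst < R - \beta$ to control the relevant angle. Everything else — the cylinder containment and the volume computation — is bookkeeping. I expect that the published proof handles the $\frac{\pi}{2}$ factor exactly by such an arc/chord argument, possibly phrased via a projection onto a great circle of the sphere $\bdry{\ballEm{c}{\norm{y-c}}}$.
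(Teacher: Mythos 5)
Your cylinder containment is false, and it is not something the deferred ``routine trigonometry'' can repair. The shell $\{\,y : \abs{\norm{y-c}-R}\leq \beta\,\}$ has thickness $2\beta$ only in the \emph{radial} direction, which rotates as $y$ moves over the sphere; projected onto the fixed line $L$ through $p$ and $c$, the forbidden region has axial extent $2\beta$ \emph{plus the sagitta} of the spherical cap cut out by $\ballEm{p}{\pertconst}$, and the sagitta is of order $\pertconst^2/R$, which is not controlled by $\beta$. Concretely, for $p$ on the sphere itself the cap $\{y \in S^{m-1} : \norm{y-p}\leq\pertconst\}$ already spans an interval of length $\pertconst^2/(2R)$ along $L$ (with $R=1$, $\pertconst=1/2$ this is $1/8$, versus $2\beta$ which can be made arbitrarily small). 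So no cylinder of height $2\beta$ along $L$ contains the region, and the volume computation resting on that containment collapses. A symptom of the underlying confusion: the lateral spread $\norm{v_\perp}$ is trivially at most $\pertconst$ (it is the orthogonal component of $y-p$ with $\norm{y-p}\leq\pertconst$), so the effort you invest in getting a factor $\pi/2$ there is both unnecessary for your argument and not where that factor actually comes from.

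The paper's proof uses no fixed direction. It foliates the shell by the concentric spheres $S_{\tilde R}$, $\tilde R \in [R-\beta, R+\beta]$, and integrates over the radial coordinate $\tilde R = \norm{y-c}$ (legitimate since this function has unit-norm gradient): the volume equals the integral over an interval of length $2\beta$ of the $(m-1)$-volume of the slice $\ballEm{p}{\pertconst}\cap S_{\tilde R}$. Each slice is a geodesic ball in the \emph{curved} sphere $S_{\tilde R}$; its half-angle $\theta$ at $c$ satisfies $\tilde R\sin\theta\leq\pertconst$ and $\theta<\pi/2$ (using $\pertconst<\tilde R$), so its geodesic radius is $\tilde R\theta\leq\frac{\pi}{2}\pertconst$ --- \emph{this} is where $\pi/2$ enters, converting a chord bound into an arc-length bound --- and the comparison theorem for geodesic balls in spheres bounds the slice volume by $\ballvol{m-1}(\frac{\pi}{2}\pertconst)^{m-1}$. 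Multiplying by $2\beta$ gives the lemma. To fix your proof you would have to replace the fixed axial coordinate along $L$ by this radial coordinate, which is exactly the paper's argument.
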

\begin{proof}
  Consider an $(m-1)$-sphere $S$, concentric with $S^{m-1}$ and with
  radius $\tilde{R}$ with $R - \beta \leq \tilde{R} \leq R +
  \beta$.  The intersection of $\ballEm{p}{\pertconst}$ with $S$
  will be a geodesic ball $\mathcal{B} \subset S$. Since $\pertconst <
  \tilde{R}$, the geodesic radius of $\mathcal{B}$, say
  $r=\tilde{R}\theta$, is subtended by an angle $\theta$ that is less
  than $\pi/2 $, and $\frac{2}{\pi}\theta \leq \sin \theta \leq
  \pertconst/\tilde{R}$. It follows that $r \leq
  \frac{\pi}{2}\pertconst$, independent of $R$ or $\tilde{R}$.

  Since the volume of a geodesic ball in an $(m-1)$-sphere is smaller
  than a Euclidean $(m-1)$-dimensional ball of the same
  radius~\cite[Theorem III.4.2]{chavel2006}, we have
  \begin{equation*}
    \vol(\mathcal{B}) \leq \ballvol{m-1}
    (\frac{\pi}{2}\pertconst)^{m-1}, 
  \end{equation*}
  and the stated bound follows.
\end{proof}
\begin{remk}
  If $\splxs \in \mathcal{S}_p(\ppts)$ is a $j$-simplex, with $j \leq
  m$, then it is also the face of many $m$-simplices in
  $\mathcal{S}_p(\ppts)$. Thus if $\dist{x}{\diasphere{\splxs}} \leq
  \hoopbnd 2 \samconst$, then we will also have
  $\dist{x}{\circsphere{\splxt}} \leq \hoopbnd 2 \samconst$ for any
  $m$-simplex $\splxt$ such that $\splxs \leq \splxt$. Thus the
  \texttt{good\_perturbation} \Algref{alg:good.perturbation} only
  really needs to consider the $m$-simplices in
  $\mathcal{S}_p(\ppts)$. This would save a factor of two in the
  estimate of $\card{\mathcal{S}_p}$, but if we wish to exploit
  Property~\ref{hyp:good.facets} of \Thmref{thm:prop.forbid.cfg}, as
  must be done in the context of finite precision, then all the lower
  dimensional simplices must also be taken into consideration. Indeed,
  if $\splxs$ is $\flakebnd$-good and has a small circumradius, we
  cannot assume that it is the face of an $m$-simplex with these
  properties.
\end{remk}

We now prove that at the $i$-th iteration of the algorithm there
exists a $p_{i}' \in B(p_{i}, \pertbnd\epsilon)$ that is a good
perturbation of $p_{i}$. We also establish an upper bound on the
expected number of times we have to pick random points from $B(p_{i},
\pertbnd\epsilon)$ in order to get a good perturbation. In the
description of the algorithm we let $\pertbnd$ determine $\delta_0$
and $\flakebnd$, but here we keep all three as separate parameters,
subject to constraint inequalities.
\begin{lem}[Existence of good perturbations]
  \label{lem:good.perturbation.existence}
  If
  \begin{equation}
    \label{eq:dno.and.gammaz.bnd}
   \delta_0 \leq \flakebnd^{m+1}, \quad \text{ and } \quad
    \flakebnd < \frac{\pertbnd}{K},
  \end{equation}
  where $K = \frac{\ballvol{m-1}}{\ballvolm} \left(
    \frac{8}{\sparseconst} \right)^{m^2} \left(
    \frac{16}{\sparseconst} \right)^{m+4}$, then at the
  $i^{\text{th}}$ iteration of the algorithm there exists a good
  perturbation $p_{i}'$ of $p_{i}$ such that no \dgconfig\ is incident
  to $p_{i}'$ in $\pts_i$, and the expected number of times we have to
  pick random points from $B(p_{i}, \pertbnd\epsilon)$ to get a good
  perturbation of $p_{i}$ is less than
  \begin{equation*}
    T = \frac{1}{1-\gamma},    
  \end{equation*}
  where
  \begin{equation*}
    \gamma = \frac{K \flakebnd}{\pertbnd}.
  \end{equation*}
\end{lem}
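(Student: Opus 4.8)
The plan is to bound, for a fixed point $p_i$ at the $i$th iteration, the total ``forbidden volume'' inside the perturbation ball $\ballEm{p_i}{\pertbnd\samconst}$, and to show it is strictly less than the whole volume $\ballvolm (\pertbnd\samconst)^m$. The key observation is \Lemref{lem:no.forbidden.output} together with \Thmref{thm:prop.forbid.cfg}: any \dgconfig\ incident to a candidate point $x \in \ballEm{p_i}{\pertbnd\samconst}$ must be of the form $\splxjoin{x}{\splxs}$ with $\splxs \in \mathcal{S}_{p_i}(\pts_{i-1})$ having the \hoop\ property and a facet circumradius bound; and if $\splxjoin{x}{\splxs}$ has the \hoop\ property with $\hoopbnd = 2(16/\sparseconst)^3\flakebnd$ and all facets of circumradius $< 2\samconst$, then $x$ must lie within distance $\hoopbnd \cdot 2\samconst$ of the diametric sphere $\diasphere{\splxs}$ (using that $\circsphere{\splxs}\subseteq\diasphere{\splxs}$ and that $\diasphere{\splxs}$ is a sphere of radius $< 2\samconst$). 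So the set of \emph{bad} candidate points is contained in $\bigcup_{\splxs \in \mathcal{S}_{p_i}} F_{p_i}(\splxs)$, where $F_{p_i}(\splxs)$ is the forbidden volume of \Figref{fig:forbidden.volume}.

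Next I would invoke \Lemref{lem:shell.vol.bnd} with $\beta = \hoopbnd\,2\samconst$ and $R < 2\samconst$ the facet circumradius (checking the hypothesis $\pertconst < R - \beta$, which follows from the constraint $\flakebnd < \pertbnd/K$ making $\beta$ tiny relative to $\samconst$, while $\pertconst = \pertbnd\samconst \le \frac{\sparseconst}{4}\samconst$) to get $F_{p_i}(\splxs) \le \ballvol{m-1}(\frac{\pi}{2}\pertbnd\samconst)^{m-1}\,2\hoopbnd\,2\samconst$. Then I would sum over all $\splxs \in \mathcal{S}_{p_i}$, using the count $\card{\mathcal{S}_{p_i}} < E = 2(8/\sparseconst)^{m^2+m}$ from \Lemref{lem-analysis-size-of-the-np-complex}. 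The total bad volume is thus at most
\begin{equation*}
  E \cdot \ballvol{m-1}\left(\tfrac{\pi}{2}\pertbnd\samconst\right)^{m-1} 4\hoopbnd\samconst.
\end{equation*}
Dividing by the perturbation ball volume $\ballvolm(\pertbnd\samconst)^m$ and plugging in $\hoopbnd = 2(16/\sparseconst)^3\flakebnd$, a routine estimate (bounding $(\pi/2)^{m-1} \le (16/\sparseconst)^{\text{something}}$ and collecting constants, since $\sparseconst \le 1$) collapses everything into the single constant $K = \frac{\ballvol{m-1}}{\ballvolm}(8/\sparseconst)^{m^2}(16/\sparseconst)^{m+4}$, giving a bad-volume fraction bounded by $\gamma = K\flakebnd/\pertbnd$. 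By the hypothesis $\flakebnd < \pertbnd/K$ we have $\gamma < 1$, so the good region has positive volume; in particular a good perturbation $p_i'$ exists, and by \Lemref{lem:no.forbidden.output} choosing it ensures no \dgconfig\ incident to $p_i'$ in $\pts_i$.

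Finally, for the expected number of random draws: each uniformly random $x \in \ballEm{p_i}{\pertbnd\samconst}$ is a good perturbation with probability at least $1 - \gamma$, independently across draws, so the number of draws needed is stochastically dominated by a geometric random variable with success probability $1-\gamma$, whose expectation is $T = \frac{1}{1-\gamma}$. I expect the main obstacle to be purely bookkeeping: verifying that the constants $E$, $\ballvol{m-1}/\ballvolm$, the factor $(\pi/2)^{m-1}$, and the $\hoopbnd$ factor of $2(16/\sparseconst)^3\flakebnd$ really do fit inside the stated $K$ (and checking $\sparseconst\le 1$ is used in the right places), rather than anything conceptually hard; a minor secondary point is confirming the hypothesis $\pertconst < R - \beta$ of \Lemref{lem:shell.vol.bnd} genuinely holds for the relevant parameter ranges so that the forbidden-volume bound applies.
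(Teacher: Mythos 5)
Your proposal is correct and follows essentially the same route as the paper's proof: bound the per-simplex forbidden volume via \Lemref{lem:shell.vol.bnd} with $\beta = 2\hoopbnd\samconst$, multiply by the simplex count $E$ from \Lemref{lem-analysis-size-of-the-np-complex}, absorb the constants into $K$, and conclude with the geometric-distribution expectation $1/(1-\gamma)$. The one point you flag --- verifying the hypothesis $\pertconst < R - \beta$ of \Lemref{lem:shell.vol.bnd} --- is indeed left implicit in the paper as well, so raising it is fair, but it does not change the argument.
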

\begin{proof}
  We exploit \Thmref{thm:prop.forbid.cfg}.  Say that $x$ is a bad
  perturbation of $p \in \ppts$ if there is a $\splxs \in
  \mathcal{S}_p(\ppts)$ such that $\distEm{x}{\diasphere{\splxs}} \leq
  \hoopbnd 2\samconst$, with $\hoopbnd$ defined by
  Property~\ref{hyp:clean.hoop.bnd}.  Let $F_p(\splxs) :=
  F_p(\pertbnd\samconst, \hoopbnd 2\samconst, \diasphere{\splxs})$
  denote the volume in $\ballEm{p}{\pertbnd\samconst}$ that represents
  bad perturbations with respect to $\splxs$. Then
  \Lemref{lem:shell.vol.bnd} implies
    \begin{equation*}
      F_p(\splxs) \leq
      \ballvol{m-1}(\frac{\pi}{2})^{m-1} \pertbnd^{m-1}\samconst^{m-1}
      \hoopbnd 4\samconst. 
    \end{equation*}

    Using $E$ defined in \Lemref{lem-analysis-size-of-the-np-complex},
    we obtain a bound on $F_p$, the total volume of the bad
    perturbations in $\ballEm{p}{\pertbnd\samconst}$:
    \begin{align*}
      F_p &\leq E F_p(\splxs) &\\
      &\leq 8\left( \frac{8}{\sparseconst} \right)^{m^2 +m}
      \left(\frac{\pi}{2} \right)^{m-1}
      \hoopbnd   \pertbnd^{m-1}\ballvol{m-1}\samconst^m &\\
      &\leq 16\left( \frac{8}{\sparseconst} \right)^{m^2 +m}
      \left(\frac{\pi}{2} \right)^{m-1}
      \left(\frac{16}{\sparseconst} \right)^3 \flakebnd
      \pertbnd^{m-1}\ballvol{m-1}\samconst^m &
      \text{by Property~\ref{hyp:clean.hoop.bnd}} \\
      &\leq \left( \frac{8}{\sparseconst} \right)^{m^2}
      \left( \frac{16}{\sparseconst} \right)^{m + 4}
      \flakebnd \pertbnd^{m-1}\ballvol{m-1}\samconst^m &\\
    \end{align*}

  Therefore, the volume of the set of good perturbations of $p$ in
  $\ballEm{p}{\pertbnd \samconst}$ is greater than
  \begin{equation*}
    \ballvolm \pertbnd^{m}\epsilon^{m} - K\ballvolm \pertbnd^{m-1} \flakebnd
    \samconst^{m}, 
  \end{equation*}
  and it follows that the probability of getting a good perturbation
  of $p$ by a picking random point from $B(p, \pertbnd\epsilon)$ is
  greater than $1-\gamma$, where $\gamma = \frac{K
    \flakebnd}{\pertbnd}$.  Therefore the expected number of trials
  required to get a good perturbation is not greater than
  \begin{equation*}
    \sum_{i=0}^{\infty} (i+1)\gamma^{i}(1-\gamma) = \frac{1}{1-\gamma}.
  \end{equation*}
\end{proof}

\subsection{Complexity of the algorithm}
\label{ssec-algorithm-complexity}

Lemmas \ref{lem-analysis-size-of-the-np-complex} and
\ref{lem:good.perturbation.existence} lead directly to bounds on the
asymptotic properties of the algorithm:
\begin{lem}
  \label{lem:time.space.complexity}
  The expected time complexity of \Algref{alg1} is
  \begin{equation*}
    \bigo{m}(\card{\pts})^2 + 
    (1-\gamma)^{-1} \left( \frac{2}{\sparseconst} \right)^{\bigo{m^2}}
    \card{\pts}.
  \end{equation*}
  The space complexity required to run the algorithm is
  \begin{equation*}
    \left(\frac{2}{\sparseconst} \right)^{\bigo{m}} \card{\pts}
    + \left( \frac{2}{\sparseconst} \right)^{\bigo{m^2}}.
  \end{equation*}
\end{lem}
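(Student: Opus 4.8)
The plan is to establish each term in the two complexity bounds by tracing through the structure of Algorithm~\ref{alg1} and invoking the size estimates of \Lemref{lem-analysis-size-of-the-np-complex} together with the success-probability estimate of \Lemref{lem:good.perturbation.existence}. First I would account for the one-time cost of building the neighbourhood data: computing all the sets $\mathcal{N}_p$ requires, for each point $p \in \pts$, a comparison of $\distEm{p}{q}$ against $(3+\frac{\sparseconst}{2})\samconst$ for every other $q$. Done naively this is $\bigo{m}(\card{\pts})^2$ time (the $\bigo{m}$ accounts for computing a distance in $\rem$), which yields the first summand. One could mention that with spatial data structures this could be improved, but the quadratic bound suffices for the stated theorem.

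Next I would bound the per-iteration cost. At iteration $i$, each call to \texttt{good\_perturbation} (\Algref{alg:good.perturbation}) iterates over $\mathcal{S}_{p_i}(\pts_{i-1})$, whose size is at most $E = 2(8/\sparseconst)^{m^2+m}$ by \Lemref{lem-analysis-size-of-the-np-complex}; for each simplex $\splxs$ it performs an in-sphere test, which costs $(1/\sparseconst)^{\bigo{m}}$ arithmetic operations (a determinant or circumcentre computation of size $O(m)$) — I would fold this into the $(2/\sparseconst)^{\bigo{m^2}}$ factor since $E$ already dominates. Thus a single call to \texttt{good\_perturbation} costs $(2/\sparseconst)^{\bigo{m^2}}$. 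By \Lemref{lem:good.perturbation.existence}, the expected number of such calls per iteration is at most $T = (1-\gamma)^{-1}$, so the expected cost of iteration $i$ is $(1-\gamma)^{-1}(2/\sparseconst)^{\bigo{m^2}}$; summing over the $n = \card{\pts}$ iterations, and using linearity of expectation, gives the second summand $(1-\gamma)^{-1}(2/\sparseconst)^{\bigo{m^2}}\card{\pts}$. There is a subtlety: at iteration $i$ we must recompute $\mathcal{S}_{p_i}(\pts_{i-1})$ from the current (perturbed) point positions, but since $\mathcal{N}_{p_i}$ depends only on $\pts$ (not the perturbed points, as the perturbation bound is small enough that neighbourhoods are stable — or one simply uses a slightly enlarged radius), the combinatorial structure can be read off the precomputed $\mathcal{N}_{p_i}$, and only the circumcentres need recomputation, which is already counted.

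For the space bound, I would observe that we must store the current point set $\pts_{i-1}$, costing $\bigo{m}\card{\pts}$, plus for each point the list $\mathcal{N}_p$, of size at most $E_1 = (8/\sparseconst)^m$ per point, giving $(2/\sparseconst)^{\bigo{m}}\card{\pts}$ in total; and during a single call to \texttt{good\_perturbation} we materialise $\mathcal{S}_{p}(\ppts)$, of size at most $E = 2(8/\sparseconst)^{m^2+m} = (2/\sparseconst)^{\bigo{m^2}}$, which need not be stored across iterations. Adding these gives the claimed $(2/\sparseconst)^{\bigo{m}}\card{\pts} + (2/\sparseconst)^{\bigo{m^2}}$.

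The main obstacle is not any single calculation — each is routine — but rather making precise the bookkeeping of \emph{which} quantities must be recomputed at each iteration versus precomputed once, and verifying that recomputation of the local complex $\mathcal{S}_{p_i}(\pts_{i-1})$ can be charged to the precomputed $\mathcal{N}_{p_i}$ rather than incurring a fresh $\card{\pts}$-dependent cost per iteration. One must argue that the neighbourhood membership is insensitive to the perturbations performed so far: since every $\pts_i$ is a $\pertbnd\samconst$-perturbation of $\pts$ with $\pertbnd \leq \sparseconst/4$, the point that was at $p$ has moved by less than $\frac{\sparseconst}{4}\samconst$, so using $\mathcal{N}_{p_i}$ computed from $\pts$ (possibly with a marginally enlarged radius) captures all relevant simplices — this is exactly the content implicit in \Lemref{lem:bound.forbid.rad} and the definition of $\mathcal{S}_{p_i}(\pts_{i-1})$. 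Once that point is nailed down, the expectation is taken iteration-by-iteration using \Lemref{lem:good.perturbation.existence} and the bounds assemble immediately.
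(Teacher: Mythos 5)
Your proposal is correct and follows essentially the same route as the paper: the naive $\bigo{m}(\card{\pts})^2$ precomputation of the $\mathcal{N}_p$, the $(2/\sparseconst)^{\bigo{m^2}}$ per-call cost of \texttt{good\_perturbation} from \Lemref{lem-analysis-size-of-the-np-complex}, the expected $(1-\gamma)^{-1}$ calls per point from \Lemref{lem:good.perturbation.existence}, and the same split of the space bound. Your extra remark that $\mathcal{S}_{p_i}(\pts_{i-1})$ is read off the precomputed combinatorial neighbourhood $\mathcal{N}_{p_i}$ (defined on the unperturbed $\pts$) with only the geometric data recomputed is exactly how the paper's definitions arrange matters, so no gap remains.
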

\begin{proof}
  The sets $\mathcal{N}_p$ can be computed by a na\"ive algorithm in
  $\bigo{m}(\card{\pts})^2$ time, while being stored in $\left(
    \frac{2}{\sparseconst} \right)^{\bigo{m}} \card{\pts}$ space,
  which is also sufficient to store the input and output point sets.

  The algorithm visits each point once, and it computes and stores the
  set $\mathcal{S}_p(\ppts)$ which has size $\left(
    \frac{2}{\sparseconst} \right)^{\bigo{m^2}}$. The
  \texttt{good\_perturbation} procedure
  (\Algref{alg:good.perturbation}) evaluates
  $\abs{\distEm{x}{\circcentre{\splxs}} - \circrad{\splxs}} \leq
  2\hoopbnd\samconst$ for every simplex $\splxs \in
  \mathcal{S}_p(\ppts)$.  This computation can be performed via
  determinant evaluations in $\bigo{m^3}$ time, so the time required
  to run the \texttt{good\_perturbation} algorithm is $\left(
    \frac{2}{\sparseconst} \right)^{\bigo{m^2}}$. The expected number
  of times it must be run on each point is $(1-\gamma)^{-1}$, and this
  yields the stated bound.
\end{proof}

\subsection{Summary of guarantees}

\Lemref{lem:no.forbidden.output} and
\Lemref{lem:good.perturbation.existence} guarantee that \Algref{alg1}
terminates with $\pts_n$ which contains no \dgconfig s and is a
perturbation of $\pts$. \Lemref{lem:time.space.complexity}
establishes the complexity bound.  Since 
Condition~\eqref{eq:dno.and.gammaz.bnd} demanded by
\Lemref{lem:good.perturbation.existence} implies 
Condition~\eqref{eq:dnobnd} required for
\Thmref{thm:no.hoops.implies.protection}, the main result
is established:
\begin{thm}[Main result]
  \label{thm:raw.main}
  \Algref{alg1} takes as input a \ueset\ $\pts \subset \rem$ and
  positive parameters $\pertbnd \leq \frac{\sparseconst}{4}$ and
  $\flakebnd$, with
  \begin{equation}
    \label{eq:raw.gamma.z.bnd}
    \flakebnd < \frac{\pertbnd}{K},
  \end{equation}
  where
  \begin{equation}
    \label{eq:raw.K}
    K = \frac{\ballvol{m-1}}{\ballvolm} \left(\frac{8}{\sparseconst}
    \right)^{m^2}\left(\frac{16}{\sparseconst} \right)^{m+4},     
  \end{equation}
  and $\ballvol{j}$ is the volume of the $j$-dimensional unit ball.

  By sequentially perturbing the points, it produces a \pueset\
  $\ppts$ that is a $\delta$-generic, $\pertbnd
  \samconst$-perturbation of $\pts$ and such that all the Delaunay
  $m$-simplices in $\rdelppts$ are $\flakebnd$-good and
 \begin{equation*}
    \delta = \flakebnd^{m+1}\psparseconst \psamconst,
  \end{equation*}
  where $\psparseconst$ and $\psamconst$ are defined in
  \Lemref{lem:perturb.Delone}. 

  The expected time complexity is less than
  \begin{equation*}
    \bigo{m}(\card{\pts})^2 + (1-\gamma)^{-1} \left(
      \frac{2}{\sparseconst} \right)^{O(m^{2})} \card{\pts}, 
  \end{equation*}
  where the constant in the big-$O$ notation is an absolute constant
  and
  \begin{equation*}
    \gamma = \frac{K \flakebnd}{\pertbnd}.
  \end{equation*}
\end{thm}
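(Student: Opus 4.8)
The plan is to assemble the statement from the ingredients built up in Sections~\ref{sec:forbidden}--\ref{sec:analysis}, checking that each hypothesis of the earlier results is implied by the two standing assumptions here, namely $\pertbnd \le \sparseconst/4$ and $\flakebnd < \pertbnd/K$ with $K$ as in \eqref{eq:raw.K}. The four pieces I would invoke are: \Lemref{lem:good.perturbation.existence} (at every iteration a good perturbation exists, and is found after at most $1/(1-\gamma)$ random draws in expectation), \Lemref{lem:no.forbidden.output} (the output $\ppts = \pts_n$ contains no \dgconfigs), \Lemref{lem:perturb.Delone} (the perturbed set is a \pueset\ with the stated $\psparseconst,\psamconst$), and \Thmref{thm:prop.forbid.cfg} together with \Lemref{lem:non.protection.implies.forbidden} (absence of \dgconfigs forces the $m$-simplices of $\rdelppts$ to be $\flakebnd$-good and $\delta$-protected). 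The complexity claim is exactly \Lemref{lem:time.space.complexity}.

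First I would discharge the side conditions. The algorithm fixes $\delta_0 = \flakebnd^{m+1}$, so the first half of \eqref{eq:dno.and.gammaz.bnd} holds with equality and \eqref{eq:raw.gamma.z.bnd} is its second half; hence \Lemref{lem:good.perturbation.existence} applies, with $\gamma = K\flakebnd/\pertbnd < 1$. Because $K$ is enormous while $\pertbnd \le \sparseconst/4$ and $\sparseconst \le 1$, the bound $\flakebnd < \pertbnd/K$ forces $\flakebnd \le \tfrac{2}{75}\sparseconst^2$ and $\delta_0 = \flakebnd^{m+1} \le \tfrac{2}{5}$, so \eqref{eq:dno.and.gammaz.bnd} implies \eqref{eq:dnobnd} and the hypotheses of \Lemref{lem:bound.forbid.rad}, \Lemref{lem:clean.bad.hoop} and \Thmref{thm:prop.forbid.cfg} are all met; the inequality $\pertbnd \le \sparseconst/4$ is precisely \eqref{eq:pertbnd}, which is what makes each $\pts_i$ a legitimate perturbation in the sense of \Defref{def:perturbation}.

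Next I would run the loop. By \Lemref{lem:good.perturbation.existence}, at iteration $i$ the set of good perturbations of $p_i$ inside $\ballEm{p_i}{\pertbnd\samconst}$ has positive volume, so the inner loop terminates almost surely, after at most $1/(1-\gamma)$ draws in expectation; summing over the $n$ iterations gives a finite expected running time, with the exact bound from \Lemref{lem:time.space.complexity}. The output $\ppts = \pts_n$ is a $\pertbnd\samconst$-perturbation of $\pts$ by construction (each point is moved at most once, by at most $\pertbnd\samconst$), so \Lemref{lem:perturb.Delone} certifies it as a \pueset\ with $\psparseconst = \frac{\sparseconst-2\pertbnd}{1+\pertbnd}$ and $\psamconst = (1+\pertbnd)\samconst$. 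By \Lemref{lem:no.forbidden.output}, $\ppts$ contains no \dgconfigs. Applying \Thmref{thm:prop.forbid.cfg} --- whose hypothesis is satisfied since, by \Lemref{lem:clean.bad.hoop} and \Lemref{lem:bound.forbid.rad}, any \dgconfig\ in $\ppts$ would satisfy Properties~\ref{hyp:clean.hoop.bnd} through \ref{hyp:good.facets}, and there are none --- or equivalently invoking the contrapositive of \Lemref{lem:non.protection.implies.forbidden}, every $m$-simplex of $\rdelppts$ is $\flakebnd$-good and $\delta$-protected with $\delta = \delta_0\psparseconst\psamconst = \flakebnd^{m+1}\psparseconst\psamconst$; in particular $\ppts$ is $\delta$-generic, as claimed.

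I do not anticipate a genuine obstacle: the work is bookkeeping, verifying that every numeric inequality demanded by an invoked lemma descends from $\pertbnd \le \sparseconst/4$ and $\flakebnd < \pertbnd/K$. The one point that deserves care is the gap between the \emph{procedure} \texttt{good\_perturbation}, which only applies the sufficient filter ``$x$ is farther than $\hoopbnd\,2\samconst$ from $\diasphere{\splxs}$ for every $\splxs \in \mathcal{S}_p$'', and the \emph{definition} of a good perturbation in \Defref{def:good.perturbation}: one must observe that if $\splxjoin{x}{\splxs}$ were a \dgconfig\ then $\splxs$ would lie in $\mathcal{S}_p(\ppts)$ by \Lemref{lem:bound.forbid.rad}, would have $\circrad{\splxs} < 2\samconst$, and by \Lemref{lem:clean.bad.hoop} would place $x$ within $\hoopbnd\,2\samconst$ of $\circsphere{\splxs} \subseteq \diasphere{\splxs}$, contradicting the filter. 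This is exactly what makes the volume estimate in \Lemref{lem:good.perturbation.existence} a legitimate lower bound on the measure of genuinely good perturbations, and hence what closes the argument.
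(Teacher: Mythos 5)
Your proposal is correct and follows essentially the same route as the paper's own ``Summary of guarantees'': it assembles \Lemref{lem:no.forbidden.output}, \Lemref{lem:good.perturbation.existence}, \Lemref{lem:perturb.Delone}, \Lemref{lem:time.space.complexity} and \Thmref{thm:no.hoops.implies.protection}, and checks that Condition~\eqref{eq:dno.and.gammaz.bnd} implies Condition~\eqref{eq:dnobnd}. Your extra remarks --- the explicit verification that $\pertbnd/K \leq 2\sparseconst^2/75$ and the observation that the filter in \texttt{good\_perturbation} (via $\circsphere{\splxs} \subseteq \diasphere{\splxs}$ and \Lemref{lem:bound.forbid.rad}) certifies good perturbations in the sense of \Defref{def:good.perturbation} --- are details the paper handles in \Secref{sec:good.perts.exist}, not a different argument.
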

\Thmref{thm-main-theorem-of-the-paper} is a restatement of this
result, simplified by setting $\flakebnd = \frac{\pertbnd}{2K}$, and
by also observing that
\begin{equation}
  \label{eq:bound.sphere.vol.ratio}
  \frac{\ballvol{m-1}}{\ballvolm} \leq 2^m.
\end{equation}
Indeed, $\frac{\ballvol{m-1}}{\ballvolm}$ is a slowly growing function
of $m$, and the crude bound~\eqref{eq:bound.sphere.vol.ratio} can be
obtained from an elementary calculation using the
expression~\cite[Eq. (18), p. 9]{conway1988sphere} for $\log_2 V_m$.

The constant $K$ involved in the bound on $\flakebnd$ has been
computed explicitly, and cannot easily be reduced significantly. This
means that \Eqnref{eq:dno.and.gammaz.bnd}
yields a $2^{-\bigo{m^3}}$ bound on $\delta_0$, which results in very
small numbers, even in low dimensions.
Two of the powers of $m$ in the exponent come
from the consideration of all $m$-simplices in the neighbourhood of a
point (\Lemref{lem-analysis-size-of-the-np-complex}), and the other
comes from the dimension-gradated thickness bound introduced in the
\Defref{def:flake} of a flake. Analyses of traditional sliver
exudation algorithms suffer from similar tiny bounds, but in practice
these bounds appear to be pessimistic.

%

\section{Conclusions}
\label{sec:conclusions}

We have demonstrated an algorithm that will produce a $\delta$-generic
\pueset\ $\ppts$ that is a perturbation of a given \ueset\ 
$\pts$. The Delaunay triangulation of $\ppts$ is then quantifiably
stable with respect to changes in the metric or the points themselves.

Although our exposition assumes a finite set $\pts$, it is worth
observing that the analysis requires only local finiteness (the
intersection of $\pts$ with any compact set is a finite set), and the
algorithm extends trivially to the case of a periodic set $\tpts
\subset \rem$.  For example, we may have $\tpts = \tpts + v$ for any
$v \in \ints^m$, and $\tpts$ is $\samconst$-dense with respect to all
of $\rem$. In this framework we require that $\samconst < 1/2$, and we
may view $\tpts$ as a finite set $\pts$ in the standard flat torus
$\torus = \rem / \ints^m$. This has the advantage of avoiding boundary
considerations. It is also closer in spirit to the primary motivating
application of this work, which is the construction of Delaunay
triangulations of compact manifolds.

Funke et al.~\cite{funke2005} hinted at a much simpler analysis for
arguing that a perturbation of points in $\rem$, for arbitrary $m$,
has a good probability of being $\delta$-generic, with
$\flakebnd$-good simplicies. For a given point $p$, one simply
calculates the volumes of $\delta$-thick shells around the diametric
spheres of the nearby $m$-simplices (i.e., take $\beta=\delta$ in
\Figref{fig:forbidden.volume}), and one also accounts for the volumes
of ``slabs'' (i.e., the affine hull of each nearby $j$-simplex
thickened by an offset proportional to $\flakebnd^j$). The probability
that the perturbed point $p'$ violates the protection of a Delaunay
ball, or becomes the vertex of a $\flakebnd$-bad simplex, can thus be
made as small as required by appropriately reducing the size of
$\delta$ and $\flakebnd$, or by increasing the perturbation parameter
$\pertbnd$.

The problem with this simplified analysis is that although the
probability calculated for a given point depends only on points in a
neighbourhood (assuming a sampling density), these probabilities are
not independent. Conceptually, all the points must be perturbed at
once, and the probability of success is proportional to the total
number of points. Funke et al.~\cite[Section 4.3]{funke2005} mentioned
this limitation of their analysis. 

In this paper we have shown that the hoop property provides a way to
circumvent this difficulty and obtain a $\delta$-generic $\ppts$,
where $\delta/\samconst$ is only ultimately constrained by the separation
parameter $\sparseconst$, via Equations \ref{eq:pertbnd} and
\ref{eq:dno.and.gammaz.bnd}, and not by the sampling density or total
number of sample points. This is essential for our intended
application to meshing non-flat manifolds, which we have developed in other work~\cite{boissonnat2013manmesh.inria}. Building on the algorithm presented here, we give a constructive demonstration of the existence of Delaunay triangulations on compact abstract Riemannian manifolds. 

Thus we are already exploiting the theoretical benefits of the algorithm. The obstruction to a practical implementation is the computation required to verify that a perturbation is good. We are currently exploring an approach that avoids this problem by using only combinatorial tests and a result of Moser and Tardos~\cite{moser2010}.

\phantomsection
\bibliographystyle{alpha}
\addcontentsline{toc}{section}{Bibliography}
\bibliography{delrefs}

\end{document}